\documentclass[11pt, leqno, letterpaper]{amsart}

\usepackage{graphicx}    

\usepackage[margin=1.3in]{geometry}

\usepackage{amsmath,amsthm,amsfonts,amssymb}

\newtheorem{theorem}{Theorem}

\newtheorem{corollary}[theorem]{Corollary}
\newtheorem{definition}[theorem]{Definition}

\newtheorem{lemma}[theorem]{Lemma}

\newtheorem{proposition}[theorem]{Proposition}

\newtheorem{remark}[theorem]{Remark}

\begin{document}

\title[Short Maturity Asian Options]
{Short Maturity Asian Options in Local Volatility Models}

\author{Dan Pirjol}
\email{
dpirjol@gmail.com}

\author{Lingjiong Zhu}
\address
{Department of Mathematics\newline
\indent Florida State University\newline
\indent 1017 Academic Way\newline
\indent Tallahassee, FL-32306\newline
\indent United States of America}
\email{
zhu@math.fsu.edu}

\date{19 September 2016}

\subjclass[2010]{91G20,91G80,60F10}
\keywords{Asian options, short maturity, local volatility, large deviations, variational problem.}

\begin{abstract}
We present a rigorous study of the short maturity asymptotics 
for Asian options with continuous-time averaging, under the assumption  that 
the underlying asset follows a local volatility model. 
The asymptotics for out-of-the-money, in-the-money, and at-the-money cases are 
derived, considering both fixed strike and floating strike Asian options. 
The asymptotics for the out-of-the-money case involves a non-trivial 
variational problem which is solved completely. 
We present an analytical approximation for Asian options prices, and
demonstrate good numerical agreement of the asymptotic results with the
results of Monte Carlo simulations and benchmark test cases
in the Black-Scholes model for option parameters relevant in practical 
applications.
\end{abstract}

\maketitle

\section{Introduction}

Asymptotics of the option prices, and implied volatility for short maturity European options have
been extensively studied in the literature, see e.g. 
\cite{BBF,GHLOW,GW,Durrleman2010,CCLMN} for 
the local volatility models,
\cite{FF2012,Tankov,AL2013,Durrleman2008,RR} for the exponential L\'{e}vy models and
\cite{BBFII,Henry,FJL,Feng,FengII,Forde,Alos} for the stochastic volatility models and 
\cite{Gao,MN2011} for model-free approaches.
To the best of our knowledge, the short maturity Asian options have been much 
less studied. Unlike the European options, the Asian options do not have a 
simple closed-form formula even in the Black-Scholes model. That is why in 
financial industry, the Asian options are quoted by price rather than implied 
volatility. 
In this paper, we study the short maturity asymptotics for the price
of the Asian options under the assumption that the underlying asset price
follows a local volatility model. We obtain analytical results for the 
short maturity asymptotics of Asian options in the local volatility model, and
more explicit results in the case of the Black-Scholes model. 
We define and study the implied volatility of an Asian option in the short
maturity limit. 

Let us assume that the stock price follows a local volatility model:
\begin{equation}\label{dynamics}
dS_{t}=(r-q)S_{t}dt+\sigma(S_{t})S_{t}dW_{t},
\qquad S_{0}>0,
\end{equation}
where $W_{t}$ is a standard Brownian motion, $r\geq 0$ is the risk-free rate, 
$q\geq 0$ is the continuous dividend yield, $\sigma(\cdot)$ is the local 
volatility and the log-stock price process $X_t=\log S_t$ satisfies
\begin{equation*}
dX_t = 
\left( r - q - \frac12 \sigma^2(e^{X_t}) \right) dt +
\sigma(e^{X_t}) dW_t\,.
\end{equation*}
We assume that the local volatility function $\sigma(\cdot)$ satisfies
\begin{align}
&0<\underline{\sigma}\leq\sigma(\cdot)\leq\overline{\sigma}<\infty,\label{assumpI}
\\
&|\sigma(e^x)-\sigma(e^y)|\leq M|x-y|^{\alpha},\label{assumpII}
\end{align}
for some fixed $M,\alpha>0$ for any $x,y$ and $0<\underline{\sigma}<\overline{\sigma}<\infty$ are some fixed constants.

The price of the Asian call and put options with maturity $T$ and strike $K$ 
are given by
\begin{align}
&C(T):=e^{-rT}\mathbb{E}\left[\left(\frac{1}{T}\int_{0}^{T}S_{t}dt-K\right)^{+}\right],
\\
&P(T):=e^{-rT}\mathbb{E}\left[\left(K-\frac{1}{T}\int_{0}^{T}S_{t}dt\right)^{+}\right],
\end{align}
where $C(T)$ and $P(T)$ emphasize the dependence on the maturity $T$.

When $S_{0}<K$, the call option is out-of-the money and $C(T)\rightarrow 0$ as $T\rightarrow 0$
and when $S_{0}>K$, the put option is out-of-the money and $P(T)\rightarrow 0$ as $T\rightarrow 0$.
When $S_{0}=K$, i.e. at-the-money, both $C(T)$ and $P(T)$ tend to zero
as $T\rightarrow 0$. 
We are interested to study the first-order approximations
of the call and put prices as $T\rightarrow 0$. 
It turns out that the asymptotics for out-of-the-money case are governed by the rare events (large deviations)
and the asymptotics for at-the-money case are governed by the fluctuations about the typical events (Gaussian fluctuations).

There are numerous works in the mathematical finance literature studying the 
pricing of Asian options. The pricing under the Black-Scholes model has been 
studied in \cite{GY,CS,DufresneLaguerre,Linetsky}, using a relation between 
the distributional property of the time-integral of the geometric Brownian 
motion and Bessel processes. See \cite{DufresneReview} for an overview, and 
\cite{FMW} for a comparison with alternative simulation methods, including the 
Monte Carlo approach. A popular method, which has the advantage of wider 
applicability to other models, is the PDE approach \cite{Ingersoll,RogersShi,Vecer,VecerXu}.
The resulting PDE can be solved either numerically \cite{Vecer,VecerXu}, 
or can be used to derive analytical approximation formulae
using asymptotic expansion methods. Such results have been obtained in
\cite{FPP2013} in the local volatility model, and in \cite{Dassios} in the
CEV model. The paper \cite{FPP2013} used heat kernel expansion methods and 
developed approximate formulae expressed in terms of elementary functions for 
the density, the price and the Greeks of path dependent options of Asian style.
Asymptotic expansion leading to analytical approximations with error bounds
for Asian options have been obtained also using Malliavin calculus in 
\cite{Shiraya,GobetMiri}.

Most of the literature on Asian options in the Black-Scholes model, 
i.e., $\sigma(\cdot)\equiv\sigma$, \cite{GY,CS,Linetsky} exploits the 
well-known result \cite{DufresneReview,DGY} that the integral of the geometric 
Brownian motion 
$\int_{0}^{T}e^{(r-q-\frac{1}{2}\sigma^{2})t+\sigma W_{t}}dt$ has the same 
distribution as $X_{t}$, where
\begin{equation}\label{formI}
dX_{t}=\left[(r-q)+\frac{1}{X_{t}}\right]X_{t}dt+\sigma X_{t}dB_{t},\qquad X_{0}=0,
\end{equation}
where $B_{t}$ is a standard Brownian motion. 
The price of the Asian call and put options can thus be computed as
\begin{equation}\label{formII}
C(T)=e^{-rT}\frac{1}{T}\mathbb{E}[(S_{0}X_{T}-TK)^{+}],
\qquad
P(T)=e^{-rT}\frac{1}{T}\mathbb{E}[(TK-S_{0}X_{T})^{+}].
\end{equation}
For the readers who are familiar with the large deviations for small time 
diffusion processes, one may na\"{i}vely believe that the small time 
asymptotics of $X_{T}$ as $T\rightarrow 0$ is comparable with the SDE without
the drift term, i.e., $d\tilde{X}_{t}=\sigma\tilde{X}_{t}d\tilde{B}_{t}$ 
and hence the asymptotics for the short maturity Asian options for the 
Black-Scholes model are the same as for their European counterpart. 
We will show in this paper that this is indeed \emph{not} the case. 
Intuitively, when $\tilde{X}_{t}$ starts
at zero at time zero then $\tilde{X}_{t}$ remains zero for any time $t$. 
Even though $X_{t}$ also starts at zero at time zero, it is kicked to a 
positive value immediately after the time zero. In that respect, the two processes
are not absolutely continuous with respect to each other. Therefore, 
one cannot use the Girsanov theorem to kill the drift term and claim that 
$X_{t}$ process has the same small time asymptotics as a geometric Brownian motion.
Also, the formulas \eqref{formI}, \eqref{formII} are valid only for the 
Black-Scholes model, and this approach does not shed much insight into the more 
general local volatility case. 

In this paper, we will use large deviations theory for small time diffusion processes. The key observation
is that one can apply the contraction principle to get the corresponding large 
deviations for the small time
arithmetic average of the diffusion process, and hence obtain rigorously the asymptotic behavior for the out-of-the-money
Asian call and put options. The asymptotic exponent is given as the rate function from the large deviation principle,
which itself is a complicated and not-so-obvious variational problem. We will manage to solve this variational problem
completely and give a semi-analytical solution in the end. The asymptotics for in-the-money case follows
easily by the put-call parity. We will also obtain the asymptotics 
for at-the-money short maturity Asian options. Unlike the out-of-the-money case, the asymptotics
for at-the-money short maturity has Gaussian fluctuations. 

Most of the existing methods for pricing Asian options are numerically less 
efficient in the limit of small maturities and small volatilities.
For the case of the Asian options under the Black-Scholes model this has been 
noted in the Geman-Yor method \cite{GY,CS}, where the inversion of a Laplace 
transform requires special care for small maturities 
\cite{Shaw,FMW,DufresneLaguerre}. A similar issue appears in the spectral
method \cite{Linetsky}. This issue is not present in methods based on 
asymptotic expansions \cite{FPP2013} which perform well under small 
maturities and volatility conditions. 
The small time expansion presented in this paper is of practical interest as 
it complements some of the alternative approaches in a region where their 
numerical performance is less efficient. 

The paper is organized as follows. In Section \ref{MainSection}, we present asymptotics
for out-of-the-money (OTM), in-the-money (ITM) and at-the-money (ATM) Asian options in a local volatility model
for short maturities. The asymptotics for OTM Asian options involve a not-so-trivial variational problem,
whose solution will be given in Section \ref{Sec:3}, which has a more explicit expression
in the case of Black-Scholes models. The implied volatility and numerical tests will
be discussed in Section \ref{sec:4}. The asymptotics for short maturity floating strike Asian options
will be provided in Section \ref{sec:floating}. Finally, the proofs will be 
given in Section \ref{sec:appendix}.

\section{Asymptotics for Short Maturity Asian Options}\label{MainSection}

Let us recall that the stock price follows a local volatility model:
\begin{equation}
dS_{t}=(r-q)S_{t}dt+\sigma(S_{t})S_{t}dW_{t},
\qquad S_{0}>0,
\end{equation}
where $W_{t}$ is a standard Brownian motion, $r\geq 0$ is the risk-free rate, $q\geq 0$ is the continuous
dividend yield, $\sigma(\cdot)$ is the local volatility satisfying \eqref{assumpI} and \eqref{assumpII}.
We are interested in the short maturity limits, i.e., the asymptotics as $T\rightarrow 0$.

\subsection{Out-of-the-Money and In-the-Money Asian Options}
\label{OTMSection}

We denote the expectation of the averaged asset price in the risk-neutral 
measure as
\begin{equation}\label{Adef}
A(T) := \frac{1}{T} \int_0^T \mathbb{E}[S_t] dt = 
S_0 \frac{1}{(r-q)T} (e^{(r-q)T}-1)\,,
\end{equation}
for $r-q\neq 0$ and $A(T):=S_{0}$ for $r-q=0$,
When $K>A(T)$, the call Asian option is out-of-the-money and 
$C(T)\rightarrow 0$ as $T\rightarrow 0$.
When $A(T)>K$, the put Asian option is out-of-the-money and 
$P(T)\rightarrow 0$ as $T\rightarrow 0$.

\begin{remark}\label{PCParity}
The prices of call and put Asian options are related by put-call parity as
\begin{eqnarray}\label{PCparity}
C(K,T) - P(K,T) = e^{-rT} (A(T) - K)\,.
\end{eqnarray}
\end{remark}

Notice that as $T\rightarrow 0$, $A(T)=S_{0}+O(T)$. 
Therefore, for the small maturity regime, 
the call Asian option is out-of-the-money
if and only if $K>S_{0}$ etc.
And for the rest of the paper, 
the call Asian option is said to be out-of-the-money (resp. in-the-money) if $K>S_{0}$ (resp. $K<S_{0}$), 
and the put Asian option is said to be out-of-the-money (resp. in-the-money) if $K<S_{0}$ (resp. $K>S_{0}$),
and finally they are said to be at-the-money if $K=S_{0}$.

\subsection{Short maturity out-of-the-money Asian options}
We will use large deviations theory to compute the leading order approximation 
at $T\to 0$ for the price of the out-of-the-money Asian options.
\begin{theorem}\label{MainThm}
Assume that \eqref{assumpI} and \eqref{assumpII} both hold.

(i) For out-of-the-money call Asian options, i.e., $K > S_{0}$, 
\begin{equation}
C(T)=e^{-\frac{1}{T}\mathcal{I}(K,S_{0})+o(\frac{1}{T})},\qquad\text{as $T\rightarrow 0$}.
\end{equation}

(ii) For out-of-the-money put Asian options, i.e., $K < S_{0}$, 
\begin{equation}
P(T)
=e^{-\frac{1}{T}\mathcal{I}(K,S_{0})+o(\frac{1}{T})},\qquad\text{as $T\rightarrow 0$}.
\end{equation}
where for any $S_{0},K>0$,
\begin{equation}\label{IRate}
\mathcal{I}(K,S_{0}):=\inf_{
\substack{\int_{0}^{1}e^{g(t)}dt=K
\\
g(0)=\log S_{0}, g\in\mathcal{AC}[0,1]}}\frac{1}{2}\int_{0}^{1}\left(\frac{g'(t)}{\sigma(e^{g(t)})}\right)^{2}dt,
\end{equation}
where $\mathcal{AC}[0,1]$ is the space of absolutely continuous functions on $[0,1]$.

The variational problem in \eqref{IRate} will be solved completely in 
Proposition \ref{VarProp}.
\end{theorem}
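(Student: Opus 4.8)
The plan is to derive the result from a sample-path large deviation principle for the time-rescaled log-price, to push it forward to the running average $\bar S_T:=\frac1T\int_0^TS_t\,dt$ via the contraction principle, and then to convert that probabilistic statement into the asymptotics for $C(T)=e^{-rT}\mathbb E[(\bar S_T-K)^+]$ and $P(T)=e^{-rT}\mathbb E[(K-\bar S_T)^+]$ using uniform moment bounds. First I would rescale time: with $X_t=\log S_t$ and $X^T_s:=X_{Ts}$ for $s\in[0,1]$, It\^o's formula gives
\begin{equation*}
dX^T_s=T\Bigl(r-q-\frac12\sigma^2(e^{X^T_s})\Bigr)\,ds+\sqrt T\,\sigma(e^{X^T_s})\,dB_s,\qquad X^T_0=\log S_0,
\end{equation*}
for a standard Brownian motion $B$. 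This is a small-noise diffusion with parameter $\varepsilon=\sqrt T$ carrying a drift of order $O(\varepsilon^2)$, which is negligible at the large-deviation scale. Using \eqref{assumpI} (so that $\sigma(e^\cdot)$ is bounded and uniformly elliptic) and \eqref{assumpII} (so that it is H\"older continuous), I would invoke the Freidlin--Wentzell--type sample-path large deviation principle (equivalently, the small-time LDP for diffusions) to conclude that $(X^T_\cdot)_{T>0}$ satisfies an LDP on $C([0,1],\mathbb R)$ with the uniform topology, at speed $1/T$, with good rate function $\mathcal J(g)=\frac12\int_0^1\bigl(g'(s)/\sigma(e^{g(s)})\bigr)^2\,ds$ for $g\in\mathcal{AC}[0,1]$ with $g(0)=\log S_0$, and $\mathcal J(g)=+\infty$ otherwise; along the way one records exponential tightness of $(X^T_\cdot)$, which follows from the boundedness of $\sigma$.

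Next, the substitution $t=Ts$ gives $\bar S_T=\int_0^1e^{X^T_s}\,ds=F(X^T_\cdot)$, where $F(g):=\int_0^1e^{g(s)}\,ds$ is continuous on $C([0,1],\mathbb R)$. Hence, by the contraction principle, $(\bar S_T)_{T>0}$ satisfies an LDP at speed $1/T$ with good rate function $\tilde{\mathcal I}(a)=\inf\{\mathcal J(g):F(g)=a\}$, which is by construction exactly $\mathcal I(a,S_0)$ from \eqref{IRate}. The constant path $g\equiv\log S_0$ shows $\mathcal I(S_0,S_0)=0$; the only further properties of $\mathcal I(\cdot,S_0)$ I need are that it is continuous and that it is monotone on each side of $a=S_0$ (non-decreasing on $[S_0,\infty)$, non-increasing on $(0,S_0]$), and these follow from the explicit description of the optimizer established in Proposition \ref{VarProp}.

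Finally I would pass from probabilities to prices. Consider the out-of-the-money call, $K>S_0$. For the lower bound on $-T\log C(T)$, H\"older's inequality with an exponent $p>1$ and conjugate $p'$, together with $\mathbb E[(\bar S_T)^{p'}]\le\frac1T\int_0^T\mathbb E[S_t^{p'}]\,dt\le S_0^{p'}e^{C_{p'}T}$ (a Gronwall estimate on $\mathbb E[S_t^{p'}]$ using \eqref{assumpI}), gives $\mathbb E[(\bar S_T-K)^+]\le\mathbb E[\bar S_T\mathbf 1_{\{\bar S_T\ge K\}}]\le C_{p'}\,\mathbb P(\bar S_T\ge K)^{1/p}$; the LDP upper bound for the closed set $[K,\infty)$ and the monotonicity of $\mathcal I(\cdot,S_0)$ then give $\liminf_{T\to0}(-T\log C(T))\ge\frac1p\mathcal I(K,S_0)$, and letting $p\downarrow1$ yields $\ge\mathcal I(K,S_0)$. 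For the matching upper bound, $\mathbb E[(\bar S_T-K)^+]\ge\varepsilon\,\mathbb P(\bar S_T>K+\varepsilon)$ for each $\varepsilon>0$, so the LDP lower bound for the open set $(K+\varepsilon,\infty)$ gives $\limsup_{T\to0}(-T\log C(T))\le\mathcal I(K+\varepsilon,S_0)$, and letting $\varepsilon\downarrow0$ and using continuity of $\mathcal I(\cdot,S_0)$ closes the gap; since $e^{-rT}\to1$ does not affect the exponent, $C(T)=e^{-\mathcal I(K,S_0)/T+o(1/T)}$. The out-of-the-money put, $K<S_0$, follows from the same two-sided estimate with $\{\bar S_T\le K\}$ and $(0,K-\varepsilon)$ replacing $\{\bar S_T\ge K\}$ and $(K+\varepsilon,\infty)$, using the monotonicity of $\mathcal I(\cdot,S_0)$ on $(0,S_0]$ and the trivial bound $\mathbb E[(K-\bar S_T)^+]\le K$ for the upper estimate.

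I expect the main obstacle to be the first step: establishing the sample-path LDP with the stated rate function when $\sigma$ is only H\"older rather than Lipschitz continuous, so that the textbook Freidlin--Wentzell argument --- which proceeds via Lipschitz contraction estimates to a driftless, frozen-coefficient comparison problem --- does not apply verbatim; one must instead invoke a version of the theorem valid under mere continuity and uniform ellipticity, and separately verify that the $O(\varepsilon^2)$ drift does not alter the rate. A second, genuine point is that the exact identification of the exponent in the last step --- rather than merely some value squeezed between $\mathcal I(K,S_0)$ and a nearby one --- rests on the continuity of $\mathcal I(\cdot,S_0)$, which is precisely why one must solve the variational problem \eqref{IRate} explicitly (Proposition \ref{VarProp}) before the asymptotics can be pinned down.
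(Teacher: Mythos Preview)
Your proposal is correct and follows essentially the same route as the paper: reduce the price asymptotics to tail probabilities via H\"older's inequality and a Gronwall-type moment bound on $\mathbb{E}[S_t^{p}]$, then apply the small-time sample-path LDP for the log-price and the contraction principle to the continuous map $g\mapsto\int_0^1 e^{g(s)}\,ds$, and close using the monotonicity and continuity of $\mathcal{I}(\cdot,S_0)$ established in Propositions~\ref{prop:1}--\ref{prop:2}. On the obstacle you flag, the paper resolves it by citing Varadhan~\cite{Varadhan67}, whose small-time diffusion LDP requires only the boundedness/ellipticity and H\"older assumptions \eqref{assumpI}--\eqref{assumpII} (and already accommodates the bounded drift), so no Lipschitz hypothesis on $\sigma$ is needed.
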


\begin{remark}
The small maturity asymptotics given by Theorem~\ref{MainThm} and the rate function
$\mathcal{I}(K,S_0)$ are independent of the interest rate $r$ and 
dividend yield $q$. These quantities contribute only to subleading order in the
$T\to 0$ expansion. This is analogous to the well-known BBF result for the
small maturity European options in the local volatility model \cite{BBF},
which is also independent of $r,q$.
\end{remark}

\begin{remark}
For at-the-money case, i.e. $K = S_{0}$, 
by letting $g(t)\equiv 0$ in \eqref{IRate}, we see that $\mathcal{I}(K,S_{0})=0$.
Indeed, Theorem \ref{ATMThm} will give more precise asymptotics for 
at-the-money short maturity Asian options.
\end{remark}

The asymptotics for short-maturity in-the-money Asian call and put options 
can be obtained as a corollary of the results for out-of-the-money case in 
Theorem \ref{MainThm} by a simple application of the put-call parity.

\begin{corollary}\label{InMoneyCor}
Assume that \eqref{assumpI} and \eqref{assumpII} both hold.

(i) For in-the-money call Asian options, i.e., $K < S_{0}$, 
\begin{equation}
C(T)=S_{0}-K-\frac{1}{2}(r+q)S_{0}T+KrT+O(T^{2}),\qquad\text{as $T\rightarrow 0$}.
\end{equation}

(ii) For in-the-money put Asian options, i.e., $K > S_{0}$, 
\begin{equation}
P(T)=K-S_{0}+\frac{1}{2}(r+q)S_{0}T-KrT+O(T^{2}),\qquad\text{as $T\rightarrow 0$}.
\end{equation}
\end{corollary}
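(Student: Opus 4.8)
The plan is to derive both parts from the put-call parity of Remark~\ref{PCParity},
\[
C(K,T)-P(K,T)=e^{-rT}\bigl(A(T)-K\bigr),
\]
combined with the out-of-the-money asymptotics of Theorem~\ref{MainThm}, which shows that the option on the ``wrong'' side of the parity relation is negligible to the order of interest.

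First I would record that $\mathcal{I}(K,S_0)>0$ whenever $K\neq S_0$. If this infimum were $0$, then along a minimizing sequence $g_n$ one would have $\int_0^1 (g_n')^2\,dt\to 0$ (since $\sigma\le\overline{\sigma}$), hence by Cauchy--Schwarz $\sup_{t\in[0,1]}|g_n(t)-\log S_0|\le\bigl(\int_0^1 (g_n')^2\,dt\bigr)^{1/2}\to 0$, so $\int_0^1 e^{g_n(t)}\,dt\to S_0\neq K$, contradicting the constraint; alternatively, the strict positivity is immediate from the closed-form solution in Proposition~\ref{VarProp}. Consequently, for $K<S_0$ Theorem~\ref{MainThm}(ii) gives $P(T)=e^{-\frac1T\mathcal{I}(K,S_0)+o(1/T)}$, which decays faster than any power of $T$; in particular $P(T)=O(T^2)$. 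Similarly, for $K>S_0$, Theorem~\ref{MainThm}(i) gives $C(T)=O(T^2)$.

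For part (i) ($K<S_0$) I would then write, using parity, $C(K,T)=e^{-rT}(A(T)-K)+P(T)=e^{-rT}(A(T)-K)+O(T^2)$, and Taylor-expand the elementary quantity $A(T)$. For $r\neq q$,
\[
A(T)=S_0\,\frac{e^{(r-q)T}-1}{(r-q)T}=S_0+\tfrac12 S_0(r-q)T+O(T^2),
\]
and the same expansion holds trivially for $r=q$ since then $A(T)\equiv S_0$. Multiplying by $e^{-rT}=1-rT+O(T^2)$ and collecting the linear term gives $C(K,T)=S_0-K-\tfrac12(r+q)S_0T+rKT+O(T^2)$, which is (i). For part (ii) ($K>S_0$), parity gives $P(K,T)=e^{-rT}(K-A(T))+C(T)=e^{-rT}(K-A(T))+O(T^2)$, the negative of the previous expansion, so $P(K,T)=K-S_0+\tfrac12(r+q)S_0T-rKT+O(T^2)$.

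There is no serious obstacle in this argument: the only step requiring more than a one-line Taylor expansion is the strict positivity $\mathcal{I}(K,S_0)>0$ for $K\neq S_0$, which is precisely what lets us absorb the out-of-the-money leg of the parity relation into the $O(T^2)$ remainder. Everything else reduces to expanding the closed-form expression for $A(T)$ and for $e^{-rT}$.
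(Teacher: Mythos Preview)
Your proof is correct and follows essentially the same route as the paper: put--call parity combined with the exponential decay of the out-of-the-money leg from Theorem~\ref{MainThm}, followed by a Taylor expansion of $e^{-rT}(A(T)-K)$. The only cosmetic addition is your explicit justification that $\mathcal{I}(K,S_0)>0$ for $K\neq S_0$, which the paper uses implicitly.
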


\subsection{At-the-Money Asian Options}\label{ATMSection}

When $K = S_{0}$, the Asian call and put options are at-the-money. 
We have the following result:

\begin{theorem}\label{ATMThm}
Assume that the function $\sigma(s)s$ and $\sigma(s)$ are uniformly Lipschitz, 
i.e., there exist $\alpha,\beta>0$, such that for any $x,y\geq 0$, 
\begin{equation}
|\sigma(x)x-\sigma(y)y|\leq\alpha|x-y|,\qquad|\sigma(x)-\sigma(y)|\leq\beta|x-y|.
\end{equation}

(i) When $K=S_{0}$, as $T\rightarrow 0$,
\begin{equation}
C(T)=\sigma(S_{0})S_{0}\frac{\sqrt{T}}{\sqrt{6\pi}}+O(T).
\end{equation}

(ii) When $K=S_{0}$, as $T\rightarrow 0$,
\begin{equation}
P(T)=\sigma(S_{0})S_{0}\frac{\sqrt{T}}{\sqrt{6\pi}}+O(T).
\end{equation}
\end{theorem}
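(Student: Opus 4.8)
The plan is to reduce the at-the-money price to an elementary Gaussian computation via a short-time expansion of the averaged price $A_T := \frac1T\int_0^T S_t\,dt$ around $S_0$. Writing $C(T) = e^{-rT}\mathbb{E}[(A_T - S_0)^+]$ and noting $e^{-rT} = 1 + O(T)$, it suffices to prove $\mathbb{E}[(A_T - S_0)^+] = \sigma(S_0)S_0\sqrt{T}/\sqrt{6\pi} + O(T)$. From the SDE, $S_t - S_0 = (r-q)\int_0^t S_u\,du + \int_0^t \sigma(S_u)S_u\,dW_u$, so by the stochastic Fubini theorem
\[
A_T - S_0 = \frac1T\int_0^T (S_t - S_0)\,dt = \frac{r-q}{T}\int_0^T\!\!\int_0^t S_u\,du\,dt + \int_0^T \frac{T-u}{T}\,\sigma(S_u)S_u\,dW_u .
\]
I would then freeze the diffusion coefficient at its initial value, writing $A_T - S_0 = \sigma(S_0)S_0\int_0^T \frac{T-u}{T}\,dW_u + R_T$, where $R_T$ collects the drift term above and the freezing error $\int_0^T \frac{T-u}{T}\big(\sigma(S_u)S_u - \sigma(S_0)S_0\big)\,dW_u$.

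The leading term is immediate: $\int_0^T\frac{T-u}{T}\,dW_u$ (which equals $\frac1T\int_0^T W_t\,dt$ after integration by parts) is a centered Gaussian with variance $\frac1{T^2}\int_0^T(T-u)^2\,du = T/3$, so $\mathbb{E}\big[\big(\sigma(S_0)S_0\int_0^T\frac{T-u}{T}\,dW_u\big)^+\big] = \sigma(S_0)S_0\sqrt{T/3}\,\mathbb{E}[Z^+] = \sigma(S_0)S_0\sqrt{T}/\sqrt{6\pi}$, using $\mathbb{E}[Z^+] = 1/\sqrt{2\pi}$ for $Z\sim N(0,1)$. Since $x\mapsto x^+$ is $1$-Lipschitz, $\big|\mathbb{E}[(A_T-S_0)^+] - \sigma(S_0)S_0\sqrt{T}/\sqrt{6\pi}\big| \le \mathbb{E}|R_T|$, so the call estimate reduces entirely to showing $\mathbb{E}|R_T| = O(T)$.

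For the remainder bound I would first record the a priori moment estimates $\sup_{0\le t\le 1}\mathbb{E}[S_t^2] < \infty$ and $\mathbb{E}[(S_t - S_0)^2] = O(t)$, both obtained from Itô's formula, Gronwall's inequality, and the bounds $\underline\sigma\le\sigma\le\overline\sigma$ in \eqref{assumpI}. The drift part of $R_T$ has $L^1$-norm at most $\frac{|r-q|}{T}\int_0^T\int_0^t\mathbb{E}[S_u]\,du\,dt = O(T)$, using $\mathbb{E}[S_u] = S_0 e^{(r-q)u}$. For the freezing error, the Itô isometry together with the genuine Lipschitz hypothesis $|\sigma(x)x - \sigma(y)y|\le\alpha|x-y|$ gives variance $\le \alpha^2\int_0^T\mathbb{E}[(S_u-S_0)^2]\,du = O(T^2)$, hence $L^1$-norm $O(T)$; this is exactly where the strengthened Lipschitz assumption (rather than the Hölder bound \eqref{assumpII}) is used to obtain the $O(T)$ rate. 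This establishes (i). Part (ii) follows either by repeating the argument with $x^+$ replaced by $x^- = (-x)^+$ and invoking the symmetry of the Gaussian, or more directly from the put-call parity $C(T) - P(T) = e^{-rT}(A(T) - S_0)$ of Remark~\ref{PCParity} together with $A(T) - S_0 = O(T)$. I expect the only real obstacle to be the bookkeeping in the remainder estimate — checking that the drift contribution, the coefficient-freezing error, and the discount factor are all genuinely $O(T)$ and uniform — while the probabilistic content is just the observation that the time-average of Brownian motion is Gaussian with variance $T/3$.
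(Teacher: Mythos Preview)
Your argument is correct and follows a somewhat different line from the paper's proof. The paper first passes to the discounted martingale $X_t = S_t e^{-(r-q)t}$, then introduces an explicit arithmetic-Brownian comparison process $\hat X_t = S_0 + \sigma(S_0)S_0 W_t$ and proves the pathwise estimate $\mathbb{E}\bigl[\sup_{t\le T}|X_t-\hat X_t|\bigr] = O(T)$ via It\^o isometry, the Burkholder--Davis--Gundy inequality, Gronwall, and Doob's maximal inequality; this sup-norm control is then used to replace $X$ by $\hat X$ inside the averaged payoff. You instead decompose $A_T - S_0$ directly via stochastic Fubini into the Gaussian integral $\sigma(S_0)S_0\int_0^T\frac{T-u}{T}\,dW_u$ plus a remainder, and bound the remainder in $L^1$ using only the It\^o isometry together with $\mathbb{E}[(S_u-S_0)^2]=O(u)$. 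Both routes converge on the same Gaussian computation $\frac{1}{T}\int_0^T W_t\,dt \sim N(0,T/3)$. Your approach is more direct and sidesteps the maximal-inequality machinery; as a byproduct it uses only the Lipschitz hypothesis on $x\mapsto\sigma(x)x$, whereas the paper's martingale change produces the factor $\sigma(X_s e^{(r-q)s}) - \sigma(X_s)$ and invokes the second Lipschitz bound on $\sigma$ to control it. One minor remark: the upper bound on $\sigma$ you need for the moment estimates already follows from $|\sigma(x)x|\le\alpha|x|$, so the appeal to \eqref{assumpI} is not strictly necessary here.
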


\begin{remark}
Comparing Theorem \ref{ATMThm} with Theorem \ref{MainThm}, we see that the asymptotics for out-of-the-money 
short maturity Asian options are governed by the rare events (large deviations), while the asymptotics
for at-the-money short maturity Asian options are governed by the fluctuations about the typical events (Gaussian fluctuations).
\end{remark}

\section{Variational Problem for Short-Maturity Asymptotics for Asian Options}
\label{Sec:3}

We present in this section the solution of the variational problem for
the short-time asymptotics of the out-of-the-money Asian options given by 
Theorem~\ref{MainThm}. The solution is given by the following result.

\begin{proposition}\label{VarProp}
The rate function $\mathcal{I}(K,S_0)$ appearing in Theorem~\ref{MainThm}
is given by
\begin{equation}\label{Iresult}
\mathcal{I}(K, S_0)=
\begin{cases}
\frac12 F^{(+)}(h_1) G^{(+)}(h_1) & K \leq S_{0}\,, \\
\frac12 F^{(-)}(f_1) G^{(-)}(f_1) & K \geq S_{0}\,.
\end{cases}
\end{equation}
The two cases are as follows:

(i) $K \leq S_0$. $h_1 \geq 0$ is the solution of the equation
\begin{equation}\label{eqh1}
\frac{K}{S_0} - e^{-h_1} = \frac{G^{(+)}(h_1)}{F^{(+)}(h_1)},
\end{equation}
with
\begin{align}
G^{(+)}(h_1) &= \int_{0}^{h_1} \frac{1}{\sigma( S_0 e^{-y})} 
     \sqrt{e^{-y} - e^{-h_1}}dy, \\
F^{(+)}(h_1) &= \int_{0}^{h_1} \frac{1}{\sigma( S_0 e^{-y})} 
\frac{1}{\sqrt{e^{-y} - e^{-h_1}}}dy \,.
\end{align}

(ii) $K \geq S_0$. $f_1 \geq  0$ is given by the solution of the equation
\begin{equation}\label{eqf1}
e^{f_1} - \frac{K}{S_0} = \frac{G^{(-)}(f_1)}{F^{(-)}(f_1)},
\end{equation}
with
\begin{align}
G^{(-)}(f_1)
&=\int_{0}^{f_1} \frac{1}{\sigma(S_0 e^{y})} 
\sqrt{e^{f_1} - e^{y}}dy,\label{10}
\\
F^{(-)}(f_1)&=\int_{0}^{f_1} \frac{1}{\sigma(S_0 e^{y})} 
\frac{1}{\sqrt{e^{f_1} -  e^{y}}}dy \,.\label{11}
\end{align}
\end{proposition}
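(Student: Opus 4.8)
The plan is to solve the variational problem in \eqref{IRate} directly by the calculus of variations, treating the constraint $\int_0^1 e^{g(t)}\,dt = K$ with a Lagrange multiplier and the boundary condition $g(0)=\log S_0$ as fixed, while $g(1)$ is free. First I would introduce a change of variables to reduce the problem to a one-dimensional quadrature. Writing the functional as $\tfrac12\int_0^1 (g'(t)/\sigma(e^{g(t)}))^2\,dt$, the Euler-Lagrange equation for the Lagrangian $L = \tfrac12 (g'/\sigma(e^g))^2 - \lambda e^g$ is autonomous in $t$, so the Beltrami identity gives a first integral: $L - g'\,\partial L/\partial g' = \text{const}$, i.e. $-\tfrac12(g'/\sigma(e^g))^2 - \lambda e^g = \text{const}$. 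This yields $(g'/\sigma(e^g))^2 = 2\lambda(e^{g} - e^{g_{\max}})$ or $2\lambda(e^{g_{\min}} - e^{g})$ depending on the sign of the excursion, where $g_{\max}$ (resp. $g_{\min}$) is the turning point at which $g'=0$. Since for the put case $K>S_0$ the optimal path should rise from $\log S_0$ and for the call case $K<S_0$ it should fall, I would set $g(t) = \log S_0 + \phi(t)$ and substitute $e^{-y}$ or $e^{y}$ as the integration variable to recognize $F^{(\pm)}$ and $G^{(\pm)}$.

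The key steps, in order: (1) argue that the infimum is attained and that an optimal $g$ is monotone — this follows because any non-monotone path can be replaced by its monotone rearrangement, which lowers (or does not increase) the Dirichlet-type energy while keeping $\int e^g$ unchanged or moving it in the right direction; here a short exchange/symmetrization argument is needed. (2) On the monotone optimal path, use the first integral from the Beltrami identity to express $g'$ in terms of $g$, introducing the (unknown) extreme value $h_1 := \log S_0 - g(1) \ge 0$ in the call case (or $f_1 := g(1)-\log S_0 \ge 0$ in the put case), together with the multiplier $\lambda$. (3) Parametrize arc length by $y = \log S_0 - g$ (call case), so that $dt = dy\,/\,|g'| = dy\,/\,(\sigma(S_0 e^{-y})\sqrt{2\lambda}\sqrt{e^{-y}-e^{-h_1}})$; the constraint $\int_0^1 dt = 1$ then becomes $\sqrt{2\lambda}\,F^{(+)}(h_1) = 1$, determining $\lambda$, and the constraint $\int_0^1 e^{g}\,dt = K$ becomes $S_0/\sqrt{2\lambda} \int_0^{h_1}\ldots = K$, which after substituting $\sqrt{2\lambda}=1/F^{(+)}(h_1)$ reduces to \eqref{eqh1}. (4) Finally evaluate the energy: on the optimal path $\tfrac12(g'/\sigma(e^g))^2 = \lambda(e^{-y}-e^{-h_1})S_0$ (up to the $S_0$ bookkeeping), so the energy integral becomes $\lambda S_0\int_0^{h_1} \sqrt{e^{-y}-e^{-h_1}}/\sigma(S_0 e^{-y})\,dy\,/\,(\text{the } dt\text{-Jacobian})$, which collapses to $\tfrac12 \sqrt{2\lambda}\,S_0\,G^{(+)}(h_1)\cdot\sqrt{2\lambda} = \tfrac12 F^{(+)}(h_1)G^{(+)}(h_1)$ after using $\sqrt{2\lambda}=1/F^{(+)}$. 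The put case $K\ge S_0$ is entirely parallel with $y\mapsto -y$, giving $F^{(-)}, G^{(-)}$ and \eqref{eqf1}.

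The main obstacle I anticipate is not the formal Euler-Lagrange computation but two rigor points: (a) establishing existence and monotonicity of the minimizer rigorously — one must rule out minimizing sequences that oscillate or that escape to $g\to\pm\infty$, which uses the two-sided bound \eqref{assumpI} on $\sigma$ to get coercivity of the energy and compactness in a suitable weak topology, plus a lower-semicontinuity argument; and (b) showing that the scalar equation \eqref{eqh1} (resp. \eqref{eqf1}) has a unique nonnegative solution $h_1$ (resp. $f_1$), which requires analyzing the monotonicity in $h_1$ of the ratio $G^{(+)}(h_1)/F^{(+)}(h_1)$ versus $K/S_0 - e^{-h_1}$: as $h_1\to 0^+$ both sides $\to 0$, and one needs to compare their derivatives/limits and check that the difference changes sign exactly once, using that $\sigma$ is bounded between $\underline\sigma$ and $\overline\sigma$ to sandwich the integrals by their Black-Scholes counterparts (where $\sigma\equiv$ const and the integrals are elementary). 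A secondary technical nuisance is handling the integrable singularity of the integrand of $F^{(\pm)}$ at the endpoint $y=h_1$ (resp. $y=f_1$), which is a standard $1/\sqrt{\cdot}$ singularity and causes no real difficulty but must be acknowledged when differentiating under the integral sign.
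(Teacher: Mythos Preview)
Your approach is essentially the same as the paper's: Lagrange multiplier for the integral constraint, Euler--Lagrange equation plus the Beltrami first integral, the transversality condition $g'(1)=0$ at the free endpoint (which is what fixes your ``turning point'' at $t=1$), and then the change of variable $t\mapsto y$ that produces $F^{(\pm)},G^{(\pm)}$ and eliminates $\lambda$. Two small slips to fix: you have the call/put labels reversed ($K>S_0$ is the OTM \emph{call}, with $f$ increasing), and your normalization reads $\sqrt{2\lambda}\,F^{(+)}=1$ when integrating your own $dt$ formula actually gives $\sqrt{2\lambda}=F^{(+)}$; the paper also does not attempt the existence/uniqueness arguments you flag in (a)--(b), so your proposal is in fact more careful on those points.
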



We would like to further study the properties of the rate function $\mathcal{I}(K,S_{0})$.
In particular, we will show that for a general local volatility model,
the rate function $\mathcal{I}(K,S_{0})$ is continuous in $K$
and it is increasing in $K$ for $K>S_{0}$ and decreasing in $K$ for $K<S_{0}$. 
This is based on an alternative representation of the rate function.

\begin{proposition}\label{prop:1}

(i) For $K > S_0$ the rate function $\mathcal{I}(K, S_0)$ is given by
\begin{equation}\label{J1}
\mathcal{I}(K,S_0) = \inf_{\varphi > K/S_0}
\frac12 \frac{(\mathcal{G}^{(-)}(\varphi))^2}{\varphi - \frac{K}{S_0}},
\end{equation}
where we denoted
\begin{equation}
\mathcal{G}^{(-)}(\varphi) = \int_1^{\varphi} \frac{\sqrt{\varphi-z}}{z\sigma(S_0 z)} dz
\,,\quad
\varphi \geq 1\,.
\end{equation}

(ii) for $K < S_0$ the rate function $\mathcal{I}(K, S_0)$ is given by
\begin{equation}\label{J2}
\mathcal{I}(K,S_0) = \inf_{0 < \chi < K/S_0}
\frac12 \frac{(\mathcal{G}^{(+)}(\chi))^2}{\frac{K}{S_0} - \chi},
\end{equation}
where we denoted
\begin{equation}
\mathcal{G}^{(+)}(\chi) = \int_{\chi}^1 \frac{\sqrt{z-\chi}}{z\sigma(S_0 z)} dz\,,\quad
0 < \chi \leq 1\,.
\end{equation}
\end{proposition}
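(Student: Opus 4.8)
The plan is to derive both identities from the closed-form solution of the variational problem \eqref{IRate} already obtained in Proposition~\ref{VarProp}, via a change of variables that recasts its one-parameter answer as the one-dimensional optimization problems \eqref{J1} and \eqref{J2}. I would treat the case $K > S_0$ in detail; the case $K < S_0$ is entirely parallel, with $f_1$, $e^{f_1}$, $G^{(-)}$, $F^{(-)}$, $\mathcal{G}^{(-)}$ replaced by $h_1$, $e^{-h_1}$, $G^{(+)}$, $F^{(+)}$, $\mathcal{G}^{(+)}$. Throughout I write $c := K/S_0 > 1$ and set $\psi(\varphi) := \tfrac12 (\mathcal{G}^{(-)}(\varphi))^2 / (\varphi - c)$ for $\varphi > c$, so the right-hand side of \eqref{J1} is $\inf_{\varphi > c} \psi(\varphi)$.

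First I would record the substitution $z = e^y$ in the definition of $\mathcal{G}^{(-)}$, which gives $\mathcal{G}^{(-)}(e^{f}) = \int_0^{f} \frac{1}{\sigma(S_0 e^y)} \sqrt{e^{f} - e^y}\, dy = G^{(-)}(f)$ for every $f > 0$, with $G^{(-)}$ as in \eqref{10}. Differentiating $\mathcal{G}^{(-)}$ under the integral sign --- the boundary term at $z = \varphi$ vanishes since $\sqrt{\varphi - z} \to 0$ there --- gives $(\mathcal{G}^{(-)})'(\varphi) = \tfrac12 \int_1^{\varphi} \frac{dz}{z \sigma(S_0 z) \sqrt{\varphi - z}}$, and the same substitution yields $2 (\mathcal{G}^{(-)})'(e^{f}) = \int_0^{f} \frac{dy}{\sigma(S_0 e^y) \sqrt{e^f - e^y}} = F^{(-)}(f)$, with $F^{(-)}$ as in \eqref{11}. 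Since the integrands are positive, $(\mathcal{G}^{(-)})'(\varphi) > 0$ for $\varphi > 1$, so the stationarity equation $\psi'(\varphi) = 0$ is equivalent to $\varphi - c = \mathcal{G}^{(-)}(\varphi) / (2 (\mathcal{G}^{(-)})'(\varphi))$; substituting $\varphi = e^{f}$ and using the two identities, this is precisely equation \eqref{eqf1} with $f = f_1$. Any such stationary point $\varphi_* = e^{f_1}$ satisfies $\varphi_* > c$, because the right-hand side of \eqref{eqf1} is positive, and a short computation using $\varphi_* - c = \mathcal{G}^{(-)}(\varphi_*)/(2(\mathcal{G}^{(-)})'(\varphi_*))$ gives $\psi(\varphi_*) = (\mathcal{G}^{(-)})'(\varphi_*)\, \mathcal{G}^{(-)}(\varphi_*) = \tfrac12 F^{(-)}(f_1)\, G^{(-)}(f_1)$, which equals $\mathcal{I}(K, S_0)$ by Proposition~\ref{VarProp}.

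It then remains to verify that $\inf_{\varphi > c} \psi(\varphi)$ is attained, after which it must be attained at a stationary point. For this I would show $\psi$ is coercive on $(c, \infty)$: as $\varphi \downarrow c$ the numerator tends to $(\mathcal{G}^{(-)}(c))^2 > 0$ (using $c > 1$) while the denominator vanishes, so $\psi \to +\infty$; and as $\varphi \to \infty$ the two-sided bound $\underline{\sigma} \leq \sigma \leq \overline{\sigma}$ from \eqref{assumpI} gives $\mathcal{G}^{(-)}(\varphi) \asymp \sqrt{\varphi}\, \log \varphi$, hence $\psi(\varphi) \asymp (\log \varphi)^2 \to \infty$. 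Being continuous, positive and coercive on $(c,\infty)$, $\psi$ attains its infimum at an interior point; by the stationarity equivalence this point corresponds to a solution of \eqref{eqf1}, and by Proposition~\ref{VarProp} its value is $\mathcal{I}(K, S_0)$, so $\inf_{\varphi > c}\psi(\varphi) = \mathcal{I}(K, S_0)$, which is \eqref{J1}. The case $K < S_0$ follows in the same way after the substitution $z = e^{-y}$ (i.e.\ $\chi = e^{-h_1}$), with $\chi \in (0, K/S_0)$ in place of $\varphi$, giving \eqref{J2}.

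The step I expect to be the main obstacle is the endpoint/coercivity analysis of $\psi$ --- in particular the growth estimate $\mathcal{G}^{(-)}(\varphi) \asymp \sqrt{\varphi}\log\varphi$ as $\varphi \to \infty$, which is where the global bounds \eqref{assumpI} genuinely enter --- together with ruling out that the infimum escapes to an endpoint; matching the stationarity condition to \eqref{eqf1} and evaluating $\psi$ at the stationary point are routine. A self-contained alternative that bypasses Proposition~\ref{VarProp} is to attack \eqref{IRate} directly: writing $g = \log S_0 + u$ with $u(0) = 0$, one uses the Beltrami first integral $\tfrac12 (u')^2/\sigma^2(S_0 e^u) + \lambda e^u = E$ for the Lagrangian of the constrained problem together with the free-endpoint transversality condition $u'(1) = 0$; the turning value $\varphi = e^{u(1)}$ then becomes the optimization variable, the time normalization and the constraint $\int_0^1 e^{u}\, dt = c$ reproduce the integrals $\mathcal{G}^{(-)}$ and $(\mathcal{G}^{(-)})'$ exactly, and \eqref{J1} follows directly.
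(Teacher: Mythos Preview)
Your approach is essentially the same as the paper's: both compute the stationarity condition for $\psi(\varphi)=\tfrac12(\mathcal{G}^{(-)}(\varphi))^2/(\varphi-K/S_0)$, identify it with equation~\eqref{eqf1} via $\varphi=e^{f_1}$ (using $\mathcal{G}^{(-)}(e^{f_1})=G^{(-)}(f_1)$ and $2(\mathcal{G}^{(-)})'(e^{f_1})=F^{(-)}(f_1)$), and then read off the value $\tfrac12 F^{(-)}(f_1)G^{(-)}(f_1)=\mathcal{I}(K,S_0)$ from Proposition~\ref{VarProp}. The one addition in your write-up is the explicit coercivity argument ($\psi\to\infty$ at both endpoints of $(K/S_0,\infty)$) ensuring the infimum is attained at an interior stationary point; the paper's proof of Proposition~\ref{prop:1} omits this and only supplies the needed growth estimate for $\mathcal{G}^{(-)}$ later, in the proof of Proposition~\ref{prop:2}.
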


\begin{remark}\label{CtsRemark}
In the formulation in Proposition~\ref{prop:1}, we can see that the rate functions
given in \eqref{J1} and \eqref{J2} are indeed continuous in the parameter $K$.
\end{remark}

\begin{figure}[t]
\centering
\includegraphics[width=2.7in]{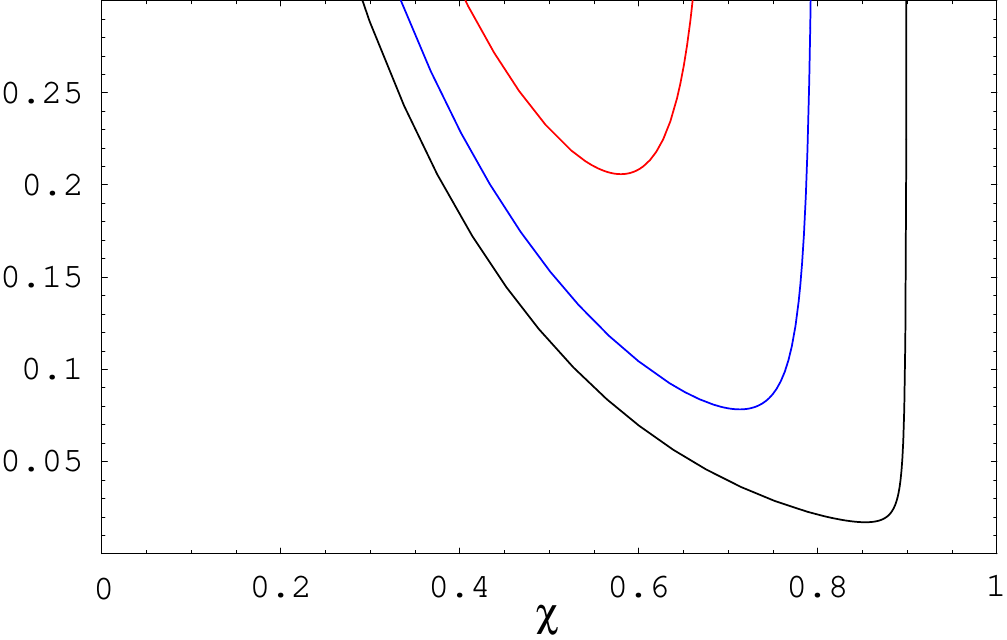}
\includegraphics[width=2.7in]{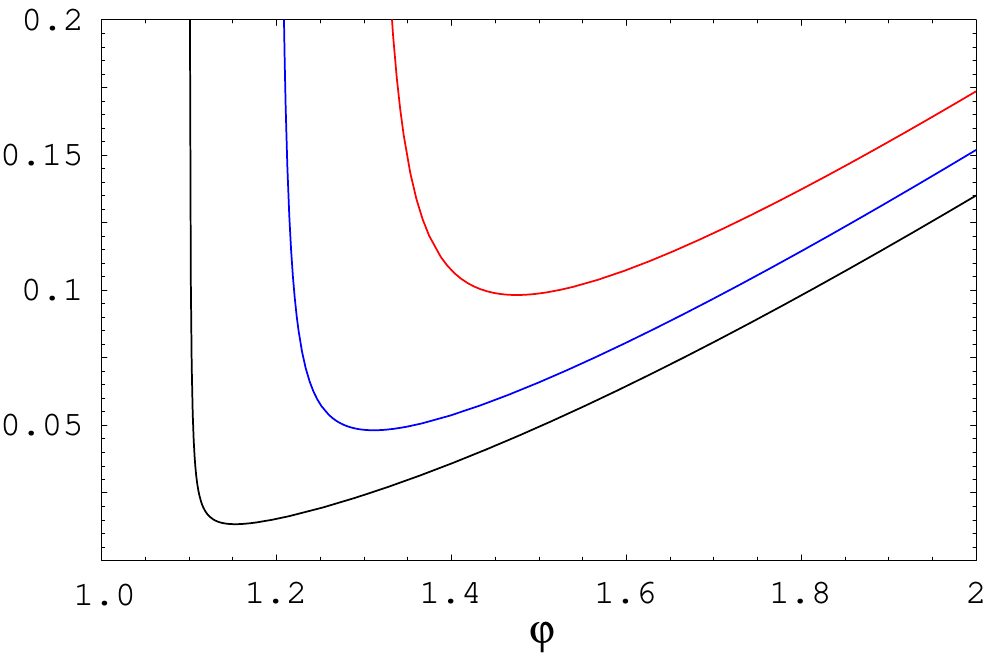}
\caption{
Plots of the functions in (\ref{J1}), (\ref{J2}) in the Black-Scholes model
with $\sigma=1$ for $K/S_0=0.7,0.8,0.9$ (left) and $K/S_0=1.1,1.2,1.3$ (right).
The infimum of the function gives the rate function in the BS model.}
\label{Fig:J}
\end{figure}

Using the result of Proposition~\ref{prop:1} we can also prove that: 

\begin{proposition}\label{prop:2}
The rate function $\mathcal{I}(K,S_0)$
is a monotonically increasing function of $K$ for $K > S_0$ and monotonically
decreasing function of $K$ for $K  <S_0$. 
\end{proposition}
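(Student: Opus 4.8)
The plan is to argue directly from the two infimum representations in Proposition~\ref{prop:1}, writing $k=K/S_0$ throughout. The structural point that drives everything is that in
\[
\mathcal{I}(K,S_0) = \inf_{\varphi > k} \tfrac12 \frac{(\mathcal{G}^{(-)}(\varphi))^2}{\varphi - k}\quad(k>1),\qquad
\mathcal{I}(K,S_0) = \inf_{0 < \chi < k} \tfrac12 \frac{(\mathcal{G}^{(+)}(\chi))^2}{k - \chi}\quad(0<k<1),
\]
the functions $\mathcal{G}^{(\pm)}$ do not depend on $k$; only the denominator and the feasible set depend on $k$, and both do so monotonically.

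For weak monotonicity when $K>S_0$, take $1 < k_1 < k_2$. For every $\varphi > k_2$ one has $\varphi > k_1$ and $0 < \varphi - k_2 < \varphi - k_1$, hence $\frac{(\mathcal{G}^{(-)}(\varphi))^2}{\varphi - k_1} \le \frac{(\mathcal{G}^{(-)}(\varphi))^2}{\varphi - k_2}$; taking the infimum over $\varphi > k_2$ on the right, and using that the left-hand infimum can only grow when the feasible set shrinks from $\{\varphi>k_1\}$ to $\{\varphi>k_2\}$, gives $\mathcal{I}(K_1,S_0) \le \mathcal{I}(K_2,S_0)$. The mirror-image argument for $0 < k_1 < k_2 < 1$ — every $\chi < k_1$ also satisfies $\chi < k_2$, and $k_1 - \chi < k_2 - \chi$ — gives $\mathcal{I}(K_1,S_0) \ge \mathcal{I}(K_2,S_0)$ on $(0,S_0)$.

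To promote this to strict monotonicity I would show the infima are attained at points where $\mathcal{G}^{(\pm)}$ is strictly positive. For $k>1$, $\varphi \mapsto \tfrac12(\mathcal{G}^{(-)}(\varphi))^2/(\varphi-k)$ is continuous on $(k,\infty)$, tends to $+\infty$ as $\varphi \downarrow k$ (the numerator tends to the positive constant $\mathcal{G}^{(-)}(k)=\int_1^k \frac{\sqrt{k-z}}{z\sigma(S_0 z)}\,dz$ while the denominator vanishes), and tends to $+\infty$ as $\varphi \to \infty$: using $\sigma \le \overline{\sigma}$ and the substitution $z=\varphi t$, $\mathcal{G}^{(-)}(\varphi)/\sqrt{\varphi} \ge \overline{\sigma}^{-1}\int_{1/\varphi}^1 \frac{\sqrt{1-t}}{t}\,dt \ge (2\overline{\sigma})^{-1}\log(3\varphi/4)$, so $(\mathcal{G}^{(-)}(\varphi))^2/(\varphi-k) \ge (\mathcal{G}^{(-)}(\varphi))^2/\varphi \to \infty$. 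Hence a minimizer $\varphi^\ast > k > 1$ exists, and since the integrand defining $\mathcal{G}^{(-)}$ is strictly positive on $(1,\varphi^\ast)$ we get $\mathcal{G}^{(-)}(\varphi^\ast)>0$. Then for $1 < k_1 < k_2$, with $\varphi^\ast$ a minimizer for $k_2$,
\[
\mathcal{I}(K_1,S_0) \le \tfrac12 \frac{(\mathcal{G}^{(-)}(\varphi^\ast))^2}{\varphi^\ast - k_1} < \tfrac12 \frac{(\mathcal{G}^{(-)}(\varphi^\ast))^2}{\varphi^\ast - k_2} = \mathcal{I}(K_2,S_0),
\]
the strict inequality coming from $0 < \varphi^\ast - k_2 < \varphi^\ast - k_1$ together with $\mathcal{G}^{(-)}(\varphi^\ast)>0$. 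The case $K<S_0$ is identical in structure: on $(0,k)$ the objective $\tfrac12(\mathcal{G}^{(+)}(\chi))^2/(k-\chi)$ is continuous, extends continuously to $\chi=0$ with finite value $\tfrac12(\mathcal{G}^{(+)}(0))^2/k$, and tends to $+\infty$ as $\chi \uparrow k$, so a minimizer $\chi^\ast \in [0,k)$ exists with $\mathcal{G}^{(+)}(\chi^\ast)>0$ (the integrand is positive on $(\chi^\ast,1)$ since $\chi^\ast<1$); feeding $\chi^\ast$ into the problem for the larger $k$ yields the strict decrease.

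The only step that is not a one-line monotonicity comparison is the coercivity of the objective as $\varphi\to\infty$ — i.e.\ ruling out a minimizing sequence escaping to infinity — and this is exactly where the two-sided bound $0<\underline{\sigma}\le\sigma\le\overline{\sigma}$ from \eqref{assumpI} is used; everything else follows from the sign of $\varphi-k$ (resp.\ $k-\chi$) and the positivity of $\mathcal{G}^{(\pm)}$ at the minimizer. One could alternatively bypass attainment via an $\varepsilon$-minimizer argument, but the same coercivity estimate would still be needed to confine the near-minimizers to a compact interval away from the singular endpoint, so invoking attainment directly is the cleaner route.
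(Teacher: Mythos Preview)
Your proof is correct and takes a genuinely different route from the paper's. The paper establishes two calculus-based lemmas (its Lemmas~\ref{lemma:1} and~\ref{lemma:2}) that compute the sign of the derivative of $F(z)=\inf_{x} f(x)/|x-z|$ by implicit differentiation at the optimal point $x_*$; this brings in $f''$ and requires separately checking that the optimizer is interior (e.g.\ showing $\lim_{\chi\to 0}\frac{d}{d\chi}(\mathcal{G}^{(+)}(\chi))^2=-\infty$ so that the minimum is not at the lower boundary). You instead exploit directly that in the representations of Proposition~\ref{prop:1} only the denominator $|\,\varphi-k\,|$ (resp.\ $|\,k-\chi\,|$) and the feasible set depend on $k$, and both move monotonically; weak monotonicity is then a one-line comparison, and strictness follows once you know the infimum is attained at a point with $\mathcal{G}^{(\pm)}>0$. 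The coercivity step you need for attainment (superlinear growth of $(\mathcal{G}^{(-)}(\varphi))^2$ via $\sigma\le\overline{\sigma}$, and finiteness of $\mathcal{G}^{(+)}(0)$ via $\sigma\ge\underline{\sigma}$) is essentially the same technical input the paper uses, but your argument is cleaner: it needs no differentiability of $\mathcal{G}^{(\pm)}$, and in the $K<S_0$ case it is indifferent to whether the minimizer sits at the boundary $\chi^*=0$, whereas the paper must rule that out for its derivative computation to make sense. The trade-off is that the paper's approach yields the explicit formula $F'(z)=\pm[f'(x_*)]^2/f(x_*)$ as a by-product, while yours gives only the sign.
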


\subsection{Black-Scholes Model}

In the Black-Scholes model the volatility is constant $\sigma(S)=\sigma$. 
The expression for the rate function $\mathcal{I}(K,S_0)$
simplifies and is given by the following result.

\begin{proposition}\label{RateFunctionBS}
The rate function for small maturity asymptotics for Asian options 
in the Black-Scholes model is 
\begin{equation}\label{IBS}
\mathcal{I}_{\rm BS}(K, S_0) = \frac{1}{\sigma^2} \mathcal{J}_{\rm BS}(K/S_0),
\end{equation}
where $\mathcal{J}_{\rm BS}(K/S_0)$ depends only on the ratio $K/S_0$ and is given by
\begin{equation}\label{JBSresult}
\mathcal{J}_{\rm BS}(K/S_0) 
= 
\begin{cases}
\frac12 \beta^2 - \beta \tanh \left(\frac{\beta}{2}\right) & K \geq S_0 
\\
2\xi (\tan\xi - \xi) & K \leq S_0 
\end{cases}.
\end{equation}
For $K\geq S_0$, $\beta$ is the solution of the equation
\begin{equation}\label{betaeq}
\frac{1}{\beta} \sinh\beta = \frac{K}{S_0},
\end{equation}
and for $K\leq S_0$, $\xi$ is the solution in the interval $[0,\frac{\pi}{2}]$
of the equation 
\begin{equation}\label{lambdaeq}
\frac{1}{2\xi} \sin(2\xi)
 = \frac{K}{S_0}\,.
\end{equation}
\end{proposition}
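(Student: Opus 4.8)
The plan is to specialize the general solution of the variational problem, Proposition~\ref{VarProp}, to the constant-volatility case $\sigma(\cdot)\equiv\sigma$, where all the integrals involved become elementary.

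\emph{Step 1: Reduction to a function of the ratio $K/S_0$.} Starting from \eqref{IRate}, when $\sigma(\cdot)\equiv\sigma$ the constant comes out of the Dirichlet-type integral, so $\mathcal{I}_{\rm BS}(K,S_0)=\frac{1}{\sigma^2}\inf\{\frac12\int_0^1 (g'(t))^2\,dt:\int_0^1 e^{g(t)}\,dt=K,\ g(0)=\log S_0\}$. Shifting $g\mapsto g-\log S_0$ turns the constraint into $\int_0^1 e^{g(t)}\,dt=K/S_0$ with $g(0)=0$ and leaves $\int_0^1 (g')^2\,dt$ unchanged, which gives \eqref{IBS} with $\mathcal{J}_{\rm BS}(K/S_0)$ equal to the value of this normalized problem. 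Equivalently, one checks directly in Proposition~\ref{VarProp} that $\sigma$ cancels in the ratios $G^{(\pm)}/F^{(\pm)}$ appearing in \eqref{eqh1}--\eqref{eqf1} while factoring as $1/\sigma^2$ out of \eqref{Iresult}, and that $K,S_0$ enter only through $K/S_0$.

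\emph{Step 2: Evaluating the elementary integrals.} Take first $K\geq S_0$. With $\sigma(S_0 e^y)=\sigma$, the substitution $u=e^y$ followed by $w=\sqrt{e^{f_1}-u}$ turns $F^{(-)}(f_1)$ and $G^{(-)}(f_1)$ of \eqref{10}--\eqref{11} into $\frac{1}{\sigma}\int_0^{\sqrt{e^{f_1}-1}}\frac{2\,dw}{e^{f_1}-w^2}$ and $\frac{1}{\sigma}\int_0^{\sqrt{e^{f_1}-1}}\frac{2w^2\,dw}{e^{f_1}-w^2}$, which evaluate in terms of $\sqrt{e^{f_1}-1}$ and $\log\frac{e^{f_1/2}+\sqrt{e^{f_1}-1}}{e^{f_1/2}-\sqrt{e^{f_1}-1}}$. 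The key device is to introduce $\beta\geq0$ via $e^{f_1/2}=\cosh(\beta/2)$, a bijection of $[0,\infty)$ onto itself, so that $\sqrt{e^{f_1}-1}=\sinh(\beta/2)$ and the logarithm collapses to $\beta$; one finds $F^{(-)}(f_1)=\frac{1}{\sigma}\frac{\beta}{\cosh(\beta/2)}$ and $G^{(-)}(f_1)=\frac{1}{\sigma}\bigl(\beta\cosh(\beta/2)-2\sinh(\beta/2)\bigr)$. Substituting into \eqref{eqf1} and using $e^{f_1}=\cosh^2(\beta/2)$, the $\cosh^2(\beta/2)$ terms cancel and the equation reduces to $\frac{\sinh\beta}{\beta}=\frac{K}{S_0}$, i.e. \eqref{betaeq}; substituting into $\frac12 F^{(-)}(f_1)G^{(-)}(f_1)$ and simplifying gives $\frac{1}{\sigma^2}\bigl(\frac12\beta^2-\beta\tanh(\beta/2)\bigr)$, the first line of \eqref{JBSresult}. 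The case $K\leq S_0$ is entirely parallel: with $u=e^{-y}$ and then $w=\sqrt{u-e^{-h_1}}$ the integrals $F^{(+)},G^{(+)}$ produce an arctangent, and the substitution $e^{-h_1/2}=\cos\xi$ with $\xi\in[0,\tfrac{\pi}{2}]$ linearizes it to $\xi$, yielding $F^{(+)}(h_1)=\frac{1}{\sigma}\frac{2\xi}{\cos\xi}$ and $G^{(+)}(h_1)=\frac{1}{\sigma}\bigl(2\sin\xi-2\xi\cos\xi\bigr)$; with $e^{-h_1}=\cos^2\xi$, equation \eqref{eqh1} reduces to $\frac{\sin(2\xi)}{2\xi}=\frac{K}{S_0}$, i.e. \eqref{lambdaeq}, and $\frac12 F^{(+)}(h_1)G^{(+)}(h_1)=\frac{1}{\sigma^2}\,2\xi(\tan\xi-\xi)$, the second line of \eqref{JBSresult}.

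\emph{Step 3: Solvability of the defining equations.} Finally, $\beta\mapsto\beta^{-1}\sinh\beta$ is strictly increasing on $(0,\infty)$ with range $(1,\infty)$ and $\xi\mapsto(2\xi)^{-1}\sin(2\xi)$ is strictly decreasing on $(0,\tfrac{\pi}{2})$ with range $(0,1)$, so \eqref{betaeq} has a unique solution $\beta>0$ when $K/S_0>1$ and \eqref{lambdaeq} a unique $\xi\in(0,\tfrac{\pi}{2})$ when $K/S_0<1$, and both expressions in \eqref{JBSresult} tend to $0$ as $K/S_0\to1$, consistent with Remark~\ref{CtsRemark}. I expect the only real work to be bookkeeping: carrying the chain of substitutions through cleanly and choosing the hyperbolic/trigonometric parametrization that linearizes the transcendental term, with no conceptual obstacle, since the genuinely nontrivial variational analysis is already done in Proposition~\ref{VarProp}. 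As an alternative one can bypass Proposition~\ref{VarProp} and solve the normalized problem of Step~1 directly via its Euler--Lagrange equation $g''=-\lambda e^{g}$ (the Liouville equation) with $g(0)=0$ and the natural boundary condition $g'(1)=0$, integrating once to $\frac12(g')^2=\lambda(e^{g(1)}-e^{g})$ and treating $\lambda>0$ (for $K>S_0$) and $\lambda<0$ (for $K<S_0$) separately; the same substitutions then recover \eqref{betaeq}--\eqref{JBSresult}.
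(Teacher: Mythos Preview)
Your proof is correct, and it takes a genuinely different route from the paper's own argument. The paper goes back to the variational problem itself: it writes down the Euler--Lagrange equation $f''(t)=a\,e^{f(t)}$ (the Liouville equation) with boundary conditions $f(0)=0$, $f'(1)=0$, exhibits the two families of explicit solutions $f_1(x)=\beta x-2\log\frac{e^{\beta x}+\gamma}{1+\gamma}$ and $f_2(x)=-2\log|\cos(\xi x+\eta)|+2\log|\cos\eta|$, fixes the parameters from the boundary conditions and the constraint $\int_0^1 e^{f}=K/S_0$, and then evaluates $\frac12\int_0^1(f')^2$ directly. You instead treat Proposition~\ref{VarProp} as a black box and specialize the closed-form integrals $F^{(\pm)},G^{(\pm)}$ to constant $\sigma$, using the clever parametrizations $e^{f_1/2}=\cosh(\beta/2)$ and $e^{-h_1/2}=\cos\xi$ to collapse the transcendental pieces. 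Your approach is the more economical one once Proposition~\ref{VarProp} is in hand---no ODE needs to be re-solved---while the paper's approach has the bonus of producing the explicit optimal path $f(t)$ (equation~\eqref{fsol}), which your computation does not directly yield. The alternative you sketch at the end (solving the Liouville equation directly) is in fact exactly what the paper does; note only that your sign convention for the Lagrange multiplier is opposite to the paper's, which is harmless.
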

We note that the two equations (\ref{betaeq}) and (\ref{lambdaeq}) can be written
in a common form as $\frac{1}{z}\sin z = K/S_0$ by denoting $z = 2\xi = 
i\beta$. With this notation, the rate function is 
$\mathcal{J}_{\rm BS}(K/S_0)=z \tan(\frac12 z)-\frac12 z^2$.
We computed the function $\mathcal{J}_{\rm BS}(K/S_0)$ numerically, and the 
plot of this function is shown in Figure~\ref{Fig:Jcal}.

\begin{figure}[t]
\centering
\includegraphics[width=2.5in]{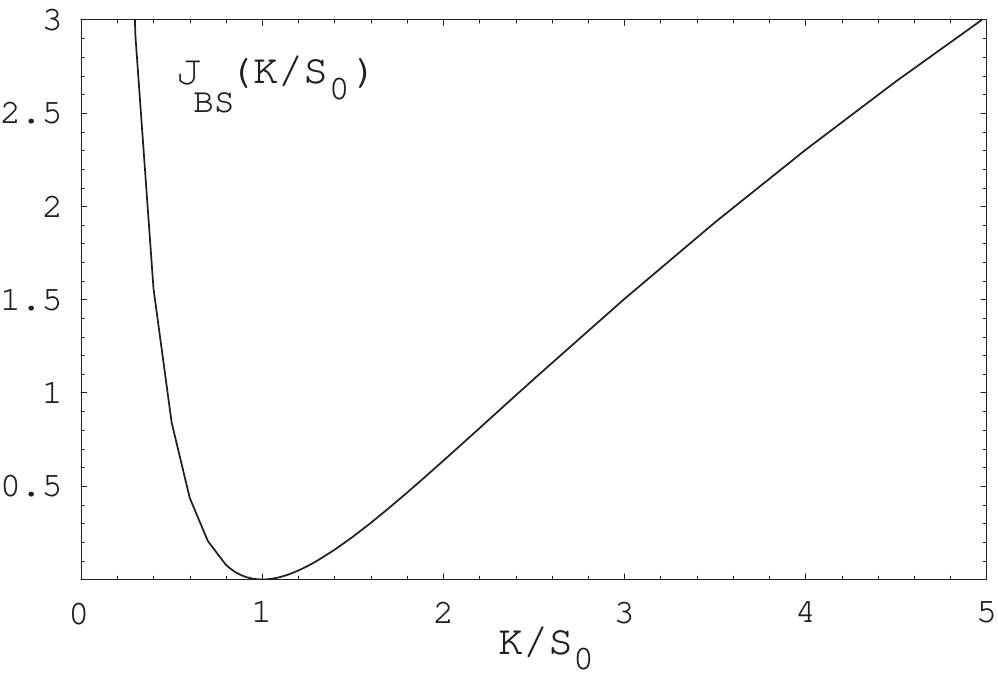}
\includegraphics[width=2.35in]{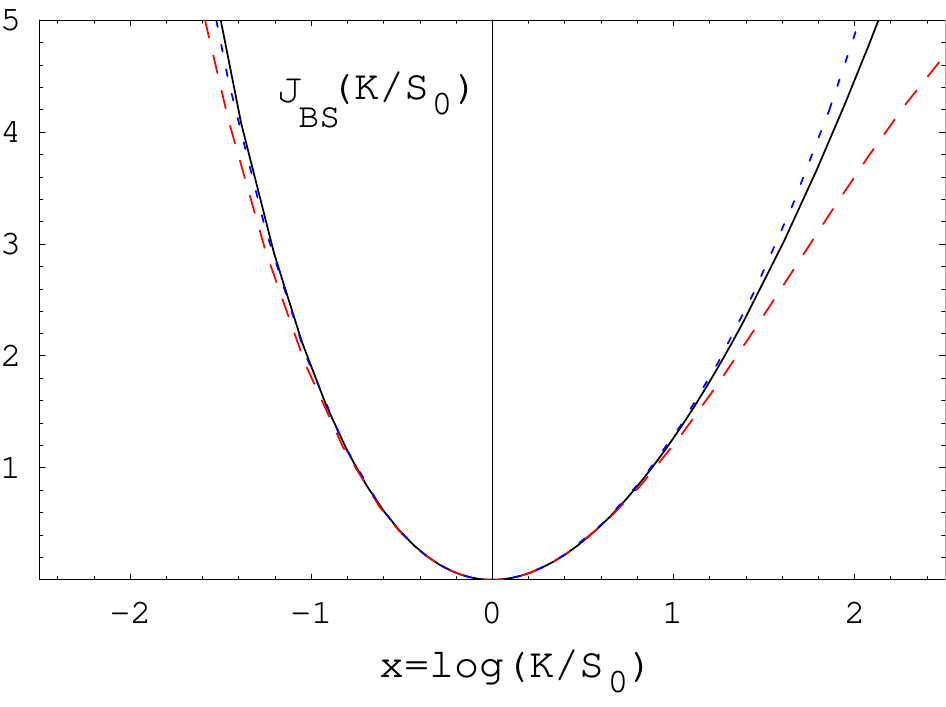}
\caption{
The rate function $\mathcal{J}_{\rm BS}(K/S_0)$ for the Asian options in the
Black-Scholes model.
This is related to $\mathcal{I}_{\rm BS}(K,S_0)$ as in (\ref{IBS}). 
Left: $\mathcal{J}_{\rm BS}(K/S_0)$  vs. $K/S_0$.
Right: the function $\mathcal{J}_{\rm BS}(K/S_0)$ vs. $x=\log(K/S_0)$ (black) 
and the Taylor series obtained by keeping terms up to $O(x^3)$ (dashed red) 
and $O(x^4)$ (dotted blue) in (\ref{TaylorJ}).}
\label{Fig:Jcal}
\end{figure}

We would like to find the series expansion of the Black-Scholes
rate function $\mathcal{J}_{\rm BS}(K/S_0)$ around the ATM point $K/S_0=1$. 
The solutions of the equations for $\beta,\xi$ can be found by inverting 
the series in $k := \frac{K}{S_0}-1$ for (\ref{betaeq}) and 
(\ref{lambdaeq}). 
Substituting into (\ref{JBSresult}) we find the expansion of the rate
function 
\begin{equation}\label{JTaylork}
\mathcal{J}_{\rm BS}\left(\frac{K}{S_0}\right) = \frac32 k^2 - \frac95 k^3 + 
\frac{333}{175} k^4  - \frac{1704}{875} k^5 + O(k^6)\,.
\end{equation}

A similar expansion can be derived in powers of the log-strike 
$x := \log(K/S_0)$
\begin{equation}\label{TaylorJ}
\mathcal{J}_{\rm BS}\left(\frac{K}{S_0}\right) = 
\frac32 x^2 - \frac{3}{10} x^3 + \frac{109}{1400} x^4
- \frac{117}{7000} x^5 + O(x^6)\,.
\end{equation}

Figure~\ref{Fig:Jcal} (right plot) shows the  approximations for the rate 
function $\mathcal{J}_{\rm BS}(K/S_0)$ obtained by keeping the first few terms 
in the series expansion (\ref{TaylorJ}). Keeping the first four terms in 
the expansion (\ref{TaylorJ}) matches the exact result for the rate function 
to an accuracy better than $1.2\%$ for $x = \log(K/S_0) \in (-1.6,1.5)$.

Next we consider the large/small-strike asymptotics of the rate function 
$\mathcal{J}_{\rm BS}(K/S_0)$. This is given by the following result.
\begin{proposition}\label{prop:asympt}
We have
\begin{align}\label{JBSasympt}
& \mathcal{J}_{\rm BS} \left( \frac{K}{S_0} \right) = 
\left\{
\begin{array}{lc}
\frac12 x^2 + x\log (2x) - x + 3\log^2(2x)-2\log(2x) + O(x^{-1})\,,& K\to \infty \\
2e^{-x} -2 - \frac{\pi^2}{2}+ O(e^x)\,,& K \to 0\\
\end{array}
\right.
\end{align}
with $x=\log(K/S_0)$ the log-strike.
\end{proposition}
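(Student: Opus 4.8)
The plan is to derive the large/small-strike asymptotics directly from the explicit characterization of $\mathcal{J}_{\rm BS}$ in Proposition~\ref{RateFunctionBS}, namely $\mathcal{J}_{\rm BS}(K/S_0) = z\tan(\tfrac12 z) - \tfrac12 z^2$ where $z$ solves $\frac1z\sin z = K/S_0$, with $z=2\xi\in[0,\pi]$ in the $K\le S_0$ regime and $z = i\beta$, $\beta>0$, in the $K\ge S_0$ regime. In each limit the problem reduces to (a) inverting the transcendental equation asymptotically to get $z$ (equivalently $\beta$ or $\xi$) as a function of $x=\log(K/S_0)$, and then (b) substituting into the rate function formula and expanding.

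\emph{Case $K\to\infty$.} Here I would work with $\beta>0$ and the equation $\frac{1}{\beta}\sinh\beta = K/S_0 = e^x$. As $x\to\infty$ we have $\beta\to\infty$ and $\sinh\beta \sim \tfrac12 e^\beta$, so $\frac12 e^\beta/\beta = e^x$, i.e. $\beta - \log\beta = x + \log 2$. I would solve this by iteration: $\beta = x + \log 2 + \log\beta + \cdots$, and then $\log\beta = \log(x+\log 2 + \cdots) = \log x + \tfrac{\log 2 + \log\beta}{x} + \cdots$, producing an expansion $\beta = x + \log(2x) + \frac{\log(2x)}{x} + O\!\big(\frac{\log^2 x}{x}\big)$ (I would need to track enough terms to control the $O(x^{-1})$ error claimed). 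Since $\beta\to\infty$, $\tanh(\beta/2)\to 1$ exponentially, so $\mathcal{J}_{\rm BS} = \tfrac12\beta^2 - \beta\tanh(\beta/2) = \tfrac12\beta^2 - \beta + O(\beta e^{-\beta})$, and the $O(\beta e^{-\beta})$ term is $O(e^{-x})$ hence absorbed in the error. Substituting the expansion of $\beta$: $\tfrac12\beta^2 = \tfrac12 x^2 + x\log(2x) + \tfrac12\log^2(2x) + \log(2x) + O(x^{-1}\log^2 x)$ and $-\beta = -x - \log(2x) + O(x^{-1}\log x)$; combining gives $\tfrac12 x^2 + x\log(2x) - x + \tfrac12\log^2(2x) - \log(2x) + \cdots$. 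One must then reconcile this with the stated $3\log^2(2x) - 2\log(2x)$ coefficients, which means the iteration for $\beta$ must be carried to one further order (the $\frac{\log(2x)}{x}$ term in $\beta$ feeds a $\log^2(2x)$ contribution into $\tfrac12\beta^2$ via the cross term $x\cdot\frac{\log(2x)}{x}$, and similarly from the $-\beta$ piece); I would redo the iteration carefully until the $\log^2$ and $\log$ coefficients stabilize, and verify the remainder is genuinely $O(x^{-1})$.

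\emph{Case $K\to 0$.} Here $x\to-\infty$, $K/S_0 = e^x\to 0$, and I use the real branch $z = 2\xi$, $\xi\in(0,\pi/2)$, solving $\frac{1}{2\xi}\sin(2\xi) = e^x$. As $e^x\to 0^+$ the left side must vanish, which on $(0,\pi/2]$ forces $2\xi\to\pi^-$, i.e. $\xi\to\pi/2$. Write $2\xi = \pi - \delta$ with $\delta\to 0^+$; then $\sin(2\xi) = \sin\delta \approx \delta$ and $2\xi \approx \pi$, so $\delta/\pi \approx e^x$, giving $\delta = \pi e^x + O(e^{2x})$ (I would expand $\sin\delta = \delta - \delta^3/6+\cdots$ and $1/(2\xi) = 1/(\pi-\delta)$ to get the correction if needed). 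Now $\mathcal{J}_{\rm BS} = z\tan(\tfrac12 z) - \tfrac12 z^2 = (\pi-\delta)\tan\!\big(\tfrac{\pi-\delta}{2}\big) - \tfrac12(\pi-\delta)^2$. Since $\tan\!\big(\tfrac{\pi}{2} - \tfrac{\delta}{2}\big) = \cot(\delta/2) = \frac{2}{\delta} + O(\delta)$, the first term is $(\pi-\delta)\big(\tfrac{2}{\delta} + O(\delta)\big) = \tfrac{2\pi}{\delta} - 2 + O(\delta)$, and $\tfrac{2\pi}{\delta} = \tfrac{2\pi}{\pi e^x + O(e^{2x})} = 2e^{-x}(1 + O(e^x)) = 2e^{-x} + O(1)$ — so I need the next order in $\delta$ to pin down the $O(1)$ constant. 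With $\delta = \pi e^x(1 + c\, e^x + \cdots)$ one gets $\tfrac{2\pi}{\delta} = 2e^{-x} - 2c + \cdots$, and the $-\tfrac12(\pi-\delta)^2 = -\tfrac{\pi^2}{2} + \pi\delta + \cdots = -\tfrac{\pi^2}{2} + O(e^x)$. Collecting: $\mathcal{J}_{\rm BS} = 2e^{-x} - 2 - \tfrac{\pi^2}{2} + O(e^x)$, provided the constant $c$ from the correction to $\delta$ cancels against terms of the same order; I would verify this by carrying the expansion of $\sin(2\xi)/(2\xi)$ one step further.

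\emph{Main obstacle.} The genuinely delicate part is the $K\to\infty$ case: the transcendental equation $\beta - \log\beta = x + \log 2$ has a solution whose asymptotic expansion involves nested logarithms ($\log x$, $\log\log x$, ...), and getting the coefficients $3\log^2(2x)$ and $-2\log(2x)$ correct (rather than the naive $\tfrac12\log^2(2x)$ and $-\log(2x)$ one gets from a one-step inversion) requires carefully iterating the inversion to the right order and tracking how sub-leading terms in $\beta$ propagate through $\tfrac12\beta^2$. I would organize this by setting $\beta = x + u$ with $u = \log(2\beta) = \log(2x) + \log(1 + u/x)$ and expanding $u$ to order $O(x^{-1})$, then substituting once at the end — being scrupulous about which cross-terms contribute at the $\log^2$ and $\log$ levels. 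The $K\to 0$ case is comparatively routine once one spots that $2\xi\to\pi$ and expands $\cot(\delta/2)$.
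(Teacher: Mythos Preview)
Your approach is essentially identical to the paper's: for $K\to\infty$ you both rewrite the equation as $\beta = x + \log(2\beta) + O(e^{-2\beta})$ and solve by iteration, then substitute into $\tfrac12\beta^2 - \beta\tanh(\beta/2)$; for $K\to 0$ you both set $\xi = \tfrac{\pi}{2} - \zeta$ (your $\delta = 2\zeta$), solve $\sin(2\zeta) = e^x(\pi - 2\zeta)$ iteratively, and substitute. The paper is equally terse about the final substitution step, simply asserting that ``substituting into (\ref{JBSresult}) and expanding to the order shown'' yields the stated coefficients, so your caveat about needing to track the iteration carefully to reconcile the $\log^2(2x)$ and $\log(2x)$ coefficients is exactly the level of detail the paper itself leaves implicit.
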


\subsection{General local volatility model} The rate
function $\mathcal{I}(K,S_0)$ for the general local volatility function
$\sigma(S)$ is given by Proposition~\ref{VarProp}. We give next an 
explicit result for the first three terms in the series expansion of 
$\mathcal{I}(K,S_0)$ in powers of log-strike $x = \log(K/S_0)$.

\begin{proposition}\label{LVexpRF}
The first three terms in the series expansion of the rate function
$\mathcal{I}(K,S_0)$ given by Proposition~\ref{VarProp} in powers of
log-strike $x = \log(K/S_0)$ are 
\begin{eqnarray}\label{RFexp}
\mathcal{I}(K,S_0) &=& \frac{1}{b_1^2} \left\{
\frac32 x^2 + \left( - \frac{3}{10} - \frac{18}{5} \frac{b_2}{b_1^2} 
\right) x^3 \right. \\
& & \left. + \left( \frac{109}{1400} + \frac{117}{175} \frac{b_2}{b_1^2}
+ \frac{1872}{175} \frac{b_2^2}{b_1^4} - \frac{162}{35} \frac{b_3}{b_1^3}\right)
x^4 + O(x^5) \right\}\,.\nonumber
\end{eqnarray}
The coefficients $b_i$ depend on the local volatility function $\sigma(S)$, 
and are given by 
\begin{equation}\label{Ydef}
Y(z) = Z^{-1}(z) = \sum_{i=1}^\infty b_i z^i\,,
\end{equation}
which is the inversion of the power series for the function
\begin{equation}\label{Zdef}
Z(y) = \int_0^y \frac{dw}{\sigma(S_0 e^w)} = \sum_{i=1}^\infty a_i y^i,
\end{equation}
where we assumed that the local volatility function $\sigma(S)$ is sufficiently 
regular such that all the required derivatives exist and are finite.
In particular the expansion (\ref{RFexp}) requires that $\sigma(S)$ is twice 
differentiable.
\end{proposition}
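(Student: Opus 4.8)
The plan is to start from the one–dimensional variational representation of Proposition~\ref{prop:1}, which for $K>S_0$ reads $\mathcal{I}(K,S_0)=\inf_{\varphi>K/S_0}\frac12(\mathcal{G}^{(-)}(\varphi))^2/(\varphi-K/S_0)$, and to perform a careful double Taylor expansion near the at–the–money point $x=\log(K/S_0)=0$. (The case $K<S_0$ is handled by the analogous computation starting from \eqref{J2}, which produces the same series in $x$; equivalently one checks that the two one–sided expansions agree across $x=0$.) First I would pass to ``$Z$–coordinates'': substituting $z=e^{Y(s)}$ with $Y=Z^{-1}$ as in \eqref{Ydef} and \eqref{Zdef}, and writing $\varphi=e^{Y(\psi)}$, one obtains $\mathcal{G}^{(-)}(\varphi)=\int_0^\psi\sqrt{e^{Y(\psi)}-e^{Y(s)}}\,ds$ and $\varphi-K/S_0=e^{Y(\psi)}-e^x$, so that $\mathcal{I}(K,S_0)=\inf_{\psi>Z(x)}\frac12(\int_0^\psi\sqrt{e^{Y(\psi)}-e^{Y(s)}}\,ds)^2/(e^{Y(\psi)}-e^x)$. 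A leading–order computation, keeping only $e^{Y(\psi)}-e^{Y(s)}\approx b_1(\psi-s)$ and $e^{Y(\psi)}-e^x\approx b_1\psi-x$, gives $\int_0^\psi\sqrt{b_1(\psi-s)}\,ds=\frac23 b_1^{1/2}\psi^{3/2}$, so the objective is approximately $\frac29 b_1\psi^3/(b_1\psi-x)$, minimized at $\psi_\star=3x/(2b_1)$ with value $\frac32 x^2/b_1^2$. This reproduces the leading term of \eqref{RFexp} and, crucially, shows that $\psi$ scales like $x$ at the optimum.

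Next I would carry the expansion to the order needed for the $x^4$ term. Rescaling $s=\psi t$ gives $\mathcal{G}^{(-)}=\psi\int_0^1\sqrt{e^{Y(\psi)}-e^{Y(\psi t)}}\,dt$; expanding $e^{Y(\psi)}-e^{Y(\psi t)}=b_1\psi(1-t)(1+O(\psi))$ in powers of $\psi$ via $Y(u)=b_1 u+b_2 u^2+b_3 u^3+\cdots$, taking the square root, and integrating term by term against $(1-t)^{1/2}$ produces the Beta integrals $\int_0^1 t^k(1-t)^{1/2}\,dt$ (giving $\tfrac23,\tfrac{4}{15},\dots$), so that $\mathcal{G}^{(-)}=b_1^{1/2}\psi^{3/2}(\tfrac23+c_1\psi+c_2\psi^2+\cdots)$ with coefficients $c_i$ explicit polynomials in $b_2/b_1^2$ and $b_3/b_1^3$. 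In parallel I would expand the denominator $e^{Y(\psi)}-e^x=(b_1\psi-x)+((b_2+\tfrac{b_1^2}{2})\psi^2-\tfrac{x^2}{2})+\cdots$. Since $\psi\sim x$ at the optimum, obtaining $\mathcal{I}$ through $O(x^4)$ requires the numerator series to relative order $\psi^2$ and the denominator to matching order, which is precisely what brings in $b_1,b_2,b_3$ and accounts for the stated requirement that $\sigma$ be twice differentiable.

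Finally I would solve the optimization perturbatively. Writing the first–order condition, which after the substitution becomes $2(\mathcal{G}^{(-)})'(e^{Y(\psi)}-e^x)=\mathcal{G}^{(-)}Y'(\psi)e^{Y(\psi)}$, and inserting the series, one solves for the minimizer in the form $\psi_\star=\tfrac{3x}{2b_1}(1+\gamma_1 x+\gamma_2 x^2+\cdots)$ with $\gamma_i$ determined recursively; because the objective is stationary at $\psi_\star$, an error of order $x^m$ in $\psi_\star$ perturbs the value only at order $x^{2m}$, so $\psi_\star$ is needed only to its first few corrections. Substituting $\psi_\star$ back into $\tfrac12(\mathcal{G}^{(-)})^2/(e^{Y(\psi)}-e^x)$ and re–expanding in $x$ yields $\mathcal{I}(K,S_0)=\tfrac{1}{b_1^2}\{\tfrac32 x^2+(\cdots)x^3+(\cdots)x^4+O(x^5)\}$, and reading off the brackets gives \eqref{RFexp}. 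As a consistency check, setting $b_2=b_3=0$ and $b_1=\sigma$ collapses this to $\tfrac{1}{\sigma^2}\mathcal{J}_{\rm BS}$ with $\mathcal{J}_{\rm BS}=\tfrac32 x^2-\tfrac{3}{10}x^3+\tfrac{109}{1400}x^4+\cdots$, in agreement with \eqref{TaylorJ}. The main obstacle is not conceptual but the bookkeeping: propagating the coupled expansions in $\psi$ and in $x$ to sufficiently high order, correctly inverting the power series $Y=Z^{-1}$ and substituting it into the integrand, and evaluating the resulting Beta integrals without algebraic slips. The admissibility of exchanging the infimum with the local expansion, and the uniqueness of the near–ATM minimizer, are guaranteed by Propositions~\ref{prop:1} and~\ref{prop:2}.
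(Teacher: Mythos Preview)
Your proposal is correct and follows essentially the same route as the paper: both introduce the change of variable to $Z$-coordinates via $Y=Z^{-1}$, expand the integrals appearing in the rate function in powers of the small parameter $z_1$ (your $\psi$), solve for this parameter as a series in $x$, and substitute back. The only difference is in packaging: the paper starts from the already-solved form of Proposition~\ref{VarProp}, expanding the two integrals $G^{(-)}(f_1)$ and $F^{(-)}(f_1)$ separately and using the closed-form product $\tfrac12 F^{(-)}G^{(-)}$ together with the explicit equation \eqref{eqf1}, whereas you start one step earlier from the one-dimensional variational formulation of Proposition~\ref{prop:1} and redo the first-order condition perturbatively. Your observation that stationarity of the objective at $\psi_\star$ upgrades an $O(x^m)$ error in the minimizer to an $O(x^{2m})$ error in the value is a genuine computational shortcut (it lets you truncate $\psi_\star$ earlier), which the paper does not invoke; conversely, the paper's use of the product $\tfrac12 F^{(-)}G^{(-)}$ avoids having to expand the awkward denominator $e^{Y(\psi)}-e^x$ with its mixed dependence on $\psi$ and $x$.
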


\begin{remark}
For the Black-Scholes model we have $\sigma(S)=\sigma$ which gives
$Z(y)=\frac{1}{\sigma} y$ and thus $Y(z) = \sigma z$. This gives
$b_1=\sigma$ and $b_j=0$ for $j>1$. This recovers the expansion of the
rate function in  the Black-Scholes model given in equation~(\ref{TaylorJ}).
\end{remark}

\begin{corollary}
Assume that $\sigma(\cdot)$ is twice differentiable. 
The expansion (\ref{RFexp}) of the rate function $\mathcal{I}(K,S_0)$ for Asian
options in the local volatility model (\ref{dynamics}) with local volatility 
function $\sigma(S)$ is given in a more explicit form as
\begin{eqnarray}\label{RFexp2}
&& \mathcal{I}(K,S_0) = \frac{1}{\sigma^2(S_0)} \left\{
\frac32 x^2 + \left( - \frac{3}{10} - \frac{9}{5} S_0 \frac{\sigma'(S_0)}{\sigma(S_0)} 
\right) x^3  \right. \\
&& \left. + \left( \frac{109}{1400} - \frac{153}{350} S_0 \frac{\sigma'(S_0)}{\sigma(S_0)} +
\frac{333}{175} S_0^2 \left( \frac{\sigma'(S_0)}{\sigma(S_0)} \right)^2 -
\frac{27}{35} S_0^2 \frac{\sigma''(S_0)}{\sigma(S_0)} \right) x^4 + O(x^5) \right\}\,. \nonumber
\end{eqnarray}
\end{corollary}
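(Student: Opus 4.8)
The plan is to reduce the corollary to Proposition~\ref{LVexpRF} by a finite power-series computation: all that is needed is to re-express the reversion coefficients $b_i$ entering \eqref{RFexp} in terms of the derivatives of $\sigma$ at $S_0$. First I would Taylor-expand the integrand of $Z(y)$ in \eqref{Zdef}. Writing $\sigma(S_0 e^w) = \sigma(S_0) + S_0\sigma'(S_0)\,w + \tfrac12\bigl(S_0\sigma'(S_0) + S_0^2\sigma''(S_0)\bigr)w^2 + O(w^3)$, inverting this as a power series in $w$, and integrating term by term gives the first three Taylor coefficients of $Z$,
\begin{equation*}
a_1 = \frac{1}{\sigma(S_0)}, \qquad a_2 = -\frac{S_0\sigma'(S_0)}{2\sigma(S_0)^2}, \qquad a_3 = \frac{1}{6}\left( \frac{2 S_0^2 \sigma'(S_0)^2}{\sigma(S_0)^3} - \frac{S_0\sigma'(S_0) + S_0^2\sigma''(S_0)}{\sigma(S_0)^2} \right).
\end{equation*}
Only $a_1,a_2,a_3$ are needed, so $Z$ need only be of class $C^3$, which is exactly where the hypothesis that $\sigma$ is twice differentiable enters.

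Next, since $Y=Z^{-1}$, I would feed the truncated series for $Z$ into the standard series-reversion (Lagrange inversion) formulas, which give $b_1 = 1/a_1 = \sigma(S_0)$, $b_2 = -a_2/a_1^3$, and $b_3 = (2a_2^2 - a_1 a_3)/a_1^5$. Combining these with the expressions for $a_i$ above produces the dimensionless ratios that actually appear in \eqref{RFexp}:
\begin{equation*}
\frac{1}{b_1^2} = \frac{1}{\sigma(S_0)^2}, \qquad \frac{b_2}{b_1^2} = \frac{1}{2}\, S_0\frac{\sigma'(S_0)}{\sigma(S_0)}, \qquad \frac{b_3}{b_1^3} = \frac{1}{6}\left( S_0^2 \frac{\sigma'(S_0)^2}{\sigma(S_0)^2} + S_0\frac{\sigma'(S_0)}{\sigma(S_0)} + S_0^2 \frac{\sigma''(S_0)}{\sigma(S_0)} \right),
\end{equation*}
together with $b_2^2/b_1^4 = (b_2/b_1^2)^2 = \tfrac14\, S_0^2 (\sigma'(S_0)/\sigma(S_0))^2$.

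Finally I would substitute these four quantities into the expansion \eqref{RFexp} of Proposition~\ref{LVexpRF} and collect, in the $x^3$ and $x^4$ terms, the coefficients of the monomials $1$, $S_0\sigma'/\sigma$, $S_0^2(\sigma'/\sigma)^2$, and $S_0^2\sigma''/\sigma$. For instance the $x^3$ coefficient becomes $-\tfrac{3}{10} - \tfrac{18}{5}\cdot\tfrac12\, S_0\sigma'/\sigma = -\tfrac{3}{10} - \tfrac95\, S_0\sigma'/\sigma$, and within the $x^4$ coefficient the $S_0\sigma'/\sigma$ part is $\tfrac{117}{175}\cdot\tfrac12 - \tfrac{162}{35}\cdot\tfrac16 = \tfrac{117}{350} - \tfrac{270}{350} = -\tfrac{153}{350}$, the $S_0^2(\sigma'/\sigma)^2$ part is $\tfrac{1872}{175}\cdot\tfrac14 - \tfrac{162}{35}\cdot\tfrac16 = \tfrac{468}{175} - \tfrac{135}{175} = \tfrac{333}{175}$, and the $S_0^2\sigma''/\sigma$ part is $-\tfrac{162}{35}\cdot\tfrac16 = -\tfrac{27}{35}$, which together reproduce \eqref{RFexp2}. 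There is no genuine analytic obstacle here, since the whole argument is a mechanical, finite power-series manipulation; the only real risk is arithmetic bookkeeping in the $x^4$ term, where three contributions ($b_2/b_1^2$, $b_2^2/b_1^4$, and $b_3/b_1^3$) must be combined correctly, and the way to keep this under control is to track each of the four monomials separately.
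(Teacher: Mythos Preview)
Your proposal is correct and follows essentially the same approach as the paper: compute the Taylor coefficients $a_1,a_2,a_3$ of $Z(y)$ by differentiating the integrand at $y=0$, invert the series via the standard reversion formulas $b_1=1/a_1$, $b_2=-a_2/a_1^3$, $b_3=2a_2^2/a_1^5-a_3/a_1^4$, and substitute into \eqref{RFexp}. Your additional step of isolating the dimensionless ratios $b_2/b_1^2$, $b_2^2/b_1^4$, $b_3/b_1^3$ and tracking the four monomials separately in the $x^4$ term is a clean bookkeeping device, but the underlying computation is identical to the paper's.
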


This is obtained by noting that the coefficients $a_i$ can be obtained by taking 
derivatives with respect to $y$ of (\ref{Zdef}) at $y=0$. We get
\begin{eqnarray}
a_1 &=& \frac{1}{\sigma(S_0)}\, ,\\
a_2 &=& - \frac12 S_0 \frac{\sigma'(S_0)}{\sigma^2(S_0)}\, ,\\
a_3 &=& - \frac16 S_0 \frac{\sigma'(S_0)}{\sigma^2(S_0)} + \frac13 S_0^2 
\frac{[\sigma'(S_0)]^2}{\sigma^3(S_0)} - 
\frac16 S_0^2 \frac{\sigma''(S_0)}{\sigma^2(S_0)}\, .
\end{eqnarray}
Inverting the Taylor series (\ref{Zdef}) we find the coefficients $b_i$:
\begin{eqnarray}
&& b_1 = \frac{1}{a_1}  = \sigma(S_0) \, ,\\
&& b_2 = - \frac{a_2}{a_1^3} = \frac12 S_0 \sigma'(S_0) \sigma(S_0) \, ,\\
&& b_3 = \frac{2a_2^2}{a_1^5}- \frac{a_3}{a_1^4}  = 
\frac16 S_0^2 [\sigma'(S_0)]^2 \sigma(S_0) +
\frac16 S_0    \sigma'(S_0) \sigma^2(S_0) +
\frac16 S_0^2  \sigma''(S_0) \sigma^2(S_0) \,.
\end{eqnarray}
Substituting into (\ref{RFexp}) we obtain the result (\ref{RFexp2})
for the rate function of the Asian option in the LV model to order $O(x^3)$,
expressed only in terms of the ATM local volatility and its derivatives.


\section{Implied Volatility and Numerical Tests}
\label{sec:4}

Implied volatility for European options was extensively studied in the 
mathematical finance literature. Due to the lack of a simple closed-form 
formula as in the Black-Scholes model for European options, the implied 
volatility for Asian options was much less studied. Our work on the short 
maturity asymptotics for the price of the Asian options for local volatility 
models can shed some light on the short maturity implied volatility for the 
Asian options.

\subsection{Implied Volatility}

The implied volatility $\sigma_{\text{implied}}$ is defined as the constant
volatility which must be used in the Black-Scholes model for the Asian option
such that its price matches that of the Asian option in the local volatility 
model. That is, we must have
\begin{equation}
C_{BS}(K,S_{0},\sigma_{\text{implied}},T)=C(K,S_{0},T),
\end{equation}
where $C(K,S_{0},T)=e^{-rT}\mathbb{E}[(\frac{1}{T}\int_{0}^{T}S_{t}dt-K)^{+}]$, 
where $S_{t}$ satisfies the dynamics \eqref{dynamics} with the local volatility 
$\sigma(\cdot)$ and $C_{BS}(K,S_{0},\sigma_{\text{implied}},T)=
e^{-rt}\mathbb{E}[(\frac{1}{T}\int_{0}^{T}S_{t}dt-K)^{+}]$, 
where $S_{t}$ satisfies the dynamics \eqref{dynamics} with $\sigma(\cdot)
\equiv\sigma_{\text{implied}}$.

The fact that the implied volatility for an Asian option is
well defined is not a trivial fact. It follows from the fact that the 
Vega of an Asian option in the Black-Scholes model is always positive, 
which was proved in Carr et al. \cite{CarrVega}, and hence, if one considers 
its price as a function of the volatility, the inverse of this function exists. 
As the volatility parameter $\sigma$ is increased from zero to 
infinity, the price of an out-of-the-money Asian option goes from
$C_{BS}(K,S_{0},0,T)=e^{-rT}\left(\frac{1}{T}\int_{0}^{T}S_{0}e^{(r-q)t}dt-K\right)^{+}$
to $C_{BS}(K,S_{0},\infty,T)=e^{-rT}\frac{1}{T}\int_{0}^{T}S_{0}e^{(r-q)t}dt$. 
Hence, the implied volatility $\sigma_{\text{implied}}$ is well defined.
Note that it is trivial to see $C_{BS}(K,S_{0},0,T)=e^{-rT}\left(\frac{1}{T}\int_{0}^{T}S_{0}e^{(r-q)t}dt-K\right)^{+}$
but the statement $C_{BS}(K,S_{0},\infty,T)=e^{-rT}\frac{1}{T}\int_{0}^{T}S_{0}e^{(r-q)t}dt$
is less trivial and we will give a rigorous proof of this statement in Proposition \ref{InfiniteSigma} in the Appendix. 

>From Theorem \ref{MainThm} for out-of-the-money Asian options, 
we can obtain the short-maturity limit of the 
implied volatility 
$\sigma_{\text{implied}}$. For simplicity, we only
give a proof for the case $r=q=0$. We expect
the same result holds for general $r,q$.

\begin{proposition}\label{prop:18}
Assume $r=q=0$ and \eqref{assumpI} and \eqref{assumpII} hold.

(i) The $T\to 0$ limit of the implied volatility of out-of-the-money Asian 
options in the local volatility model (\ref{dynamics}) is given by 
\begin{equation}\label{sigmaimp}
\lim_{T\to 0} \sigma_{\text{implied}}^{2}(K,S_{0},T)
= \frac{\mathcal{J}_{\rm BS}(K/S_0)}{\mathcal{I}(K,S_0)}\,,
\end{equation}
where $\mathcal{I}(K,S_0)$ is given in Proposition~\ref{VarProp} and
$\mathcal{J}_{\rm BS}(K/S_0)$ is given in Proposition~\ref{RateFunctionBS}.

(ii) Under the same assumptions on $\sigma(\cdot)$ as in Theorem~\ref{ATMThm},
the $T\to 0$ limit of the implied volatility of at-the-money Asian options 
in the local volatility model (\ref{dynamics}) is given by 
\begin{equation}
\lim_{T\rightarrow 0}\sigma_{\text{implied}}(K,S_{0},T)
=\sigma(S_{0}).
\end{equation}
\end{proposition}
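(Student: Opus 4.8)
The plan is to reduce everything to the leading-order asymptotics already established — Theorem~\ref{MainThm} in the out-of-the-money case and Theorem~\ref{ATMThm} in the at-the-money case — applied simultaneously to the local volatility model and, as the special constant-volatility instance, to the Black-Scholes model, combined with the strict monotonicity of the Black-Scholes Asian price in $\sigma$ (positive Vega, \cite{CarrVega}). The key structural point is that one should \emph{not} try to substitute $\sigma = \sigma_{\text{implied}}(T)$ directly into the Black-Scholes asymptotics and equate exponents, since those asymptotics are only pointwise in $\sigma$ and not uniform; instead I would bracket the claimed limit between two fixed volatilities and use monotonicity to squeeze.

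For part (i), fix $K\neq S_0$ and set $\sigma_* := \bigl( \mathcal{J}_{\rm BS}(K/S_0)/\mathcal{I}(K,S_0) \bigr)^{1/2}$, which lies in $(0,\infty)$ because $\mathcal{J}_{\rm BS}(K/S_0)>0$ (immediate from \eqref{JBSresult}, using $\tanh(\beta/2)<\beta/2$ for $\beta>0$ and $\tan\xi>\xi$ for $\xi\in(0,\pi/2)$) and $\mathcal{I}(K,S_0)>0$ (clear from the representation in Proposition~\ref{prop:1}, where each summand in the infimum is strictly positive and the function blows up at the endpoints). Given $\epsilon>0$, I would apply Theorem~\ref{MainThm} to the Black-Scholes model at the two constant volatilities $\sigma_*\pm\epsilon$, using Proposition~\ref{RateFunctionBS} to identify the corresponding rate functions as $\mathcal{J}_{\rm BS}(K/S_0)/(\sigma_*\pm\epsilon)^2$, and to the local volatility model for $C(K,S_0,T)$ itself. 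Since
\[
\frac{\mathcal{J}_{\rm BS}(K/S_0)}{(\sigma_*+\epsilon)^2} < \mathcal{I}(K,S_0) < \frac{\mathcal{J}_{\rm BS}(K/S_0)}{(\sigma_*-\epsilon)^2}
\]
with strict gaps of fixed size, multiplying the exponents by $T$ shows that for all sufficiently small $T$,
\[
C_{\rm BS}(K,S_0,\sigma_*-\epsilon,T) < C(K,S_0,T) < C_{\rm BS}(K,S_0,\sigma_*+\epsilon,T).
\]
Because $\sigma\mapsto C_{\rm BS}(K,S_0,\sigma,T)$ is strictly increasing and $\sigma_{\text{implied}}$ is well-defined (as recalled before the statement, via \cite{CarrVega} and Proposition~\ref{InfiniteSigma}), this forces $\sigma_*-\epsilon < \sigma_{\text{implied}}(K,S_0,T) < \sigma_*+\epsilon$ for small $T$; letting $\epsilon\downarrow 0$ gives $\sigma_{\text{implied}}(K,S_0,T)\to\sigma_*$, which is \eqref{sigmaimp}.

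Part (ii) runs identically with Theorem~\ref{ATMThm} in place of Theorem~\ref{MainThm}: for $K=S_0$ and $r=q=0$ that theorem gives $C(S_0,S_0,T) = \sigma(S_0)S_0\sqrt{T/(6\pi)}\,(1+o(1))$, and applied to the Black-Scholes model it gives $C_{\rm BS}(S_0,S_0,\sigma,T) = \sigma S_0\sqrt{T/(6\pi)}\,(1+o(1))$ for each fixed $\sigma$. Dividing by $S_0\sqrt{T/(6\pi)}$ and bracketing with $\sigma(S_0)\pm\epsilon$ exactly as above — noting that the at-the-money implied volatility is well-defined for $r=q=0$ since $C_{\rm BS}(S_0,S_0,\sigma,T)$ increases from $0$ to $S_0$ as $\sigma$ ranges over $(0,\infty)$ while $0<C(S_0,S_0,T)<S_0$ — yields $|\sigma_{\text{implied}}(S_0,S_0,T)-\sigma(S_0)|\le\epsilon$ for all small $T$, hence the claim. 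The only genuine obstacle in this argument is the pointwise-versus-uniform issue flagged in the first paragraph, which the fixed-bracketing squeeze is designed to circumvent; everything else — positivity of the rate functions, well-definedness and strict monotonicity of the implied volatility — is either elementary or already recorded in the text.
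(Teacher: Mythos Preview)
Your proof is correct and takes a genuinely different route from the paper's. The paper exploits the Brownian scaling property specific to the Black--Scholes model: since $C_{\rm BS}(K,S_0,\sigma,T)$ depends on $\sigma$ and $T$ only through the combination $\sigma^2 T$, one can write it as $\tilde C_A^{(BS)}(K,S_0,\sigma^2 T)$, verify that $\sigma_{\text{implied}}^2 T\to 0$, apply Theorem~\ref{MainThm} (with $\sigma=1$) along the sequence $\tau=\sigma_{\text{implied}}^2 T\to 0$ to get $\tau\log\tilde C_A^{(BS)}(K,S_0,\tau)\to -\mathcal{J}_{\rm BS}(K/S_0)$, and then take the ratio with $T\log C(K,S_0,T)\to -\mathcal{I}(K,S_0)$. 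This scaling trick is precisely what legitimises substituting the $T$-dependent $\sigma_{\text{implied}}$ into a pointwise asymptotic, the issue you flagged. Your sandwich argument sidesteps that issue differently: by freezing the comparison volatilities at $\sigma_*\pm\epsilon$, you only ever invoke the asymptotics at fixed parameters and let monotonicity do the rest. The paper's argument is shorter and yields the limit as an explicit ratio; yours is more elementary and more portable, since it would apply to any benchmark model with strictly monotone price in the volatility parameter, without requiring a scaling identity.
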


\subsection{Equivalent Black-Scholes and Bachelier Volatility of Asian Options}

One can define an equivalent Black-Scholes volatility of an Asian option,
as that value of the volatility for which the Black-Scholes price of an
European (vanilla) option with maturity $T$ and underlying value $A(T)$ reproduces the price of the Asian
option with the same maturity $T$. We will denote this volatility as 
$\Sigma_{\rm LN}(K,S_0,T)$. We have thus
\begin{eqnarray}\label{Casympt}
C(K,S_0,T) &=& e^{-rT} [A(T) \Phi(d_1) - K \Phi(d_2)] \\
P(K,S_0,T) &=& e^{-rT} [K \Phi(-d_2) - A(T) \Phi(-d_1)] \nonumber
\end{eqnarray}
where $A(T)$ is given in (\ref{Adef})
and $d_{1,2} = \frac{1}{\Sigma_{LN}\sqrt{T}} ( \log(A(T)/K) \pm \frac12\Sigma^2_{LN}T)$.
We remark that this is a natural definition since
$A(T)$ is the forward price of the underlying of the Asian option $\mathbb{E}[\frac{1}{T}\int_{0}^{T}S_{t}dt]$.
Using (\ref{Casympt}) ensures that put-call parity for Asian options 
(\ref{PCparity}) holds; this would not hold
for example if one used $S_{0}$ instead of $A(T)$. 
The relation (\ref{Casympt}) also reproduces correctly the price of an
Asian call option with zero strike $K=0$: $C(0,S_{0},T) = e^{-rT} 
\mathbb{E}[\frac{1}{T} \int_0^T S_t dt] = e^{-rT} A(T)$.
A similar equivalent normal volatility $\Sigma_{\rm N}(K,S_0,T)$ of an Asian 
option can be defined in terms of the Bachelier option pricing formula.

The equivalent log-normal volatility $\Sigma_{LN}$ defined as in (\ref{Casympt})
exists for any Asian call option price $C(K,S_0,T)$ satisfying the 
bounds $(A(T) - K)^+ \leq e^{rT} C(K,S_0,T) \leq A(T)$ \cite{RR}. 
These bounds are indeed satisfied for Asian options under the local volatility 
model (\ref{dynamics}). The lower bound is satisfied by the convexity of the 
payoff $(x-K)^+$, and the upper bound follows from $(x-K)^+ \leq x$.

Although Asian options are quoted in practice by price, and not implied
volatility, such equivalent volatilities are a convenient representation 
of the short maturity asymptotics of the Asian option prices. Also, they give
a natural formulation for the small-maturity asymptotics of the Asian options
in Proposition~\ref{VarProp}, which is equivalent to a small-maturity limit 
for these volatilities.
This result is somewhat similar to the representation of the small maturity 
asymptotics for European options in the local volatility model in terms of 
the BBF formula for the implied volatility \cite{BBF}.

The short maturity asymptotics for out-of-the-money Asian options given in
Theorem~\ref{MainThm} gives the following short-time asymptotics 
for the equivalent volatilities of the Asian options in the local volatility
model (\ref{dynamics}). For simplicity, we only prove the $r=q=0$ case.
We expect the same result holds for general $r,q$.

\begin{proposition}\label{propSigBS}
Assume $r=q=0$ and \eqref{assumpI} and \eqref{assumpII} hold.

(i) The short-time limit $T\to 0$ of the Black-Scholes equivalent volatility
of an out-of-the-money Asian option is given by
\begin{equation}\label{sigBSLV}
\lim_{T\to 0} \Sigma_{\rm LN}^2(K,S_0,T) = \frac12 
\frac{\log^2\left(\frac{K}{S_0}\right)}{\mathcal{I}(K,S_0)},
\end{equation}
and the corresponding result for the Bachelier equivalent volatility is
\begin{equation}\label{sigNLV}
\lim_{T\to 0} \Sigma_{\rm N}^2(K,S_0,T) = \frac12 
\frac{\left(\frac{K}{S_0}-1\right)^2}{\mathcal{I}(K,S_0)},
\end{equation}
where $\mathcal{I}(K,S_0)$ is given in Proposition~\ref{VarProp}.

(ii) Under assumptions of Theorem \ref{ATMThm} on $\sigma(\cdot)$, 
the short-time limit $T\to 0$ of the Black-Scholes equivalent volatility
of an at-the-money Asian option is given by
\begin{equation}
\lim_{T\to 0} \Sigma_{\rm LN}(K,S_0,T) = \frac{1}{\sqrt{3}}\sigma(S_{0}),
\end{equation}
and the corresponding result for the Bachelier equivalent volatility is
\begin{equation}
\lim_{T\to 0} \Sigma_{\rm N}(K,S_0,T) = \frac{1}{\sqrt{3}}\sigma(S_{0})S_{0}.
\end{equation}
\end{proposition}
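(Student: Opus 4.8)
The plan is to read off both equivalent volatilities directly from the short-maturity price asymptotics already established, combining Theorems~\ref{MainThm} and~\ref{ATMThm} with the classical short-maturity behaviour of the Black--Scholes and Bachelier European pricing formulas, and using that these formulas are strictly increasing in the volatility parameter (the Vega positivity recalled above) to run a squeeze. Since we assume $r=q=0$ we have $A(T)=S_0$, so the defining relation \eqref{Casympt} becomes $C(K,S_0,T)=S_0\Phi(d_1)-K\Phi(d_2)$ with $d_{1,2}=(\Sigma_{\rm LN}\sqrt T)^{-1}(\log(S_0/K)\pm\tfrac12\Sigma_{\rm LN}^2 T)$, which determines $\Sigma_{\rm LN}(K,S_0,T)$ uniquely, and the Bachelier relation with forward $A(T)=S_0$ determines $\Sigma_{\rm N}(K,S_0,T)$ uniquely in the same way. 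We write $C_{\rm BS}(K,S_0,\Sigma,T)$ and $C_{\rm Bach}(K,S_0,\Sigma,T)$ for these two pricing formulas viewed as functions of the volatility $\Sigma$.

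For the out-of-the-money case (i) I would first record that $0<\mathcal I(K,S_0)<\infty$: finiteness follows by inserting the linear path $g(t)=\log S_0+ct$ into \eqref{IRate} with $c$ chosen so that $S_0(e^c-1)/c=K$, and positivity follows from Proposition~\ref{VarProp} since the integrals $F^{(\pm)},G^{(\pm)}$ are strictly positive once $K\neq S_0$. Next, for each \emph{fixed} $\Sigma>0$ and $K\neq S_0$ one has the elementary short-maturity asymptotics $T\log C_{\rm BS}(K,S_0,\Sigma,T)\to-\tfrac12\log^2(K/S_0)/\Sigma^2$ and $T\log C_{\rm Bach}(K,S_0,\Sigma,T)\to-\tfrac12(K-S_0)^2/\Sigma^2$ as $T\to0$ (Mills-ratio asymptotics of $\Phi$, i.e.\ the small-time large deviations of geometric resp.\ arithmetic Brownian motion), whereas Theorem~\ref{MainThm} gives $T\log C(K,S_0,T)\to-\mathcal I(K,S_0)$. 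Fix $\varepsilon\in(0,\mathcal I(K,S_0))$ and let $\Sigma_\mp$ be defined by $\tfrac12\log^2(K/S_0)/\Sigma_\mp^2=\mathcal I(K,S_0)\pm\varepsilon$, so that $\Sigma_-<\Sigma_+$. Then $T\log C_{\rm BS}(K,S_0,\Sigma_-,T)\to-(\mathcal I+\varepsilon)<-\mathcal I$, so $C_{\rm BS}(K,S_0,\Sigma_-,T)<C(K,S_0,T)$ for all small $T$, whence $\Sigma_{\rm LN}(K,S_0,T)>\Sigma_-$ by monotonicity; symmetrically $\Sigma_{\rm LN}(K,S_0,T)<\Sigma_+$. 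Letting $\varepsilon\downarrow0$ gives $\Sigma_{\rm LN}^2(K,S_0,T)\to\tfrac12\log^2(K/S_0)/\mathcal I(K,S_0)$, which is \eqref{sigBSLV}; the identical argument with the Bachelier exponent $(K-S_0)^2$ in place of $\log^2(K/S_0)$ yields \eqref{sigNLV}.

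For the at-the-money case (ii), $K=S_0$, so $C_{\rm BS}(S_0,S_0,\Sigma,T)=S_0\bigl(2\Phi(\tfrac12\Sigma\sqrt T)-1\bigr)$, giving $T^{-1/2}C_{\rm BS}(S_0,S_0,\Sigma,T)\to S_0\Sigma/\sqrt{2\pi}$, and $C_{\rm Bach}(S_0,S_0,\Sigma,T)=\Sigma\sqrt T\,\phi(0)$, giving $T^{-1/2}C_{\rm Bach}(S_0,S_0,\Sigma,T)=\Sigma/\sqrt{2\pi}$, for each fixed $\Sigma$. Theorem~\ref{ATMThm} gives $T^{-1/2}C(S_0,S_0,T)\to\sigma(S_0)S_0/\sqrt{6\pi}$. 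Since $S_0\Sigma/\sqrt{2\pi}=\sigma(S_0)S_0/\sqrt{6\pi}$ has the unique root $\Sigma=\sigma(S_0)/\sqrt3$, the same monotonicity-and-squeeze argument (comparing $C(S_0,S_0,T)$ with $C_{\rm BS}(S_0,S_0,\Sigma_\pm,T)$ for $\Sigma_-<\sigma(S_0)/\sqrt3<\Sigma_+$) forces $\Sigma_{\rm LN}(S_0,S_0,T)\to\sigma(S_0)/\sqrt3$; the Bachelier case is identical and gives $\Sigma_{\rm N}(S_0,S_0,T)\to\sigma(S_0)S_0/\sqrt3$.

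I do not expect a single genuinely hard step: Theorems~\ref{MainThm} and~\ref{ATMThm} are calibrated to exactly the precision the squeeze needs --- logarithmic ($e^{-\mathcal I/T+o(1/T)}$) asymptotics in the OTM case and sharp $\sqrt T$-prefactor asymptotics in the ATM case --- and monotonicity of the pricing formulas in $\Sigma$ does the rest. The points that require care are: writing out the fixed-$\Sigma$ short-maturity asymptotics of $C_{\rm BS}$ and $C_{\rm Bach}$ at the stated precision (routine, via Mills' ratio); checking $0<\mathcal I(K,S_0)<\infty$ so that the bracketing interval $(\Sigma_-,\Sigma_+)$ is nondegenerate; and observing that the restriction $r=q=0$ is precisely what makes $A(T)=S_0$, so that the moneyness entering the Black--Scholes/Bachelier exponent is exactly $K/S_0$ --- for general $r,q$ one has $A(T)=S_0+O(T)$, contributing only subleading corrections, which is why the same limits are expected to hold. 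The put statements follow either verbatim from the same arguments applied to $P$, or from put--call parity \eqref{PCparity}, which \eqref{Casympt} respects by construction.
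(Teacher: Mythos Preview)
Your proof is correct and uses the same inputs as the paper --- Theorem~\ref{MainThm} for the OTM logarithmic rate, Theorem~\ref{ATMThm} for the ATM $\sqrt{T}$ prefactor, and the standard small-time asymptotics of the European Black--Scholes and Bachelier formulas --- but your packaging is different. The paper exploits the fact that under $r=q=0$ the European price is a function of the single variable $\Sigma^2 T$, first argues $\Sigma_{\rm LN}^2(K,S_0,T)T\to 0$, and then writes
\[
\Sigma_{\rm LN}^2(K,S_0,T)=\frac{\bigl(\Sigma_{\rm LN}^2 T\bigr)\,\log \tilde C_E^{(BS)}(S_0,K,\Sigma_{\rm LN}^2 T)}{T\log C_A(S_0,K,T)}\longrightarrow \frac{\tfrac12\log^2(K/S_0)}{\mathcal I(K,S_0)},
\]
i.e.\ a ratio of two limits; for the Bachelier case it then invokes the known short-maturity relation between log-normal and normal implied volatilities rather than repeating the computation. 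You instead run a direct monotonicity-and-squeeze: bracket the target with fixed volatilities $\Sigma_-<\Sigma_+$, compare prices, and let the bracket shrink. Your route is slightly more elementary and self-contained (it handles the Bachelier case on its own and does not need the preliminary step $\Sigma_{\rm LN}^2 T\to 0$, which falls out of the sandwich), while the paper's ratio argument is more compact once the scaling observation $\tilde C_E^{(BS)}=\tilde C_E^{(BS)}(\Sigma^2 T)$ is in hand. Both are standard ways of translating price asymptotics into implied-volatility asymptotics, and both rest on exactly the same substantive results.
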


In the Black-Scholes model we can get explicit results for the equivalent
volatilities, using the result for the rate function 
$\mathcal{J}_{\rm BS}(K/S_0)$ derived in Proposition~\ref{RateFunctionBS}. 
Denote the corresponding equivalent volatilities $\Sigma_{\rm LN}^{(BS)}$
and $\Sigma_{\rm N}^{(BS)}$. Their expansions about the ATM point in
powers of log-strike $x=\log(K/S_0)$ are 
\begin{equation}\label{SigBSexp}
\Sigma_{\rm LN}^{(BS)}(K/S_0) = \frac{1}{\sqrt3} \sigma \left( 1 + \frac{1}{10} x -
\frac{23}{2100} x^2 + \frac{1}{3500} x^3 
+ O(x^4) \right),
\end{equation}
and in powers of $k=K/S_0-1$, respectively
\begin{equation}
\Sigma_{\rm N}^{(BS)}(K/S_0) = \frac{1}{\sqrt3} \sigma S_0  
\left( 1 + \frac35 k - \frac{33}{350} k^2 + \frac{83}{1750} k^3 + O(k^4)
\right)\,.
\end{equation}

Using the equation (\ref{RFexp2}) we can derive similar results for a more
general local volatility function $\sigma(S)$. From the first three
terms in the Taylor expansion of the rate function in powers of log-strike
we can get explicit results for the ATM equivalent volatility, the skew and
smile convexity at the ATM point. We use $\Sigma_{\rm LN(N)}(K,S_{0})$ to denote 
$\lim_{T\rightarrow 0}\Sigma_{\rm LN(N)}(K,S_{0},T)$.

\begin{proposition}\label{prop:SigBSLV}
Assume $\sigma(\cdot)$ is twice differentiable. 
The small maturity limit $T\to 0$ of the log-normal equivalent volatility
for an Asian option in the local volatility model (\ref{dynamics}) has the
expansion in $x=\log(K/S_0)$ around the ATM point
\begin{eqnarray}
&& \Sigma_{\rm LN}(K,S_0) = \frac{1}{\sqrt3} \sigma(S_0)\left\{
1 + \left( \frac{1}{10} + \frac35 S_0 \frac{\sigma'(S_0)}{\sigma(S_0)} \right)x 
\right. \\
&& + \left.
\left(-\frac{23}{2100} + \frac{57}{175} S_0 \frac{\sigma'(S_0)}{\sigma(S_0)}
- \frac{33}{350} S_0^2 \left(\frac{\sigma'(S_0)}{\sigma(S_0)}\right)^2
+ \frac{9}{35} S_0^2 \frac{\sigma''(S_0)}{\sigma(S_0)} \right) x^2 + O(x^3) \right\}\, .
\nonumber
\end{eqnarray}
The corresponding expansion for the normal equivalent volatility of an Asian
option has the expansion in $k=\frac{K}{S_0}-1$ around the ATM point
\begin{eqnarray}
&& \Sigma_{\rm N}(K,S_0) = \frac{1}{\sqrt3} S_0 \sigma(S_0)\left\{
1 + \left( \frac{3}{5} + \frac35 S_0 \frac{\sigma'(S_0)}{\sigma(S_0)} \right)k 
\right. \\
&& + \left.
\left(-\frac{33}{350} + \frac{57}{175} S_0 \frac{\sigma'(S_0)}{\sigma(S_0)}
- \frac{33}{350} S_0^2 \left(\frac{\sigma'(S_0)}{\sigma(S_0)}\right)^2
+ \frac{9}{35} S_0^2 \frac{\sigma''(S_0)}{\sigma(S_0)} \right) k^2 + O(k^3) \right\}\, .
\nonumber
\end{eqnarray}
\end{proposition}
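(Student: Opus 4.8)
The plan is to read off the result from Proposition~\ref{propSigBS}(i) combined with the expansion of the rate function in \eqref{RFexp2}, and then carry out a routine inversion and re-expansion of power series. For the log-normal equivalent volatility, note first that by Proposition~\ref{propSigBS}(i), for $K\neq S_0$ one has $\lim_{T\to 0}\Sigma_{\rm LN}^2(K,S_0,T)=\tfrac12 x^2/\mathcal{I}(K,S_0)$ with $x=\log(K/S_0)$, and this extends continuously to $K=S_0$ with value $\tfrac13\sigma^2(S_0)$ by Proposition~\ref{propSigBS}(ii), so the square root is unambiguous near the ATM point. Writing \eqref{RFexp2} as $\mathcal{I}(K,S_0)=\tfrac{1}{\sigma^2(S_0)}\bigl(\tfrac32 x^2+c_3 x^3+c_4 x^4+O(x^5)\bigr)$, with $c_3=-\tfrac{3}{10}-\tfrac95 S_0\tfrac{\sigma'(S_0)}{\sigma(S_0)}$ and $c_4$ the order-$x^4$ coefficient there, gives $\tfrac12 x^2/\mathcal{I}=\tfrac{\sigma^2(S_0)}{3}\bigl(1+\tfrac23 c_3 x+\tfrac23 c_4 x^2+O(x^3)\bigr)^{-1}$. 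Expanding the reciprocal as a geometric series to order $x^2$ and then using $\sqrt{1+u}=1+\tfrac12 u-\tfrac18 u^2+\cdots$ yields $\Sigma_{\rm LN}(K,S_0)=\tfrac{1}{\sqrt3}\sigma(S_0)\bigl(1+\alpha_1 x+\alpha_2 x^2+O(x^3)\bigr)$; one checks $\alpha_1=-\tfrac13 c_3=\tfrac{1}{10}+\tfrac35 S_0\tfrac{\sigma'(S_0)}{\sigma(S_0)}$, and that $\alpha_2$ collapses to the stated combination of $\sigma'(S_0)/\sigma(S_0)$, $(\sigma'(S_0)/\sigma(S_0))^2$ and $\sigma''(S_0)/\sigma(S_0)$ after substituting the explicit $c_3,c_4$.

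For the normal (Bachelier) equivalent volatility the only structural difference is the change of expansion variable. Starting from the analogous formula for $\lim_{T\to 0}\Sigma_{\rm N}^2(K,S_0,T)$ in Proposition~\ref{propSigBS}(i), which is naturally expressed through $k=K/S_0-1$, I would substitute $x=\log(1+k)=k-\tfrac12 k^2+\tfrac13 k^3-\cdots$ into the expansion of $\mathcal{I}$ \emph{before} re-expanding. Using $x^2=k^2-k^3+\tfrac{11}{12}k^4+\cdots$ and $x^3=k^3-\tfrac32 k^4+\cdots$, one rewrites $\mathcal{I}(K,S_0)=\tfrac{1}{\sigma^2(S_0)}\bigl(\tfrac32 k^2+\tilde c_3 k^3+\tilde c_4 k^4+\cdots\bigr)$ (in the Black-Scholes case this reproduces \eqref{JTaylork}, a useful consistency check), and then the same geometric-series-plus-square-root manipulation produces $\Sigma_{\rm N}(K,S_0)=\tfrac{1}{\sqrt3}S_0\sigma(S_0)\bigl(1+\beta_1 k+\beta_2 k^2+O(k^3)\bigr)$. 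Here the leading correction $\beta_1=\tfrac35+\tfrac35 S_0\tfrac{\sigma'(S_0)}{\sigma(S_0)}$ — the constant $\tfrac35$ in place of the $\tfrac{1}{10}$ of the log-normal case arising precisely from the $-\tfrac12 k^2$ term in $\log(1+k)$ — and $\beta_2$ is verified to equal the displayed expression.

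The computation involves no genuine obstacle; the point that demands care is bookkeeping the order of truncation. Since $\mathcal{I}$ begins at $\tfrac32 x^2$, obtaining $\Sigma_{\rm LN}$ to order $x^2$ (and $\Sigma_{\rm N}$ to order $k^2$) requires the rate function through order $x^4$, i.e.\ exactly the three correction terms provided by \eqref{RFexp2}, no more and no less. Moreover the change of variables in the Bachelier case mixes the $x^3$ and $x^4$ coefficients of $\mathcal{I}$ into the $k^2$ coefficient of $\Sigma_{\rm N}$, so the two expansions are genuinely distinct and cannot be read off from one another by a naive $k\leftrightarrow x$ substitution — this is the step most prone to arithmetic slips. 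As a final check, setting $\sigma'(S_0)=\sigma''(S_0)=0$ must recover the Black-Scholes expansions \eqref{SigBSexp} and the displayed $\Sigma_{\rm N}^{(BS)}$ formula, which it does.
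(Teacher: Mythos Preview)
Your proposal is correct and matches the paper's approach: the paper does not write out a detailed proof of this proposition, but the surrounding text makes clear that it is obtained exactly as you describe, by substituting the expansion \eqref{RFexp2} of the rate function into the short-maturity formulas of Proposition~\ref{propSigBS} and re-expanding. Your remarks about needing the rate function through order $x^4$, the $x\leftrightarrow k$ change of variable for the Bachelier case, and the Black-Scholes consistency check are all appropriate and correct.
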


The result of Proposition~\ref{prop:SigBSLV}
is similar to the expansion of the implied volatility obtained
in an uncorrelated local-stochastic volatility model \cite{FordeJacquier}
for the implied volatility of vanilla European options, see Theorem 4.1
in \cite{FordeJacquier}, giving  the level, slope and convexity of the small-time
implied volatility. Indeed, this result allows pricing Asian options directly
from the observable European volatility skew, assuming local volatility
dynamics, but without assuming a particular form of the local volatility 
function!

Combining the result of Proposition~\ref{prop:SigBSLV} and the well-known
result for the ATM skew of European options in the local volatility model,
see e.g. \cite{FordeJacquier}, we have:
\begin{remark}
In the short maturity limit,
\begin{align}
&\text{ATM Asian vol}=\frac{1}{\sqrt{3}}\cdot\text{ATM European vol},
\\
&\text{ATM Asian skew slope}=\frac{1}{\sqrt{3}}\cdot
\left[\frac{1}{10}\cdot\text{ATM European vol}
+\frac{6}{5}\cdot\text{ATM European skew}\right]\,,
\end{align}
where the slope is with respect to the log-moneyness.
\end{remark}

\subsection{Numerical Tests}

We present in this section a few numerical tests of the short-maturity
asymptotic results for Asian options obtained in this paper. 

For ATM Asian options we have the result of Theorem~\ref{ATMThm} which 
can be used directly to obtain a price.
For out-of-the-money Asian options, we have the asymptotic results of 
Theorem~\ref{MainThm}. In practice, we find that it is convenient to use 
this result to obtain the short-maturity equivalent log-normal volatility
of an Asian option $\Sigma_{\rm LN}(K,S_0)$ 
(or normal volatility $\Sigma_{\rm N}(K,S_0)$), as given in
Proposition~\ref{propSigBS}.

The equivalent log-normal volatility $\Sigma_{\rm LN}$ given by 
Proposition~\ref{propSigBS} can be used in the Black-Scholes pricing formula 
as in (\ref{Casympt}) to compute Asian option prices.
Although this approach introduces subleading terms in the option price which 
are not constrained by the asymptotic result of Theorem~\ref{MainThm}, we will 
demonstrate that it gives predictions that are in good agreement with the 
numerical simulations of the Asian option prices. 


\begin{table}[t!]
\caption{\label{Table:1} 
Numerical results for call (upper) and put (lower) Asian options with 
maturity $T=0.5, 1, 2$ years and parameters (\ref{scenario1}) obtained by 
MC simulation in the BS model (1 stdev in brackets), compared against the
asymptotic results $C_{\rm as}(K), P_{\rm as}(K)$ given by (\ref{Casympt}). 
The last column gives the asymptotic equivalent log-normal volatility of the 
Asian option $\Sigma_{\rm LN}^{(BS)}$.}
\begin{center}
\begin{tabular}{|c|cc|cc|cc|c|}
\hline
  & \multicolumn{2}{|c|}{$T=0.5$}  &   \multicolumn{2}{|c|}{$T=1$}  & 
    \multicolumn{2}{|c|}{$T=2$}  &  \\
\hline
$K$ & MC(n=800) & $C_{\rm as}(K)$ &  MC(n=800)  & $C_{\rm as}(K)$  & MC(n=800) & 
    $C_{\rm as}(K)$  & $\Sigma_{\rm LN}^{(BS)}$ \\
\hline
\hline
100 & $4.8871(0.0078)$   & 4.8830  & $6.9037(0.0115)$  & 6.9013 & 
      $9.7417(0.0155)$  & 9.7477  & 17.32\%  \\
105 & $2.9205(0.0062)$   & 2.9188  & $4.8848(0.0098)$  & 4.8847 & 
      $7.7268(0.0155)$  & 7.7382   & 17.41\%  \\
110 & $1.6372(0.0046)$   & 1.6388   & $3.3689(0.0083)$  & 3.3715 & 
      $6.0737(0.0139)$  & 6.0826  & 17.48\%  \\
115 & $0.8650(0.0033)$   & 0.8671   & $2.2698(0.0068)$  & 2.2745 & 
      $4.7370(0.0124)$  & 4.7505  & 17.56\% \\
120 & $0.4336(0.0023)$   & 0.4351   & $1.4980(0.0055)$  & 1.5033 & 
      $3.6692(0.0110)$  & 3.6835  & 17.63\% \\
125 & $0.2075(0.0016)$   & 0.2081 & $0.9715(0.0045)$    & 0.9758 & 
      $2.8254(0.0098)$  & 2.8414 & 17.70\%  \\
130&  $0.0949(0.0011)$   & 0.0953  & $0.6201(0.0036)$    & 0.6234 & 
      $2.1657(0.0086)$  & 2.1790 & 17.76\% \\
\hline
\hline
$K$ & MC(n=800) & $P_{\rm as}(K)$ &  MC(n=800)  & $P_{\rm as}(K)$  & MC(n=800) & 
    $P_{\rm as}(K)$  & $\Sigma_{\rm LN}^{(BS)}$ \\
\hline
\hline
70 & $0.0034(0.0001)$   & 0.0035  & $0.0810(0.0007)$  & 0.0809 & 
     $0.5580(0.0024)$  & 0.5596  & 16.68\%  \\
75 & $0.0264(0.0004)$   & 0.0263  & $0.2579(0.0014)$  & 0.2580 & 
     $1.1220(0.0036)$  & 1.1250 & 16.81\%  \\
80 & $0.1296(0.0009)$   & 0.1295   & $0.6608(0.0025)$  & 0.6609 & 
     $2.0100(0.0050)$  & 2.0167 & 16.92\%  \\
85 & $0.4548(0.0018)$   & 0.4543   & $1.4221(0.0038)$  & 1.4237 & 
     $3.2880(0.0067)$  & 3.2984 & 17.03\%  \\
90 & $1.2187(0.0031)$   & 1.2190   & $2.6671(0.0054)$  & 2.6711 & 
     $4.9963(0.0085)$  & 5.0095 & 17.14\%  \\
95 & $2.6475(0.0048)$   & 2.6494 & $4.4820(0.0072)$  & 4.4877 & 
     $7.1464(0.0103)$  & 7.1628 & 17.23\%  \\
100& $4.8789(0.0066)$   & 4.8830 & $6.8928(0.0090)$ & 6.9013 & 
     $9.7280(0.0121)$  & 9.7477 & 17.32\% \\
\hline
\end{tabular}
\end{center}
\end{table}


We consider next a few numerical tests of the asymptotic pricing formulas,
on the example of the Asian options in the Black-Scholes model. 
One first test assumes the model parameters
\begin{equation}\label{scenario1}
r = q = 0, S_0 = 100, \sigma = 30\%.
\end{equation}
In Table \ref{Table:1} we show the prices of out-of-the-money 
call and put Asian options with maturities $T = 0.5, 1, 2$ years, 
comparing the results of a Monte Carlo calculation against the asymptotic 
results $C_{\rm as}$ and $P_{\rm as}$ given in (\ref{Casympt}).
The Monte Carlo simulation was performed with $N=10^6$ paths, and the time
line was discretized with $n=800$ time steps. 
We note very good agreement of the short-maturity asymptotic results
with the results of the Monte Carlo calculation. The asymptotic results 
are always within one standard deviation (68\% CL) of the MC result for $T= 0.5, 1$,
and within two standard deviations (95\% CL) for $T=2$.


We also compare with the benchmark scenarios proposed in \cite{FMW} and
which were commonly used in the literature on pricing Asian options
\cite{Dassios,DufresneLaguerre,Linetsky,FPP2013,VecerXu}.
We show in Table~\ref{Table:2} numerical results for the asymptotic 
approximation for the Asian options obtained from (\ref{Casympt}), for the
scenarios proposed in \cite{FMW}.
They are compared against the very precise results of \cite{Linetsky} obtained 
using a spectral expansion, and against the simple Levy approximation 
\cite{Levy}. The asymptotic result performs in all cases better than the Levy
approximation, and the agreement with the results of \cite{Linetsky} is 
better than 0.7\% in all cases. 

\begin{table}[t!]
\caption{\label{Table:2} 
Numerical results for Asian call options in the Black-Scholes model
under the benchmark scenarios considered in \cite{FMW,Linetsky}. 
The last 4 columns show: a) the results from
the asymptotic expansion (\ref{Casympt}) of this paper (PZ), b) the
3rd order approximation proposed in Foschi et al. \cite{FPP2013} (FPP3),
c) Levy 
approximation \cite{Levy}, d) precise evaluation using the spectral 
expansion in \cite{Linetsky}.}
\begin{center}
\begin{tabular}{ccccc|c|c|c|c}
\hline
$r$ & $T$ & $S_0$ & $K$ & $\sigma$ 
         & PZ & FPP3 & Levy & Linetsky \\
\hline
\hline
0.02   & 1 & 2    & 2 & 0.1  & 0.055923 & 0.055986 & 0.056054 & 0.055986 \\
0.18   & 1 & 2    & 2 & 0.3  & 0.217054 & 0.218387 & 0.219829 & 0.218387 \\
0.0125 & 2 & 2    & 2 & 0.25 & 0.172163 & 0.172267 & 0.173490 & 0.172269 \\
0.05   & 1 & 1.9  & 2 & 0.5  & 0.192895 & 0.193164 & 0.195379 & 0.193174 \\
0.05   & 1 & 2    & 2 & 0.5  & 0.246125 & 0.246406 & 0.249791 & 0.246416 \\
0.05   & 1 & 2.1  & 2 & 0.5  & 0.305927 & 0.306210 & 0.310646 & 0.306220 \\
0.05   & 2 & 2    & 2 & 0.5  & 0.349314 & 0.350040 & 0.359204 & 0.350095 \\
\hline
\end{tabular}
\end{center}
\end{table}


The numerical tests presented show that the short-maturity asymptotic 
results of this paper can be used as a good approximation for 
Asian options prices with maturities relevant for practical applications.


\section{Asymptotics for Floating Strike Asian Options}
\label{sec:floating}

There are many variations of the standard Asian options in the finance literature
and one of the most used is the so-called floating strike Asian options.
The price of the floating strike Asian call/put options are given by
\begin{align}
&C_f(T):=e^{-rT}\mathbb{E}\left[\left(\kappa S_{T}-\frac{1}{T}\int_{0}^{T}S_{t}dt\right)^{+}\right],
\\
&P_f(T):=e^{-rT}\mathbb{E}\left[\left(\frac{1}{T}\int_{0}^{T}S_{t}dt-\kappa S_{T}\right)^{+}\right],
\end{align}
where $\kappa>0$ is the strike, see e.g.
\cite{Levy,Ritchken,Alziary,Chung,RogersShi,HW}.
The floating-strike Asian option is more difficult to price than the fixed-strike case
because the joint law of $S_{T}$ and $\frac{1}{T}\int_{0}^{T}S_{t}dt$ is needed, and also
the one dimensional PDE that the floating-strike Asian price satisfies after a change of num\'{e}raire
is difficult to solve numerically as the Dirac delta function appears as a coefficient, 
see e.g. \cite{RogersShi,Alziary}.

When $\kappa<1$, the call option is OTM, the put option is ITM;
when $\kappa>1$, the call option is ITM, the put option is OTM;
when $\kappa=1$, the call/put options are ATM. We are interested
in the short maturity, i.e. $T\rightarrow 0$ asymptotics of these options.

For the Black-Scholes model, it was shown by Henderson and 
Wojakowski \cite{HW} that the floating-strike Asian options with continuous 
time averaging can be related to fixed strike ones as
\begin{align}\label{HWsymm}
& e^{-rT} \mathbb{E}[(\kappa S_T - A_T)^+]  = 
e^{-q T} \mathbb{E}_*[(\kappa S_0 - A_T)^+]\, , \\
& e^{-rT} \mathbb{E}[(A_T - \kappa S_T)^+]  = 
e^{-q T} \mathbb{E}_*[(A_T - \kappa S_0)^+] \,. \nonumber
\end{align}
The expectations on the right-hand side are taken with respect to a different 
measure $\mathbb{Q}_*$, where the asset price $S_t$ is given by the process
\begin{equation}
dS_t = (q-r) S_t dt + \sigma S_t dW_t^* \, ,
\end{equation}
with $W_t^*$ a standard Brownian motion in the $\mathbb{Q}_*$ measure.

However, in our general setting of local volatility models, the equivalence
relations (\ref{HWsymm}) do not hold, and hence the asymptotics for floating 
strike Asian options must be obtained independently from that of the fixed 
strike Asian options. 
But it is not difficult to observe that the same techniques, i.e. large 
deviations, calculus of variation, Gaussian approximations, which were used
to obtain the short-maturity asymptotics for the fixed strike 
Asian options, can be applied with little modification for the floating strike 
ones. For the sake of simplicity, we will only provide a sketch of the proofs 
in the Appendix.

\begin{proposition}\label{FloatingThm}
(i) Under the same assumptions as in Theorem \ref{MainThm}, when $\kappa<1$, 
\begin{align}
&C_f(T)=e^{-\frac{1}{T}\mathcal{I}_{f}(\kappa,S_{0})+o(\frac{1}{T})},
\\
&P_f(T)=(1-\kappa)S_{0}-\frac{S_{0}}{2}(r+q)T+\kappa S_{0}qT+O(T^{2}),
\end{align}
as $T\rightarrow 0$, where
\begin{equation}\label{IfEqn}
\mathcal{I}_{f}(\kappa,S_{0})=
\inf_{\substack{
\int_{0}^{1}e^{g(t)}dt=\kappa e^{g(1)}
\\
g(0)=\log S_{0}, g\in\mathcal{AC}[0,1]}}
\frac{1}{2}\int_{0}^{1}\left(\frac{g'(t)}{\sigma(e^{g(t)})}\right)^{2}dt.
\end{equation}

(ii) Under the same assumptions as in Theorem \ref{MainThm}, when $\kappa>1$, 
\begin{align}
&P_f(T)=e^{-\frac{1}{T}\mathcal{I}_{f}(\kappa,S_{0})+o(\frac{1}{T})},
\\
&C_f(T)=(\kappa-1)S_{0}+\frac{S_{0}}{2}(r+q)T-\kappa S_{0}qT+O(T^{2}),
\end{align}
as $T\rightarrow 0$, where $\mathcal{I}_{f}$ is defined in \eqref{IfEqn}.

(iii) Under the same assumptions as in Theorem \ref{ATMThm}, when $\kappa=1$, 
\begin{equation}
\lim_{T\rightarrow 0}\frac{1}{\sqrt{T}}C_f(T)
=\lim_{T\rightarrow 0}\frac{1}{\sqrt{T}}P_f(T)
=\frac{1}{\sqrt{6\pi}}\sigma(S_{0})S_{0}.
\end{equation}
\end{proposition}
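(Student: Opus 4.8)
The plan is to mirror the three arguments already used for fixed-strike Asian options: the small-time large-deviation analysis of Theorem~\ref{MainThm} for the out-of-the-money cases in (i)--(ii), the put--call parity computation of Corollary~\ref{InMoneyCor} for the in-the-money cases, and the Gaussian-fluctuation argument of Theorem~\ref{ATMThm} for the at-the-money case in (iii). For parts (i) and (ii), take $\kappa<1$ (the case $\kappa>1$ being symmetric) and rescale time by setting $\tilde X^T_s:=\log S_{sT}$ for $s\in[0,1]$, so that $\tilde X^T$ solves a small-noise SDE with diffusion coefficient $\sigma(e^{\tilde X^T_s})\sqrt T$ and an $O(T)$ drift; Freidlin--Wentzell theory (with the negligible-drift step as in the proof of Theorem~\ref{MainThm}) yields a sample-path LDP on $C[0,1]$ at speed $1/T$ with good rate function $I(g)=\tfrac12\int_0^1(g'(t)/\sigma(e^{g(t)}))^2\,dt$ for $g\in\mathcal{AC}[0,1]$ with $g(0)=\log S_0$. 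Since $\kappa S_T-\frac1T\int_0^T S_t\,dt=\kappa e^{\tilde X^T_1}-\int_0^1 e^{\tilde X^T_s}\,ds$ and $g\mapsto\kappa e^{g(1)}-\int_0^1 e^{g(t)}\,dt$ is continuous on $C[0,1]$, the contraction principle gives an LDP at speed $1/T$ for $\kappa S_T-\frac1T\int_0^T S_t\,dt$; for $\kappa<1$ the typical value of this functional is $(\kappa-1)S_0<0$, so $\{\kappa S_T\ge\frac1T\int_0^T S_t\,dt\}$ is a rare event with rate $\inf\{I(g):g(0)=\log S_0,\ \kappa e^{g(1)}\ge\int_0^1 e^{g(t)}\,dt\}$.

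Next I would show this infimum equals $\mathcal{I}_{f}(\kappa,S_0)$, i.e.\ that the constraint is active and may be replaced by $\kappa e^{g(1)}=\int_0^1 e^{g(t)}\,dt$, and that $\mathcal{I}_{f}(\kappa,S_0)>0$. A minimizer exists by coercivity of $I$ (from $\sigma\le\overline{\sigma}$), weak $H^1$ lower semicontinuity of $I$, and the compact embedding $H^1[0,1]\hookrightarrow C[0,1]$, which makes the constraint functional continuous along the minimizing sequence; and if the constraint were slack at a minimizer, that function would be an unconstrained local minimizer of $I$ with free right endpoint, forcing the natural boundary condition $g'(1)=0$ and hence $g\equiv\log S_0$, contradicting $\kappa S_0\ne S_0$. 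To pass from the probability LDP to the price asymptotics I would run a Varadhan-type argument: the payoff vanishes off the rare event, the lower bound follows from the open-set LDP bound on a slightly enlarged event, and for the upper bound one truncates on $\{\sup_{t\le T}S_t\le e^{L}\}$, uses $(\kappa S_T-\frac1T\int_0^T S_t\,dt)^+\le\kappa\sup_{t\le T}S_t$, controls the complement by the Gaussian-type tail $\mathbb{P}(\sup_{t\le T}|\log S_t-\log S_0|>a)\le 2e^{-(a-CT)^2/(2\overline{\sigma}^2 T)}$ (valid since $\sigma\le\overline{\sigma}$), and then lets $L\to\infty$. This yields $C_f(T)=e^{-\frac1T\mathcal{I}_{f}(\kappa,S_0)+o(1/T)}$ for $\kappa<1$ and, symmetrically, $P_f(T)=e^{-\frac1T\mathcal{I}_{f}(\kappa,S_0)+o(1/T)}$ for $\kappa>1$. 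I expect this Varadhan upper bound---its handling of the \emph{unbounded} floating-strike payoff through truncation and tail estimates---to be the main technical obstacle, with the ``active constraint'' step being the secondary delicate point.

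For the in-the-money legs I would use the identity $C_f(T)-P_f(T)=e^{-rT}\mathbb{E}[\kappa S_T-\frac1T\int_0^T S_t\,dt]=e^{-rT}(\kappa S_0 e^{(r-q)T}-A(T))$ together with $A(T)=S_0(1+\tfrac12(r-q)T+O(T^2))$. For $\kappa<1$ the call leg is exponentially small by part (i) (using $\mathcal{I}_{f}>0$), so $P_f(T)=e^{-rT}(A(T)-\kappa S_0 e^{(r-q)T})$ up to $o(e^{-c/T})$; expanding $e^{-rT}$ and $e^{(r-q)T}$ to first order in $T$ and collecting terms gives $P_f(T)=(1-\kappa)S_0-\tfrac{S_0}{2}(r+q)T+\kappa S_0 qT+O(T^2)$, and the symmetric expansion gives the stated $C_f(T)$ when $\kappa>1$.

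For part (iii) with $\kappa=1$, the key observation is that $T^{-1/2}\big(W_T-\frac1T\int_0^T W_t\,dt\big)$ is \emph{exactly} $N(0,\tfrac13)$-distributed (from $\mathrm{Var}(W_T)=T$, $\mathrm{Var}(\frac1T\int_0^T W_t\,dt)=\frac T3$ and $\mathrm{Cov}(W_T,\frac1T\int_0^T W_t\,dt)=\frac T2$). Under the Lipschitz hypotheses of Theorem~\ref{ATMThm}, It\^o's isometry together with $\|S_u-S_0\|_{L^2}^2=O(u)$ shows that $S_T-S_0$ and $\frac1T\int_0^T S_t\,dt-S_0$ agree with $\sigma(S_0)S_0 W_T$ and $\sigma(S_0)S_0\frac1T\int_0^T W_t\,dt$ respectively up to $L^2$-errors of order $T$ (the drift also contributing $O(T)$), so $S_T-\frac1T\int_0^T S_t\,dt=\sigma(S_0)S_0\big(W_T-\frac1T\int_0^T W_t\,dt\big)+O_{L^2}(T)$. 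Since $x\mapsto x^+$ is $1$-Lipschitz, taking expectations and using $\mathbb{E}[Z^+]=1/\sqrt{2\pi}$ for $Z\sim N(0,1)$ gives $C_f(T)=\sigma(S_0)S_0\sqrt{T/3}/\sqrt{2\pi}+O(T)=\sigma(S_0)S_0\sqrt T/\sqrt{6\pi}+O(T)$, and the same leading term for $P_f(T)$ because $\mathbb{E}[(-Z)^+]=\mathbb{E}[Z^+]$; dividing by $\sqrt T$ and letting $T\to0$ gives the claimed common limit.
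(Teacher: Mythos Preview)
Your proposal is correct and follows essentially the same three-part structure as the paper: sample-path LDP plus contraction principle for the out-of-the-money leg, put--call parity for the in-the-money leg, and a Gaussian approximation for the at-the-money case. The variance computation in (iii) and the put--call parity expansion both match the paper's.

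A few minor technical differences are worth noting. For the upper bound in the price-to-probability reduction, the paper (in the fixed-strike proof it refers back to) uses H\"older's inequality together with a uniform $p$th-moment bound on $S_t$, which is somewhat simpler than your truncation-plus-tail-bound scheme; either works, but the H\"older route avoids the extra $L\to\infty$ limit. Conversely, you are more careful than the paper about the ``active constraint'' step: the paper simply asserts that the contraction principle gives the stated rate function $\mathcal{I}_f$ with equality constraint, whereas you supply an explicit argument (existence of a minimizer, then the natural boundary condition forcing $g\equiv\log S_0$ if the constraint were slack). Finally, for (iii) the paper first passes to the martingale $X_t=S_t e^{-(r-q)t}$ and then to $\hat X_t=S_0+\sigma(S_0)S_0 W_t$ via a Gronwall/Doob estimate on $\mathbb{E}[\max_{t\le T}|X_t-\hat X_t|]$, while you work directly with $S_t$ and control the $L^2$-error of $S_T-\frac1T\int_0^T S_t\,dt$ against its Gaussian surrogate; both yield the same $O(T)$ remainder.
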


For a general local volatility function $\sigma(S)$, the rate function 
$I_f(\kappa,S_0)$ is given by the following result:

\begin{proposition}\label{VarFloating}
The solution of the variational problem in Proposition~\ref{FloatingThm} 
for the short maturity asymptotics of floating strike Asian options is given by
\begin{equation}\label{Ifres}
\mathcal{I}_f(\kappa, S_0) = \lambda(\kappa-1)e^{f_1} 
  + \frac12\lambda^2 \kappa^2 e^{2f_1} \sigma^2(S_0 e^{f_1}),
\end{equation}
where $f_1=f(1)$ with $f(t)$ given by the solution of the differential equation
\begin{equation}\label{ELf}
\frac{d}{dt} \left( \frac{f'(t)}{\sigma(S_0 e^{f(t)})} \right) = 
\lambda e^{f(t)} \sigma(S_0 e^{f(t)}),
\end{equation}
with
\begin{equation}\label{lambdarel}
\lambda = 2 \frac{1 - e^{f(1)}}{I_s[f]\{I_s[f] - 2\kappa e^{f_1} 
\sigma(S_0 e^{f_1})\} }\,,\qquad
I_s[f] = \int_0^1 ds e^{f(s)} \sigma(S_0 e^{f(s)}) ,
\end{equation}
and boundary conditions
\begin{equation}\label{BCf}
f(0)=0\,,\qquad f'(1) = \lambda \kappa e^{f_1} \sigma^2(S_0 e^{f_1})\,.
\end{equation}
\end{proposition}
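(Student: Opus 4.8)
The plan is to follow the same route used to prove Proposition~\ref{VarProp}, adapted to the endpoint‑dependent isoperimetric constraint $\int_0^1 e^{g(t)}\,dt = \kappa e^{g(1)}$ appearing in \eqref{IfEqn}. First I would substitute $f(t) := g(t) - \log S_0$, so that $f(0)=0$, the constraint reads $\int_0^1 e^{f(t)}\,dt = \kappa e^{f(1)}$ with $f(1)=:f_1$ a \emph{free} right endpoint, and the cost is $\tfrac12\int_0^1\big(f'(t)/\sigma(S_0e^{f(t)})\big)^2\,dt$. Existence of a minimizer follows by the direct method exactly as in Section~\ref{Sec:3}: the admissible set is non‑empty and sequentially weakly closed, and the functional is coercive (by \eqref{assumpI}) and weakly lower semicontinuous in $W^{1,2}$; hence it remains only to characterize the minimizer through its first‑order conditions.

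Next I would form the augmented functional $\tfrac12\int_0^1\big(f'/\sigma(S_0e^f)\big)^2\,dt-\lambda\big(\int_0^1 e^f\,dt-\kappa e^{f(1)}\big)$ with a Lagrange multiplier $\lambda$, and compute the first variation. The vanishing of the interior term gives, after simplification and a rescaling of $\lambda$, the Euler--Lagrange equation \eqref{ELf}; because $f(1)$ is free and the constraint contributes the boundary term $\lambda\kappa e^{f(1)}\delta f(1)$, the transversality condition at $t=1$ becomes $f'(1)/\sigma^2(S_0e^{f_1}) = \lambda\kappa e^{f_1}$, i.e. the second relation in \eqref{BCf}, together with $f(0)=0$. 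Writing $p(t):=f'(t)/\sigma(S_0e^{f(t)})$, equation \eqref{ELf} reads $p'=\lambda e^{f}\sigma(S_0e^f)$; multiplying by $p$ and integrating yields the first integral $\tfrac12 p(t)^2=\tfrac12 p(0)^2+\lambda\big(e^{f(t)}-1\big)$.

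From here the rate function is obtained directly: $\mathcal{I}_f=\tfrac12\int_0^1 p^2\,dt=\tfrac12 p(0)^2+\lambda\int_0^1 e^f\,dt-\lambda$; using the constraint $\int_0^1 e^f\,dt=\kappa e^{f_1}$ together with the first integral at $t=1$ and the transversality relation $p(1)=\lambda\kappa e^{f_1}\sigma(S_0e^{f_1})$ to eliminate $p(0)$, one arrives at \eqref{Ifres}. To pin down $\lambda$, integrate \eqref{ELf} from $t$ to $1$ to get $p(t)=p(1)-\lambda\int_t^1 e^{f(s)}\sigma(S_0e^{f(s)})\,ds$, then evaluate $e^{f_1}-1=\int_0^1\frac{d}{dt}e^{f(t)}\,dt=\int_0^1 e^{f}\sigma(S_0e^f)\,p\,dt$; substituting the expression for $p$ and integrating by parts in terms of $J(t):=\int_t^1 e^{f(s)}\sigma(S_0e^{f(s)})\,ds$ (so $J(0)=I_s[f]$, $J(1)=0$, $J'=-e^f\sigma$) collapses everything to a quadratic identity in $I_s[f]$, which solves to \eqref{lambdarel}.

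The main obstacle is the global minimality (and uniqueness) of the stationary point: unlike the plain Dirichlet energy, $\tfrac12\int\big(f'/\sigma(S_0e^f)\big)^2$ is not convex since $\sigma$ depends on $f$, and the constraint set is not convex either, so a critical point is not automatically the minimizer. I would handle this as in Section~\ref{Sec:3} by changing the unknown to $Z(f)$ with $Z(y)=\int_0^y dw/\sigma(S_0e^w)$ (as in Proposition~\ref{LVexpRF}), which turns the cost into the convex functional $\tfrac12\int(\cdot)'^2$, and then argue that, among admissible competitors, the coupled system \eqref{ELf}--\eqref{BCf}--\eqref{lambdarel} has a unique solution with $\lambda$ of the correct sign, forcing the minimizer to coincide with it; one also checks that the value \eqref{Ifres} is non‑negative, which is guaranteed a posteriori by $\mathcal{I}_f\ge 0$. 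As noted in the excerpt, the argument is parallel to the fixed‑strike case, so only a sketch along these lines is needed.
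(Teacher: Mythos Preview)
Your proposal is correct and follows essentially the same route as the paper: reduce to $f=g-\log S_0$, introduce a Lagrange multiplier, derive the Euler--Lagrange equation \eqref{ELf} together with the transversality condition \eqref{BCf}, obtain the first integral (the paper's Lemma~\ref{Lemma2}) and integrate it over $(0,1)$ to get \eqref{Ifres}. Your derivation of \eqref{lambdarel} via $e^{f_1}-1=\int_0^1 e^f\sigma\,p\,dt$ and integration by parts in $J(t)$ is a minor computational variant of the paper's argument, which instead writes two expressions for $p(0)=f'(0)/\sigma(S_0)$ (one from integrating \eqref{ELf} over $(0,1)$, one from the first integral at $t=0$) and eliminates $p(0)$; both lead to the same quadratic identity in $I_s[f]$.
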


We briefly outline in the following a possible numerical method for solving this calculus
problem and finding the rate function $I_f(\kappa,S_0)$ for a given local volatility function 
$\sigma(S)$. This can be reduced to solving a non-linear equation for the variable $\lambda$. 
For a given value
of $\lambda$, one can solve numerically the Euler-Lagrange equation (\ref{ELf}) with the 
boundary conditions (\ref{BCf}), and find the (non-optimal) function $f(t)$. Using this 
function we can
compute the integral $I_s[f]$, and evaluate the expression for $\lambda$ in (\ref{lambdarel}).
Requiring that the result for $\lambda$ following from (\ref{lambdarel}) is the same as the 
input value determines the value of this variable, and thus the optimal function $f(t)$. 
This equation for $\lambda$ can be solved by scanning over $\lambda$.

For the limiting case of the Black-Scholes model $\sigma(S)=\sigma$, the rate
function $I_f(\kappa,S_0)$ can be found exactly, and is related to the 
Black-Scholes rate function $\mathcal{J}_{\rm BS}(K/S_0)$ given in 
Proposition~\ref{RateFunctionBS}. 

\begin{proposition}\label{BSfloating}
In the Black-Scholes model, the rate function of the floating strike Asian
option is given by
\begin{equation}\label{IfBS}
\mathcal{I}_f(\kappa) = 
\inf_{\substack{
\int_{0}^{1}e^{h(t)}dt=\kappa 
\\
h(0)=0, h\in\mathcal{AC}[0,1]}}
\frac{1}{2\sigma^2}\int_{0}^{1} (h'(t))^{2} dt = 
\frac{1}{\sigma^2} \mathcal{J}_{\rm BS}(\kappa)\,,
\end{equation}
where $\mathcal{J}_{\rm BS}(\kappa)$ is the rate function for fixed strike Asian 
options in the Black-Scholes model given in (\ref{JBSresult}).
\end{proposition}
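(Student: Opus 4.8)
The plan is to show that in the Black-Scholes model the general floating-strike variational problem \eqref{IfEqn} reduces to the simpler form in \eqref{IfBS}, and then to recognize the resulting infimum as the fixed-strike rate function $\mathcal{J}_{\rm BS}(\kappa)$ up to the factor $1/\sigma^2$. First I would specialize \eqref{IfEqn} to $\sigma(\cdot)\equiv\sigma$: the cost functional becomes $\frac{1}{2\sigma^2}\int_0^1 (g'(t))^2\,dt$ with the constraint $\int_0^1 e^{g(t)}\,dt = \kappa e^{g(1)}$ and $g(0)=\log S_0$. The key observation is translation invariance: writing $g(t) = \log S_0 + h(t)$ with $h(0)=0$ leaves $\int_0^1 (g'(t))^2\,dt = \int_0^1 (h'(t))^2\,dt$ unchanged, and the constraint becomes $S_0\int_0^1 e^{h(t)}\,dt = \kappa S_0 e^{h(1)}$, i.e.\ $\int_0^1 e^{h(t)}\,dt = \kappa e^{h(1)}$, which no longer involves $S_0$. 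This already shows $\mathcal{I}_f(\kappa,S_0)$ depends only on $\kappa$ (hence the notation $\mathcal{I}_f(\kappa)$) and gives the first equality in \eqref{IfBS}.

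Next I would reduce the constraint $\int_0^1 e^{h(t)}\,dt = \kappa e^{h(1)}$ to the fixed-strike constraint $\int_0^1 e^{h(t)}\,dt = \kappa$ by a further shift. The natural move is to set $h(t) = \tilde h(t) + c$ for a constant $c$ to be chosen; but this changes $h(0)$ away from $0$, so instead one should exploit the structure: if $h$ is feasible for the floating problem with value $\kappa e^{h(1)}$, define $\hat h(t) = h(t) - h(1)$; then $\hat h(1) = 0$ and $\int_0^1 e^{\hat h(t)}\,dt = e^{-h(1)}\int_0^1 e^{h(t)}\,dt = \kappa$, while $\int_0^1(\hat h')^2 = \int_0^1 (h')^2$ is unchanged. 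This maps feasible floating-strike paths to paths satisfying $\int_0^1 e^{\hat h}\,dt = \kappa$ with the \emph{endpoint} (rather than initial point) pinned at $0$. Since the fixed-strike Black-Scholes rate function in Proposition~\ref{RateFunctionBS} comes from \eqref{IRate} with $g(0)=\log S_0$ — i.e.\ the initial point pinned — I would use the time-reversal symmetry $t \mapsto 1-t$, which preserves both $\int_0^1(g')^2$ and $\int_0^1 e^g$, to interchange the roles of the two endpoints. Composing the shift and the reversal sets up a bijection between the feasible sets of the two problems preserving the cost, hence the infima coincide and equal $\frac{1}{\sigma^2}\mathcal{J}_{\rm BS}(\kappa)$ by Proposition~\ref{RateFunctionBS} (noting that the $S_0$-dependence there is only through $K/S_0$, so replacing $K/S_0$ by $\kappa$ is exactly what this gives).

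The main obstacle I anticipate is handling the free endpoint carefully: unlike the fixed-strike problem where both $g(0)$ and the integral constraint are imposed, the floating problem effectively has one endpoint free and a constraint that couples the integral to the free endpoint value $e^{h(1)}$. One must check that the shift $h \mapsto h - h(1)$ is well-defined on $\mathcal{AC}[0,1]$ (it is, since $h(1)$ is a finite constant), that it is a bijection onto the fixed-strike feasible set (the inverse being $\hat h \mapsto \hat h + c$ where $c$ is the unique constant making $\int_0^1 e^{\hat h + c}\,dt = \kappa e^{c}\cdot(\text{something})$ — actually here the inverse is simply $\hat h \mapsto \hat h$ composed with relabeling, since after reversal the initial point is pinned and the integral is $\kappa$, which is precisely the fixed-strike feasible set), and that no boundary terms are lost. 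A clean way to avoid fuss is to argue two inequalities directly: given any fixed-strike feasible $g$ one constructs a floating-strike feasible competitor of equal cost, and vice versa, so each infimum bounds the other. Once this correspondence is set up the identification with $\mathcal{J}_{\rm BS}(\kappa)$ is immediate from Proposition~\ref{RateFunctionBS}.
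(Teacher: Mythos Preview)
Your proposal is correct and follows essentially the same route as the paper: the key substitution $h(s)=f(1-s)-f(1)$ in the paper's proof is exactly your composite of the shift $\hat h=h-h(1)$ and the time-reversal $t\mapsto 1-t$, and both arguments then identify the resulting problem with the fixed-strike variational problem of Proposition~\ref{RateFunctionBS}. Your version is in fact slightly more direct, since the paper first passes through the Euler-Lagrange equation and derives an extra boundary condition $f'(0)=0$ before introducing the substitution, whereas you set up the bijection between feasible sets from the outset; one minor slip is that at the end of your first paragraph you claim to have ``the first equality in \eqref{IfBS}'' after only the $\log S_0$ translation, but at that point the constraint is still $\int_0^1 e^{h}\,dt=\kappa e^{h(1)}$, so the first equality is only completed after the shift and reversal in your second paragraph.
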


\begin{remark}
Note that for the Black-Scholes model, $\mathcal{I}_{f}(\kappa,S_{0})$ is independent of $S_{0}$
and hence we use the notation $\mathcal{I}_{f}(\kappa)$ in \eqref{IfBS}.
\end{remark}

\begin{remark}
The relation (\ref{IfBS}) is consistent with the equivalence relations 
(\ref{HWsymm}). As noted above, the short maturity asymptotics of the
Asian options is independent of the interest rate $r$ and dividend yield $q$
such that in this limit the expectations in the relation (\ref{HWsymm}) are 
taken in the same measure. This implies that in the short maturity limit,
fixed and floating strike Asian options are equal to each other, up to the
substitution $K/S_0 \mapsto \kappa$.
\end{remark}

It is instructive to compare this explicit solution in  the Black-Scholes model
with the result of Proposition~\ref{VarFloating}. For the case of constant 
volatility function $\sigma(S)=\sigma$, the relation (\ref{lambdarel}) 
simplifies as
\begin{equation}
\lambda_{\rm BS}(\kappa) = \frac{2}{\sigma^2 \kappa^2} e^{-2f(1)}(e^{f(1)} - 1)
 = \frac{2}{\sigma^2 \kappa^2} e^{2h(1)}(e^{-h(1)} - 1) \,.
\end{equation}
where $h(t) = f(1-t) - f(1)$ is the solution of the equivalent variational problem 
in (\ref{IfBS}). This agrees with the solution for the Lagrange multiplier 
$\lambda$ for the auxiliary variational problem (\ref{IfBS}), which is given in 
(\ref{lambdasol}). The value $\lambda_{\rm BS}(\kappa)$ (with $\sigma = \sigma(S_0)$)
can be used as a starting point for scanning over the values of $\lambda$ in the numerical 
solution  of the variational problem for the general local volatility model.


\section{Appendix}
\label{sec:appendix}

\subsection{Large Deviation Principle and Contraction Principle}

We start by giving a formal definition of the large deviation principle. 
We refer to Dembo and Zeitouni \cite{Dembo} or Varadhan \cite{VaradhanII} 
for general background of large deviations and the applications. 

\begin{definition}[Large Deviation Principle]
A sequence $(P_{\epsilon})_{\epsilon\in\mathbb{R}^{+}}$ of probability measures on a topological space $X$ 
satisfies the large deviation principle with rate function $I:X\rightarrow\mathbb{R}$ if $I$ is non-negative, 
lower semicontinuous and for any measurable set $A$, we have
\begin{equation}
-\inf_{x\in A^{o}}I(x)\leq\liminf_{\epsilon\rightarrow 0}\epsilon\log P_{\epsilon}(A)
\leq\limsup_{\epsilon\rightarrow 0}\epsilon\log P_{\epsilon}(A)\leq-\inf_{x\in\overline{A}}I(x).
\end{equation}
Here, $A^{o}$ is the interior of $A$ and $\overline{A}$ is its closure. 
\end{definition}

The contraction principle plays a key role in our proofs. For the convenience
of the readers, we state the result as follows:

\begin{theorem}[Contraction Principle, e.g. Theorem 4.2.1. \cite{Dembo}]\label{Contraction}
If $P_{\epsilon}$ satisfies a large deviation principle on $X$ with rate 
function $I(x)$ and $F:X\rightarrow Y$ is a continuous map,
then the probability measures $Q_{\epsilon}:=P_{\epsilon}F^{-1}$ satisfies
a large deviation principle on $Y$ with rate function
\begin{equation}
J(y)=\inf_{x: F(x)=y}I(x).
\end{equation}
\end{theorem}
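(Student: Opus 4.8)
The strategy is to verify directly the three requirements in the definition of the large deviation principle for $Q_\epsilon := P_\epsilon F^{-1}$ on $Y$ with the candidate rate function $J$, pulling everything back through $F$ and using the elementary set identity $F^{-1}(A) = \bigcup_{y\in A} F^{-1}(\{y\})$. From this identity one gets, for \emph{every} subset $A\subseteq Y$,
\[
\inf_{x\in F^{-1}(A)} I(x) = \inf_{y\in A}\,\inf_{x:\,F(x)=y} I(x) = \inf_{y\in A} J(y),
\]
with the usual convention $\inf\emptyset = +\infty$, which makes $J(y)=+\infty$ whenever $F^{-1}(\{y\})=\emptyset$, consistently with the above. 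This one identity is the engine of both the upper and lower bounds.

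First I would check that $J$ is a (good) rate function. Non-negativity is immediate from $I\geq 0$. For the closedness of the sublevel sets, I would show $\{y: J(y)\leq\alpha\} = F(\{x: I(x)\leq\alpha\})$: the inclusion $\supseteq$ is immediate from the definition of $J$, while for $\subseteq$, given $J(y)\leq\alpha$ one considers the nested nonempty closed sets $F^{-1}(\{y\})\cap\{I\leq\alpha+1/n\}$, all contained in the compact set $\{I\leq\alpha+1\}$ (here I use that $I$ is a \emph{good} rate function, as in the Dembo--Zeitouni statement being cited), and invokes the finite intersection property to produce $x$ with $F(x)=y$ and $I(x)\leq\alpha$. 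Hence $\{J\leq\alpha\}$ is the continuous image of a compact set, so it is compact; in particular $J$ is lower semicontinuous, and in fact a good rate function.

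Next, the upper bound: for closed $C\subseteq Y$, continuity of $F$ makes $F^{-1}(C)$ closed in $X$, so the LDP upper bound for $P_\epsilon$ together with the displayed identity gives
\[
\limsup_{\epsilon\to 0}\epsilon\log Q_\epsilon(C) = \limsup_{\epsilon\to 0}\epsilon\log P_\epsilon\big(F^{-1}(C)\big) \leq -\inf_{x\in F^{-1}(C)} I(x) = -\inf_{y\in C} J(y).
\]
Symmetrically, for open $G\subseteq Y$, $F^{-1}(G)$ is open in $X$, and the LDP lower bound for $P_\epsilon$ gives $\liminf_{\epsilon\to 0}\epsilon\log Q_\epsilon(G) \geq -\inf_{x\in F^{-1}(G)} I(x) = -\inf_{y\in G} J(y)$. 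These are exactly the two inequalities required of $Q_\epsilon$, so combined with the first step the proof is complete.

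The only genuinely delicate point — and where I would be most careful — is the lower semicontinuity (goodness) of the pushed-forward rate function $J$: without compactness of the sublevel sets of $I$ the infimum defining $J(y)$ need not be attained and $J$ need not be lower semicontinuous, so the argument in the second paragraph really does use that $I$ is a good rate function. Everything else is routine bookkeeping with preimages under the continuous map $F$.
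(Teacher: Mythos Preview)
The paper does not prove this statement at all: it is quoted verbatim as Theorem~4.2.1 of Dembo--Zeitouni and invoked as a black box in the proof of Theorem~\ref{MainThm}. So there is no ``paper's own proof'' to compare against.

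Your argument is the standard one and is correct. One remark: you rightly flag that lower semicontinuity of $J$ hinges on $I$ being a \emph{good} rate function (compact sublevel sets), and indeed Dembo--Zeitouni's Theorem~4.2.1 assumes this. The paper's stated version omits the word ``good'' in both the definition preceding it and in the theorem itself, which is a minor imprecision on the paper's part rather than a flaw in your proof; in the application here (Varadhan's sample-path LDP on $L_\infty[0,1]$) the rate function is good, so the contraction principle applies as you have written it.
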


\subsection{Proofs of the Results in Section \ref{MainSection}}

\begin{proof}[Proof of Theorem \ref{MainThm}]
(i) We start by first proving the relation
\begin{align}\label{Th1eq1}
\lim_{T\to 0} T \log C(T) = \lim_{T\to 0} T \log \mathbb{P}
\left( \frac{1}{T} \int_0^T S_t dt \geq K \right)\,.
\end{align}
By H\"{o}lder's inequality, for any $\frac{1}{p}+\frac{1}{p'}=1$, $p,p'>1$,
\begin{align}
C(T)&\leq e^{-rT}\mathbb{E}\left[\left|\frac{1}{T}\int_{0}^{T}S_{t}dt-K\right|1_{\frac{1}{T}\int_{0}^{T}S_{t}dt\geq K}\right]
\\
&\leq e^{-rT}\left(\mathbb{E}\left[\left|\frac{1}{T}\int_{0}^{T}S_{t}dt-K\right|^{p}\right]\right)^{\frac{1}{p}}
\mathbb{P}\left(\frac{1}{T}\int_{0}^{T}S_{t}dt\geq K\right)^{\frac{1}{p'}}.
\end{align}
Let us assume that $p\geq 2$.
Note that for $p\geq 2$, $x\mapsto x^{p}$ is a convex function for $x\geq 0$
and by Jensen's inequality, $(\frac{x+y}{2})^{p}\leq\frac{x^{p}+y^{p}}{2}$ for any $x,y\geq 0$.
Therefore 
\begin{align}\label{UpI}
\mathbb{E}\left[\left|\frac{1}{T}\int_{0}^{T}S_{t}dt-K\right|^{p}\right]
&\leq\mathbb{E}\left[\left(\frac{1}{T}\int_{0}^{T}S_{t}dt+K\right)^{p}\right]
\\
&\leq 2^{p-1}\left[\mathbb{E}\left[\left(\frac{1}{T}\int_{0}^{T}S_{t}dt\right)^{p}\right]+K^{P}\right].
\nonumber
\end{align}
By Jensen's inequality again,
\begin{equation}\label{UpII}
\mathbb{E}\left[\left(\frac{1}{T}\int_{0}^{T}S_{t}dt\right)^{p}\right]
\leq\mathbb{E}\left[\frac{1}{T}\int_{0}^{T}S_{t}^{p}dt\right]
=\frac{1}{T}\int_{0}^{T}\mathbb{E}[S_{t}^{p}]dt.
\end{equation}
By It\^{o}'s formula,
\begin{equation}\label{Sp}
d(S_{t}^{p})=pS_{t}^{p-1}[(r-q)S_{t}dt+\sigma(S_{t})dW_{t}]
+\frac{1}{2}p(p-1)S_{t}^{p-2}\sigma(S_{t})^{2}S_{t}^{2}dt.
\end{equation}
Taking expectations of the both sides of the equation \eqref{Sp},
\begin{equation}
d\mathbb{E}[S_{t}^{p}]=p(r-q)\mathbb{E}[S_{t}^{p}]dt
+\frac{1}{2}p(p-1)\mathbb{E}[S_{t}^{p}\sigma(S_{t})^{2}]dt.
\end{equation}
Since $\underline{\sigma}\leq\sigma(\cdot)\leq\overline{\sigma}$, we conclude
that $\mathbb{E}[S_{t}^{p}]\leq m(t)$ where $m(t)$ is the solution to the ODE:
\begin{equation}
dm(t)=p(r-q)m(t)dt+\frac{1}{2}p(p-1)\overline{\sigma}^{2}m(t)dt,
\qquad
m(0)=S_{0}^{p},
\end{equation}
which has the solution $m(t)=S_{0}^{p}e^{(p(r-q)+\frac{1}{2}p(p-1)\overline{\sigma}^{2})t}$.
Hence,
\begin{equation}\label{UpIII}
\frac{1}{T}\int_{0}^{T}\mathbb{E}[S_{t}^{p}]dt
\leq\max_{0\leq t\leq T}m(t)
\leq S_{0}^{p}e^{|p(r-q)+\frac{1}{2}p(p-1)\overline{\sigma}^{2}|T}.
\end{equation}

Therefore, by equations \eqref{UpI}, \eqref{UpII}, \eqref{UpIII}, we have
\begin{equation}
\limsup_{T\rightarrow 0}T\log C(T)
\leq\limsup_{T\rightarrow 0}\frac{1}{p'}T\log\mathbb{P}\left(\frac{1}{T}\int_{0}^{T}S_{t}dt\geq K\right).
\end{equation}
Since it holds for any $2>p'>1$, we have the upper bound.

For any $\epsilon>0$,
\begin{align}
C(T)&\geq e^{-rT}\mathbb{E}\left[\left(\frac{1}{T}\int_{0}^{T}S_{t}dt-K\right)1_{\frac{1}{T}\int_{0}^{T}S_{t}dt\geq K+\epsilon}\right]
\\
&\geq e^{-rT}\epsilon\mathbb{P}\left(\frac{1}{T}\int_{0}^{T}S_{t}dt\geq K+\epsilon\right),
\nonumber
\end{align}
which implies that
\begin{equation}
\liminf_{T\rightarrow 0}T\log C(T)
\geq\liminf_{T\rightarrow 0}T\log\mathbb{P}\left(\frac{1}{T}\int_{0}^{T}S_{t}dt\geq K+\epsilon\right).
\end{equation}
Since it holds for any $\epsilon>0$, we get the lower bound.

So the problem boils down to compute the limit
\begin{equation}
\lim_{T\rightarrow 0}T\log\mathbb{P}\left(\frac{1}{T}\int_{0}^{T}S_{t}dt\geq K\right)
=\lim_{T\rightarrow 0}T\log\mathbb{P}\left(\int_{0}^{1}S_{tT}dt\geq K\right).
\end{equation}
Let $X_{t}:=\log S_{t}$. This is equivalent to computing the limit
\begin{equation}
\lim_{T\rightarrow 0}T\log\mathbb{P}\left(\int_{0}^{1}e^{X_{tT}}dt\geq K\right),
\end{equation}
where by It\^{o}'s lemma,
\begin{equation}
dX_{t}=\left(r-q-\frac{1}{2}\sigma^{2}(e^{X_{t}})\right)dt+\sigma(e^{X_{t}})dW_{t},\qquad X_{0}=\log S_{0}.
\end{equation}
>From the large deviations theory for small time diffusions, it was first proved
in Varadhan \cite{Varadhan67} that under the assumptions \eqref{assumpI}, \eqref{assumpII}, 
$\mathbb{P}(X_{\cdot T}\in\cdot)$ satisfies
a sample path large deviation principle on $L_{\infty}[0,1]$ with the rate function
\begin{equation}
I(g)=\frac{1}{2}\int_{0}^{1}\left(\frac{g'(t)}{\sigma(e^{g(t)})}\right)^{2}dt.
\end{equation}
with $g(0)=\log S_{0}$ and $g\in\mathcal{AC}[0,1]$, the space of absolutely continuous functions and $I(g)=+\infty$ otherwise.

Note that the map $g\mapsto\int_{0}^{1}e^{g(x)}dx$ from $L_{\infty}[0,1]$ to $\mathbb{R}^{+}$ is a continuous map.
Therefore, by contraction principle (Theorem \ref{Contraction}),
$\mathbb{P}(\int_{0}^{1}e^{X_{tT}}dt\in\cdot)$ satisfies a large deviation principle
with the rate function
\begin{equation}
\mathcal{I}(x,S_{0}):=\inf_{
\substack{\int_{0}^{1}e^{g(t)}dt=x
\\
g(0)=\log S_{0}, g\in\mathcal{AC}[0,1]}}\frac{1}{2}\int_{0}^{1}\left(\frac{g'(t)}{\sigma(e^{g(t)})}\right)^{2}dt.
\end{equation}
Hence, for out-of-the money call options, i.e. $S_{0}<K$,
\begin{equation}
\lim_{T\rightarrow 0}T\log\mathbb{P}\left(\frac{1}{T}\int_{0}^{T}S_{t}dt\geq K\right)
=-\inf_{x\geq K}\mathcal{I}(x,S_{0})=-\mathcal{I}(K,S_{0}),
\end{equation}
where the last step is due to the fact that $\mathcal{I}(K,S_{0})$ is
increasing in $K$ for $K>S_{0}$, see Proposition \ref{prop:2}.

(ii) We conclude by proving the analogous relation to (\ref{Th1eq1}) for Asian put options, 
that is, for out-of-the-money put options, $S_{0}>K$, we will show that
\begin{equation}
\lim_{T\rightarrow 0}T\log\mathbb{P}\left(K\geq\frac{1}{T}\int_{0}^{T}S_{t}dt\right)
=-\mathcal{I}(K,S_{0}).
\end{equation}

By H\"{o}lder's inequality, for any $\frac{1}{p}+\frac{1}{p'}=1$, $p,p'>1$,
\begin{align}
P(T)&=e^{-rT}\mathbb{E}\left[\left(K-\frac{1}{T}\int_{0}^{T}S_{t}dt\right)^{+}1_{K\geq\frac{1}{T}\int_{0}^{T}S_{t}dt}\right]
\\
&\leq e^{-rT}\left(\mathbb{E}\left[\left(\left(K-\frac{1}{T}\int_{0}^{T}S_{t}dt\right)^{+}\right)^{p}\right]\right)^{\frac{1}{p}}
\mathbb{P}\left(K\geq\frac{1}{T}\int_{0}^{T}S_{t}dt\right)^{\frac{1}{p}}
\nonumber
\\
&\leq e^{-rT}K\mathbb{P}\left(K\geq\frac{1}{T}\int_{0}^{T}S_{t}dt\right)^{\frac{1}{p'}}.
\nonumber
\end{align}
Thus, $\limsup_{T\rightarrow 0}T\log P(T)\leq-\frac{1}{p'}\mathcal{I}(K,S_{0})$. Since it holds for any $p'>1$, we proved the upper bound.

For the lower bound, for any sufficiently small $\epsilon>0$,
\begin{align}
P(T)&\geq e^{-rT}\mathbb{E}\left[\left(K-\frac{1}{T}\int_{0}^{T}S_{t}dt\right)1_{K\geq\frac{1}{T}\int_{0}^{T}S_{t}dt+\epsilon}\right]
\\
&\geq e^{-rT}\epsilon\mathbb{P}\left(K\geq\frac{1}{T}\int_{0}^{T}S_{t}dt+\epsilon\right),
\nonumber
\end{align}
which implies that $\liminf_{T\rightarrow 0}T\log P(T)\geq-\mathcal{I}(K-\epsilon,S_{0})$. By letting $\epsilon\downarrow 0$,
and the fact that the rate function $\mathcal{I}(K,S_{0})$ is continuous in $K$, see Proposition \ref{prop:1}
and Remark \ref{CtsRemark}, we proved the lower bound.
\end{proof}

\begin{proof}[Proof of Corollary \ref{InMoneyCor}]
(i) From the put-call parity,
\begin{align}
C(T)-P(T)&=e^{-rT}\mathbb{E}\left[\frac{1}{T}\int_{0}^{T}S_{t}dt-K\right]
\\
&=e^{-rT}\left[\frac{1}{T}\int_{0}^{T}e^{(r-q)t}S_{0}dt-K\right]
\nonumber
\\
&=
\begin{cases}
e^{-rT}[S_{0}-K] &\text{if $r=q$}
\\
e^{-rT}\left[\frac{1}{T(r-q)}[e^{(r-q)T}-1]S_{0}-K\right] &\text{if $r\neq q$}
\end{cases}
\nonumber
\\
&=
\begin{cases}
[S_{0}-K](1-rT)+O(T^{2}) &\text{if $r=q$}
\\
S_{0}-K-\frac{1}{2}(r+q)S_{0}T+KrT+O(T^{2}) &\text{if $r\neq q$}
\end{cases}
\nonumber,
\end{align}
as $T\rightarrow 0$. Therefore, for in-the-money call option, i.e. $S_{0}>K$,
from Theorem \ref{MainThm}, we get $C(T)=S_{0}-K-\frac{1}{2}(r+q)S_{0}T+KrT+O(T^{2})$ as $T\rightarrow 0$.

(ii) For in-the-money put option, i.e. $S_{0}<K$, from (i) and Theorem \ref{MainThm}, we get
$P(T)=K-S_{0}+\frac{1}{2}(r+q)S_{0}T-KrT+O(T^{2})$ as $T\rightarrow 0$.
\end{proof}

\begin{proof}[Proof of Theorem \ref{ATMThm}]
(i) For at-the-money call option,
\begin{equation}
C(T)=e^{-rT}\mathbb{E}\left[\left(\frac{1}{T}\int_{0}^{T}e^{(r-q)t}X_{t}dt-S_{0}\right)^{+}\right],
\end{equation}
where $X_{t}:=\frac{S_{t}}{e^{(r-q)t}}$ is a martingale and satisfies the SDE:
\begin{equation}
dX_{t}=\sigma(X_{t}e^{(r-q)t})X_{t}dW_{t},
\end{equation}
with $X_{0}=S_{0}$.

\textbf{Claim 1}. As $T\rightarrow 0$,
\begin{equation}\label{ClaimI}
\left|\mathbb{E}\left[\left(\frac{1}{T}\int_{0}^{T}e^{(r-q)t}X_{t}dt-S_{0}\right)^{+}\right]
-\mathbb{E}\left[\left(\frac{1}{T}\int_{0}^{T}X_{t}dt-S_{0}\right)^{+}\right]\right|=O(T).
\end{equation}
Let us prove \eqref{ClaimI}.
\begin{align}
&\left|\mathbb{E}\left[\left(\frac{1}{T}\int_{0}^{T}e^{(r-q)t}X_{t}dt-S_{0}\right)^{+}\right]
-\mathbb{E}\left[\left(\frac{1}{T}\int_{0}^{T}X_{t}dt-S_{0}\right)^{+}\right]\right|
\\
&\leq
\mathbb{E}\left[\left|\left(\frac{1}{T}\int_{0}^{T}e^{(r-q)t}X_{t}dt-S_{0}\right)^{+}
-\left(\frac{1}{T}\int_{0}^{T}X_{t}dt-S_{0}\right)^{+}\right|\right]
\nonumber
\\
&\leq
\mathbb{E}\left[\frac{1}{T}\int_{0}^{T}\left|e^{(r-q)t}-1\right|X_{t}dt\right]
\nonumber
\\
&=S_{0}\frac{1}{T}\int_{0}^{T}|e^{(r-q)t}-1|dt
\nonumber
\\
&=S_{0}\left|\frac{1}{T}\int_{0}^{T}(e^{(r-q)t}-1)dt\right|
\nonumber
\\
&=S_{0}\left|\frac{e^{(r-q)T}-1}{(r-q)T}-1\right|.
\nonumber
\end{align}
Hence, we proved \eqref{ClaimI}.

Next, let us define $\hat{X}_{t}$, which satisfies the SDE:
\begin{equation}
d\hat{X}_{t}=\sigma(S_{0})S_{0}dW_{t},
\qquad
\hat{X}_{0}=S_{0}.
\end{equation}

\textbf{Claim 2}. 
\begin{equation}\label{ClaimII}
\mathbb{E}\left[\max_{0\leq t\leq T}|X_{t}-\hat{X}_{t}|\right]=O(T),\qquad\text{as $T\rightarrow 0$}.
\end{equation}
Let us prove \eqref{ClaimII}. Note that
\begin{equation}
X_{t}-\hat{X}_{t}
=\int_{0}^{t}\left[\sigma(X_{s}e^{(r-q)s})X_{s}-\sigma(S_{0})S_{0}\right]dW_{s}.
\end{equation}
By It\^{o}'s isometry and the uniform Lipschitz assumption,
\begin{align}\label{est}
&\mathbb{E}[(X_{t}-\hat{X}_{t})^{2}]
\\
&=\int_{0}^{t}\mathbb{E}\left[\left(\sigma(X_{s}e^{(r-q)s})X_{s}-\sigma(S_{0})S_{0}\right)^{2}\right]ds
\nonumber
\\
&\leq 
2\int_{0}^{t}\mathbb{E}\left[\left(\sigma(X_{s})X_{s}-\sigma(S_{0})S_{0}\right)^{2}\right]ds
+2\int_{0}^{t}\mathbb{E}\left[\left(\sigma(X_{s}e^{(r-q)s})X_{s}-\sigma(X_{s})X_{s}\right)^{2}\right]ds
\nonumber
\\
&\leq 2\alpha^{2}\int_{0}^{t}\mathbb{E}[(X_{s}-S_{0})^{2}]ds
+2\beta^{2}\int_{0}^{t}(e^{(r-q)s}-1)^{2}\mathbb{E}[X_{s}^{4}]ds
\nonumber
\\
&\leq 4\alpha^{2}\int_{0}^{t}\mathbb{E}[(X_{s}-\hat{X}_{s})^{2}]ds
+4\alpha^{2}\int_{0}^{t}\mathbb{E}[(\hat{X}_{s}-S_{0})^{2}]ds
+2\beta^{2}\int_{0}^{t}(e^{(r-q)s}-1)^{2}\mathbb{E}[X_{s}^{4}]ds.
\nonumber
\end{align}
Note that $\hat{X}_{s}=S_{0}+\sigma(S_{0})S_{0}W_{s}$ and hence
$\mathbb{E}[(\hat{X}_{s}-S_{0})^{2}]=\sigma^2(S_{0}) S_{0}^{2}s$.
Also
\begin{equation}
|\sigma(x)x|=|\sigma(x)x-0\sigma(0)|\leq\alpha|x|
\end{equation}
so that $\sigma(x)\leq\alpha$.

Since $X_{t}-X_{0}$ is a martingale starting at $0$, by the
Burkholder-Davis-Gundy inequality, we get
\begin{equation*}
\mathbb{E}[(X_{t}-X_{0})^{4}]
\leq C\mathbb{E}[(\langle X\rangle_{t})^{2}], 
\end{equation*}
for some constant $C>0$, where $\langle X\rangle_{t}=\int_{0}^{t}\sigma^{2}(X_{s}e^{(r-q)s})X_{s}^{2}ds$
is the quadratic variation of $X_{t}$. Using the Cauchy-Schwarz inequality, we get
\begin{equation*}
\mathbb{E}[(X_{t}-X_{0})^{4}]
\leq C\mathbb{E}\left[\left(\int_{0}^{t}\sigma^{2}(X_{s}e^{(r-q)s})X_{s}^{2}ds\right)^{2}\right]
\leq C\mathbb{E}\left[\int_{0}^{t}\sigma^{4}(X_{s}e^{(r-q)s})ds\int_{0}^{t}X_{s}^{4}ds\right].
\end{equation*}
Since $\sigma(\cdot)\leq\alpha$ and $\mathbb{E}[X_{t}^{4}]\leq 8X_{0}^{4}+8\mathbb{E}[(X_{t}-X_{0})^{4}]$
(since $(\frac{x+y}{2})^{4}\leq\frac{x^{4}+y^{4}}{2}$ for any $x,y\geq 0$)
and $X_{0}=S_{0}$, we get, for any sufficiently small $t$, say $t\leq 1$,
\begin{equation*}
\mathbb{E}[X_{t}^{4}]
\leq 8S_{0}^{4}+8C\alpha^{4}\int_{0}^{t}\mathbb{E}[X_{s}^{4}]ds,
\end{equation*}
Gronwall's inequality states that if $\beta(\cdot)$ is non-negative and for any $t\geq 0$,
$u(t)\leq\alpha(t)+\int_{0}^{t}\beta(s)u(s)ds$, then
\begin{equation}
u(t)\leq\alpha(t)+\int_{0}^{t}\alpha(s)\beta(s)e^{\int_{s}^{t}\beta(r)dr}ds.
\end{equation}
Therefore, by Gronwall's inequality we have for any sufficiently small $t$, 
\begin{equation*}
\mathbb{E}[X_{t}^{4}]\leq 8S_{0}^{4}e^{8C\alpha^{4}t}.
\end{equation*}
Hence, there exists a constant $\gamma>0$ so that
\begin{equation}
2\beta^{2}\int_{0}^{t}(e^{(r-q)s}-1)^{2}\mathbb{E}[X_{s}^{4}]ds
\leq 16\beta^{2}S_{0}^4 e^{8C\alpha^{4}t}\int_{0}^{t}(e^{(r-q)s}-1)^{2}ds
\leq\gamma t^{2},
\end{equation}
for any sufficiently small $t>0$.
Plugging into \eqref{est}, we get
\begin{equation}
\mathbb{E}[(X_{t}-\hat{X}_{t})^{2}]
\leq 4\alpha^{2}\int_{0}^{t}\mathbb{E}[(X_{s}-\hat{X}_{s})^{2}]ds
+2\alpha^{2}\sigma^2(S_{0}) S_{0}^{2}t^{2}
+\gamma t^{2}.
\end{equation}
By Gronwall's inequality, we have
\begin{align}
\mathbb{E}[(X_{t}-\hat{X}_{t})^{2}]
&\leq[2\alpha^{2}\sigma^2(S_{0}) S_{0}^{2}+\gamma]t^{2}
+4\alpha^{2}\int_{0}^{t}[2\alpha^{2}\sigma^2 (S_{0}) S_{0}^{2}+\gamma]s^{2}e^{4\alpha^{2}(t-s)}ds
\\
&\leq[2\alpha^{2}\sigma^2(S_{0}) S_{0}^{2}+\gamma]\left[t^{2}+\frac{4}{3}\alpha^{2}t^{3}e^{4\alpha^{2}t}\right].
\nonumber
\end{align}
Hence, we conclude that there exists some universal constant $M>0$, so that for any sufficiently small $T>0$,
\begin{equation}
\mathbb{E}[(X_{T}-\hat{X}_{T})^{2}]\leq MT^{2}.
\end{equation}
Finally notice that $X_{t}-\hat{X}_{t}$ is a martingale since both $X_{t}$ and $\hat{X}_{t}$ are.
By Doob's martingale inequality, for sufficiently small $T>0$,
\begin{equation}
\mathbb{E}\left[\max_{0\leq t\leq T}|X_{t}-\hat{X}_{t}|\right]
\leq 2\left(\mathbb{E}[(X_{T}-\hat{X}_{T})^{2}]\right)^{\frac{1}{2}}
\leq 2\sqrt{M}T.
\end{equation}
Hence, we proved \eqref{ClaimII}.

\textbf{Claim 3}. For $T\rightarrow 0$,
\begin{equation}\label{ClaimIII}
\left|\mathbb{E}\left[\left(\frac{1}{T}\int_{0}^{T}X_{t}dt-S_{0}\right)^{+}\right]
-\mathbb{E}\left[\left(\frac{1}{T}\int_{0}^{T}\hat{X}_{t}dt-S_{0}\right)^{+}\right]\right|
=O(T).
\end{equation}
Let us prove \eqref{ClaimIII}. 
\begin{align}
&\left|\mathbb{E}\left[\left(\frac{1}{T}\int_{0}^{T}X_{t}dt-S_{0}\right)^{+}\right]
-\mathbb{E}\left[\left(\frac{1}{T}\int_{0}^{T}\hat{X}_{t}dt-S_{0}\right)^{+}\right]\right|
\\
&\leq\mathbb{E}\left|\frac{1}{T}\int_{0}^{T}X_{t}dt-\frac{1}{T}\int_{0}^{T}\hat{X}_{t}dt\right|
\nonumber
\\
&\leq\mathbb{E}\left[\max_{0\leq t\leq T}|X_{t}-\hat{X}_{t}|\right].
\nonumber
\end{align}
Therefore, \eqref{ClaimIII} follows from \eqref{ClaimII}.

\textbf{Claim 4}. 
\begin{equation}\label{ClaimIV}
\mathbb{E}\left[\left(\frac{1}{T}\int_{0}^{T}\hat{X}_{t}dt-S_{0}\right)^{+}\right]
=\sigma(S_{0})S_{0}\frac{\sqrt{T}}{\sqrt{3}}\mathbb{E}[Z1_{Z>0}],
\end{equation}
where $Z\sim N(0,1)$.
Let us prove \eqref{ClaimIV}. Note that $\hat{X}_{t}=S_{0}+\sigma(S_{0})S_{0}W_{t}$.
Hence, 
\begin{equation}
\frac{1}{T}\int_{0}^{T}\hat{X}_{t}dt-S_{0}
=\sigma(S_{0})S_{0}\frac{1}{T}\int_{0}^{T}W_{t}dt
\sim N\left(0,\sigma^2(S_{0}) S_{0}^{2}\frac{T}{3}\right).
\end{equation}
Hence, we proved \eqref{ClaimIV}. Indeed, we can compute that
\begin{equation}\label{ClaimV}
\mathbb{E}\left[Z1_{Z>0}\right]
=\frac{1}{\sqrt{2\pi}}\int_{0}^{\infty}xe^{-\frac{x^{2}}{2}}dx
=\frac{1}{\sqrt{2\pi}}.
\end{equation}
Finally, putting \eqref{ClaimI}, \eqref{ClaimII}, \eqref{ClaimIII}, \eqref{ClaimIV}, and \eqref{ClaimV} together, we proved the desired result.

(ii) For at-the-money put option, the proof is similar to (i) and is omitted.
\end{proof}

\subsection{Proofs of the Results in Section \ref{Sec:3}}

\begin{proof}[Proof of Proposition \ref{VarProp}]

We present here the proof of the solution of the variational problem
in Proposition \ref{VarProp} of the variational problem (\ref{Idef}).
This variational problem can be simplified by introducing the function $f(t)$
defined as $g(t) = \log S_0 + f(t)$. Expressed in terms of this function, 
the variational problem is stated as:
\begin{equation}\label{Idef}
\mathcal{I}(K,S_0) = \inf_f \frac12 \int_0^1  \left(
\frac{f'(t)}{\sigma(S_0 e^{f(t)})}\right)^2 dt,
\end{equation}
where $f(t)\in\mathcal{AC}[0,1]$ satisfies $f(0)=0$ and 
\begin{equation}\label{const2}
\int_0^1  e^{f(t)} dt= \frac{K}{S_0} \,.
\end{equation}

The constraint (\ref{const2}) can be taken into account by introducing
a Lagrange multiplier $\lambda$ and considering the variational problem
for the functional
\begin{equation}\label{Lambdadef}
\Lambda[f] = \frac12 
\int_0^1  \left(
\frac{f'(t)}{\sigma(S_0 e^{f(t)})}\right)^2 dt+ \lambda
\left( \int_0^1  e^{f(t)} dt- \frac{K}{S_0} \right)\,.
\end{equation}
with boundary condition $f(0)=0$.

First we recall a well-known result in the calculus of variations, see 
Sec.~IV.5 in \cite{CR1}, stated in a form which is representative
for the variational problems encountered in this paper.
\begin{lemma}\label{Lemma1}
Consider the variational problem of finding the extremum of the functional
\begin{equation}
\Lambda[x] = 
\frac12 \int_0^T \left(\frac{x'(t)}{\Sigma(x(t))}\right)^2 dt
 - \int_0^T V(x(t)) dt - f(x(T)),
\end{equation}
where $\Sigma(x), V(x)$ are $C^1$ functions, over the set of functions $x(t)$ 
satisfying the constraint
\begin{equation}\label{x0}
x(0) = x_0 \, .
\end{equation}
The optimal function $x(t)$ satisfies the Euler-Lagrange equation 
\begin{equation}\label{ELeq0}
\frac{d}{dt}\left( \frac{x'(t)}{\Sigma(x(t))}\right) = - V'(x(t)) \Sigma(x(t))
\end{equation}
with boundary condition (\ref{x0}) at $t=0$ and the transversality condition 
\begin{equation}\label{transv}
x'(T) = f(x(T)) \Sigma^2(x(T))
\end{equation}
at $t=T$.
\end{lemma}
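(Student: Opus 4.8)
The plan is to prove Lemma~\ref{Lemma1} by the standard first-variation argument for a free-endpoint problem in the calculus of variations, exactly in the spirit of Sec.~IV.5 of \cite{CR1}. Write the Lagrangian as $L(x,x') = \tfrac12\big(x'/\Sigma(x)\big)^2 - V(x)$, so that $\Lambda[x] = \int_0^T L(x(t),x'(t))\,dt - f(x(T))$, and suppose $x(t)$ is a (sufficiently smooth) extremizer. First I would perturb $x\mapsto x+\epsilon\eta$ with $\eta\in C^1[0,T]$ chosen so that $\eta(0)=0$ — which keeps the constraint \eqref{x0} satisfied — while leaving $\eta(T)$ free, and then impose $\frac{d}{d\epsilon}\Lambda[x+\epsilon\eta]\big|_{\epsilon=0}=0$.

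Carrying out the differentiation gives $\frac{d}{d\epsilon}\Lambda\big|_0 = \int_0^T\big(\partial_x L\,\eta + \partial_{x'}L\,\eta'\big)\,dt - f'(x(T))\,\eta(T)$. Integrating the $\partial_{x'}L\,\eta'$ term by parts and using $\eta(0)=0$ to drop the boundary contribution at $t=0$, I obtain
\[
\int_0^T\Big(\partial_x L - \tfrac{d}{dt}\partial_{x'}L\Big)\eta\,dt + \Big(\partial_{x'}L\big|_{t=T} - f'(x(T))\Big)\eta(T) = 0.
\]
Restricting first to variations with $\eta(T)=0$ and invoking the fundamental lemma of the calculus of variations yields the Euler--Lagrange equation $\frac{d}{dt}\partial_{x'}L = \partial_x L$. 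Here $\partial_{x'}L = x'/\Sigma(x)^2$ and $\partial_x L = -(x')^2\Sigma'(x)/\Sigma(x)^3 - V'(x)$, and the remaining step is a purely algebraic simplification: multiplying the equation by $\Sigma(x)$ and using the identity $\Sigma(x)\frac{d}{dt}\big(x'/\Sigma(x)^2\big) = \frac{d}{dt}\big(x'/\Sigma(x)\big) - (x')^2\Sigma'(x)/\Sigma(x)^2$, the term $-(x')^2\Sigma'(x)/\Sigma(x)^2$ cancels against the corresponding term in $\Sigma(x)\,\partial_x L$, leaving precisely \eqref{ELeq0}. Once the Euler--Lagrange equation holds, the interior integral above vanishes for every admissible $\eta$, so the surviving boundary term forces $\partial_{x'}L\big|_{t=T}=f'(x(T))$, i.e.\ $x'(T)/\Sigma(x(T))^2 = f'(x(T))$, which is the transversality condition \eqref{transv}.

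It remains only to dispatch the regularity bookkeeping: finiteness of $\Lambda[x]$ together with $\Sigma$ bounded below (as holds in all applications in this paper, e.g.\ by \eqref{assumpI}) forces $x\in H^1[0,T]$, and the smoothness needed to make \eqref{ELeq0} hold pointwise is bootstrapped from that equation itself; for merely absolutely continuous extremizers one runs the same argument in its du Bois-Reymond form and reaches the same conclusion. Since the lemma only asserts necessary conditions for an extremum, no existence or minimality argument is required. I do not expect any serious obstacle here — the computation is routine — and the only two points that genuinely require care are the reduction of the raw Euler--Lagrange equation to the divergence form \eqref{ELeq0} and keeping the signs straight in the endpoint term that produces the transversality condition \eqref{transv}.
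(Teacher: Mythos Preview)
Your proposal is correct and follows essentially the same route as the paper's proof: both perturb $x\mapsto x+\epsilon\eta$ with $\eta(0)=0$ and $\eta(T)$ free, differentiate at $\epsilon=0$, integrate by parts, and read off the Euler--Lagrange equation from the interior term and the transversality condition from the boundary term at $t=T$. Your presentation is slightly more detailed in isolating the algebraic reduction of the raw Euler--Lagrange equation $\frac{d}{dt}\partial_{x'}L=\partial_xL$ to the divergence form \eqref{ELeq0}, and you correctly note that the transversality condition should read $x'(T)=f'(x(T))\Sigma^2(x(T))$ (the $f$ in the lemma's statement is a typo for $f'$, as the paper's own derivation confirms).
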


\begin{proof}
Define $x_\epsilon(t)$ as a perturbation around the optimal function $x(t)$ 
\begin{equation}
x_\epsilon(t) = x(t) + \epsilon \eta(t)\, ,
\end{equation}
where $\eta(t)\in C^1$ is a function which vanishes at $t=0$ 
but is unconstrained at $t=T$
\begin{equation}\label{eta0}
\eta(0) = 0\,.
\end{equation}
A necessary condition that $\Lambda[x]$ has an extremum on $x(t)$ is that
we have
\begin{eqnarray}
\frac{d}{d\epsilon} \Lambda[x_\epsilon] |_{\epsilon=0} &=& 
\int_0^T \frac{x'(t) }{\Sigma^2(x(t))} \eta'(t) dt \\
& & - 
\int_0^T \left( \frac{[x'(t)]^2}{\Sigma^3(x(t))} \Sigma'(x(t)) 
+ V'(x(t)) \right) \eta(t) dt - f'(x(T)) \eta(T)  \nonumber \\
 &=& \int_0^T \eta(t) \left\{
- \frac{d}{dt} \left( \frac{x'(t)}{\Sigma^2(x(t))} \right) - 
\frac{[x'(t)]^2}{\Sigma^2(x(t))} \frac{\Sigma'(x(t))}{\Sigma(x(t))} - V'(x(t)) \right\} dt\nonumber\\
&+& \eta(T) \left( \frac{x'(T)}{\Sigma^2(x(T))} - f'(x(T)) \right) = 0
\nonumber
\end{eqnarray}
for any $\eta(t)$ satisfying the constraint (\ref{eta0}). 
In the second equality we integrated by parts in the first term.
At optimality the functional derivative $\frac{d}{d\epsilon}\Lambda[x_\epsilon]$ 
must vanish for any $\eta(t)$ satisfying the constraint $\eta(0)=0$.
This requires that the two terms vanish separately for any $\eta(t)$.
The vanishing of the first term gives the Euler-Lagrange 
equation (\ref{ELeq0}), and the vanishing of the second term gives the 
transversality condition (\ref{transv}).
\end{proof}

Application of this result to the variational problem (\ref{Lambdadef})
gives that the optimal function $f$ satisfies the Euler-Lagrange 
equation 
\begin{equation}\label{ELsimple}
\frac{d}{dt} \left( \frac{f'(t)}{\sigma(S_0 e^{f(t)})} \right) =
\lambda e^{f(t)} \sigma(S_0 e^{f(t)})\,,
\end{equation}
and the transversality condition
\begin{equation}
f'(1)=0.
\end{equation}

It is known \cite{CR1} that one can relax the conditions $\Sigma(x),
V(x) \in C^1$ to allow functions which are piecewise continuous. This is made 
explicit by writing the Euler-Lagrange equation (\ref{ELsimple}) in an 
alternative
form as given by Lemma~\ref{Lemma2}. Note that the derivative $\Sigma'(x)$
does not appear in this result anymore. This implies that the conditions 
(\ref{assumpI}),
(\ref{assumpII}) on $\sigma(\cdot)$ are sufficient for our purposes.

\begin{lemma}\label{Lemma2}
The Euler-Lagrange equation (\ref{ELsimple}) can be written alternatively as
\begin{equation}\label{Edef}
\frac12 \left( \frac{f'(t)}{\sigma(S_0 e^{f(t)})} \right)^2 
- \lambda e^{f(t)} = c
\end{equation}
with $c$ a constant.
\end{lemma}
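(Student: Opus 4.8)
The plan is to read (\ref{Edef}) as the \emph{first integral} (conserved ``energy,'' i.e. a Beltrami identity) of the autonomous Euler--Lagrange equation (\ref{ELsimple}). The Lagrangian density of the functional $\Lambda[f]$ in (\ref{Lambdadef}) is $L(f,f') = \frac12(f'/\sigma(S_0 e^{f}))^2 + \lambda e^{f}$, which has no explicit dependence on $t$; for such problems $f'\,\partial L/\partial f' - L$ is constant along extremals. Since $\partial L/\partial f' = f'/\sigma^2(S_0 e^{f})$, this quantity equals
\[
\frac{(f')^2}{\sigma^2(S_0 e^{f})} - \frac12\frac{(f')^2}{\sigma^2(S_0 e^{f})} - \lambda e^{f}
= \frac12\left(\frac{f'}{\sigma(S_0 e^{f})}\right)^2 - \lambda e^{f},
\]
which is precisely the left-hand side of (\ref{Edef}).

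Rather than merely quoting the Beltrami identity, I would give the short direct verification, which also makes transparent why no derivative of $\sigma$ is needed. Write $p(t) := f'(t)/\sigma(S_0 e^{f(t)})$; equation (\ref{ELsimple}) states that $p'(t) = \lambda e^{f(t)}\sigma(S_0 e^{f(t)})$, and since the right-hand side is continuous in $t$ by (\ref{assumpI})--(\ref{assumpII}), $p$ is $C^1$ and hence $f' = \sigma(S_0 e^{f})\,p$ is continuous, so $f \in C^1$. Then, using (\ref{ELsimple}) and the chain rule,
\begin{align*}
\frac{d}{dt}\left(\frac12 p(t)^2\right)
&= p(t)p'(t) = \frac{f'(t)}{\sigma(S_0 e^{f(t)})}\,\lambda e^{f(t)}\sigma(S_0 e^{f(t)}) \\
&= \lambda e^{f(t)} f'(t) = \frac{d}{dt}\left(\lambda e^{f(t)}\right).
\end{align*}
Therefore $t \mapsto \frac12 p(t)^2 - \lambda e^{f(t)}$ has vanishing derivative on $[0,1]$ and equals a constant $c$, which is exactly (\ref{Edef}).

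For the reverse implication, differentiating (\ref{Edef}) in $t$ gives $p(t)p'(t) = \lambda e^{f(t)} f'(t)$, i.e. $(f'(t)/\sigma(S_0 e^{f(t)}))(p'(t) - \lambda e^{f(t)}\sigma(S_0 e^{f(t)})) = 0$, which returns (\ref{ELsimple}) at every $t$ with $f'(t)\neq 0$; the exceptional set where $f'$ vanishes is negligible for the strictly monotone extremal produced in the proof of Proposition~\ref{VarProp}, so the two formulations are interchangeable for our purposes. The only substantive point is the autonomy of $L$ together with the elementary $C^1$ regularity of solutions of (\ref{ELsimple}); the rest is routine bookkeeping, and I do not foresee any real obstacle.
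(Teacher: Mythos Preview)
Your proof is correct and follows essentially the same approach as the paper: multiply both sides of (\ref{ELsimple}) by $f'(t)/\sigma(S_0 e^{f(t)})$ and recognize each side as a total derivative. Your additional framing via the Beltrami identity, the regularity remark, and the reverse implication are all sound but go beyond what the paper records; the paper's own proof is just the two-line computation you give in your display.
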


\begin{proof} 
Follows by multiplying both sides of (\ref{ELsimple}) with 
$f'(t)/\sigma(S_0 e^{f(t)})$
\begin{equation}
\frac12 \frac{d}{dt} \left( \frac{f'(t)}{\sigma(S_0 e^{f(t)})} \right)^2 =
 \lambda e^{f(t)} f'(t) =  \lambda \frac{d}{dt} e^{f(t)} \,,
\end{equation}
This reproduces (\ref{Edef}).
\end{proof}

>From (\ref{Edef}) and the transversality condition $f'(1)=0$ we have
\begin{equation}\label{Edef2}
\frac12 \left( \frac{f'(t)}{\sigma(S_0 e^{f(t)})} \right)^2 
- \lambda e^{f(t)} = -\lambda e^{f(1)}.
\end{equation}
Integrating over $t:(0,1)$ this gives
\begin{equation}\label{13}
\mathcal{I}(K,S_0) = \frac12 \int_0^1 \left( \frac{f'(t)}{\sigma(S_0 e^{f(t)})} \right)^2 dt
= - \lambda e^{f(1)} + \lambda \int_0^1 e^{f(t)} dt= 
 \lambda( K/S_0 - e^{f(1)}) \,.
\end{equation}
This relation expresses the rate function $\mathcal{I}(K,S_0)$ in terms of the 
terminal value of the optimal function $f(1)$ and the Lagrange multiplier $\lambda$. 

The optimal function $f(t)$, the solution of the variational problem (\ref{ELsimple}),
has the following qualitative behavior:

1. $f(t)$ is increasing $f'(t)>0$ for $\lambda<0$. This case corresponds to
$K>S_0$. For this case $f(1)>0$.

2. $f(t)$ is decreasing $f'(t)<0$ for $\lambda>0$. This case corresponds to 
$K < S_0$. For this case $f(1)<0$.

These properties follow from the Euler-Lagrange equation (\ref{ELsimple}).
Integrating this equation over $t:(t,1)$ and using the transversality condition
$f'(1)=0$ gives
\begin{equation}
f'(t) = - \lambda \Sigma(f(t)) \int_t^1 e^{f(s)} \Sigma(f(s)) dt.
\end{equation}
The factors multiplying $\lambda$ are negative, such that the sign of this
expression is opposite to that of $\lambda$. 
The integrand in the constraint (\ref{const2}) satisfies the inequality
$e^{f(t)} > e^{f(0)}= 1$ in the case 1.
and $e^{f(t)} < e^{f(0)} =1$ in case 2. These two cases correspond to $K>S_0$ and 
$K<S_0$, respectively.

We can use (\ref{Edef}) to eliminate $f'(t)$ in terms of $f(t)$ as
\begin{equation}\label{31}
f'(t) = 
\begin{cases}
\sqrt{-2\lambda} \Sigma(f(t)) \sqrt{e^{f(1)} - e^{f(t)}} \,, & K>S_0\,, \lambda<0  
\\
-\sqrt{2\lambda} \Sigma(f(t)) \sqrt{e^{f(t)} - e^{f(1)}} \,, & K<S_0\,, \lambda>0
\end{cases}.
\end{equation}

We will treat the two cases separately. 

{\bf Case 1.} $K > S_0$. We will show that the rate 
function is
\begin{equation}\label{Icase1}
\mathcal{I}(K,S_0) = 
\frac12 F^{(-)}(f_1) G^{(-)}(f_1),
\end{equation}
where $f_1 > 0$ is the solution of the equation
\begin{equation}
e^{f_1} - K/S_0 = \frac{G^{(-)}(f_1)}{F^{(-)}(f_1)},
\end{equation}
with
\begin{align}
G^{(-)}(f_1) &= \int_{0}^{f_1} \frac{1}{\Sigma(y)} \sqrt{e^{f_1} - e^y}dy, \\
F^{(-)}(f_1) &= \int_{0}^{f_1} \frac{1}{\Sigma(y)} 
\frac{1}{\sqrt{e^{f_1} - e^y}}dy \,.
\end{align}

{\bf Proof.} First we note that the rate function can be written in two
equivalent ways as
\begin{equation}
\mathcal{I}(K,S_0) = \lambda( K/S_0 - e^{f_1}) = 
\sqrt{-\frac{\lambda}{2}} \int_{0}^{f_1}
\frac{1}{\Sigma(y)} \sqrt{e^{f_1} - e^y}dy.
\end{equation}
The first equality is just equation (\ref{13}) with $f_1=f(1)$. 
The second equality follows by changing the integration variable in the 
definition of the rate function from $t$ to $f(t)$
\begin{align}
\mathcal{I}(K,S_0) &=
\frac12 \int_0^1 \left( \frac{f'(t)}{\sigma(S_0 e^{f(t)})} \right)^2 dt
\\
&=\frac12 \int_{0}^{f_1} \frac{1}{f'(t)} 
\left( \frac{f'(t)}{\sigma(S_0 e^{f(t)})} \right)^2 df
\nonumber
\\
&=\sqrt{-\frac{\lambda}{2}}
\int_{0}^{f_1} \frac{1}{\Sigma(y)} \sqrt{e^{f_1} - e^y} dy
\equiv \sqrt{-\frac{\lambda}{2}} G^{(-)}(f_1) \, ,\nonumber
\end{align}
where we used in the second line the relation (\ref{31}) to eliminate $f'(t)$
in the numerator and denominator.

We need a second equation for $(\lambda, f_1)$ in order to be able to eliminate
$\lambda$. This is obtained by writing
\begin{align}\label{21}
1 &= \int_0^1 dt = \int_{0}^{f_1} \frac{1}{f'(t)} df\\
&=
\frac{1}{\sqrt{-2\lambda}} \int_{0}^{f_1} \frac{1}{\Sigma(y)} 
\frac{1}{\sqrt{e^{f_1} - e^y}} dy\equiv \frac{1}{\sqrt{-2\lambda}}  
F^{(-)}(f_1)\,. \nonumber
\end{align}

Eliminating $\lambda$ between these equations we get the result (\ref{Icase1}).

{\bf Case 2.} $K < S_0$. We will show that the rate function is
\begin{equation}\label{Icase2}
\mathcal{I}(K,S_0) = 
\frac12 F^{(+)}(h_1) G^{(+)}(h_1),
\end{equation}
where $h_1 = -f(1) >0$ is given by the solution of the equation
\begin{equation}\label{138}
\frac{K}{S_0} - e^{-h_1}  = \frac{G^{(+)}(h_1)}{F^{(+)}(h_1)},
\end{equation}
with
\begin{align}
G^{(+)}(h_1) &= \int_{0}^{h_1} \frac{1}{\Sigma(-y)} 
\sqrt{e^{-y} - e^{-h_1}}dy,\\
F^{(+)}(h_1) &= \int_{0}^{h_1} \frac{1}{\Sigma(-y)} 
\frac{1}{\sqrt{e^{-y} - e^{-h_1}}}dy \,.
\end{align}

{\bf Proof.} 
The proof follows the same approach as in the previous case,
using the relation (\ref{31}) to eliminate $f'(t)$. 
The rate function is
\begin{align}
\mathcal{I}(K,S_0)&=
\frac12 \int_0^1 \left( \frac{f'(t)}{\sigma(S_0 e^{f(t)})} \right)^2 dt
\\
&=\frac12 \int_{0}^{f_1} \frac{1}{f'(t)} 
\left( \frac{f'(t)}{\sigma(S_0 e^{f(t)})} \right)^2 df
\nonumber
\\
&=-\sqrt{\frac{\lambda}{2}}
\int_{0}^{f_1} \frac{1}{\Sigma(y)} \sqrt{e^y-e^{f_1}} dy
\nonumber
\\
&=\sqrt{\frac{\lambda}{2}} 
\int_{0}^{h_1} \frac{1}{\Sigma(-y)} \sqrt{e^{-y}-e^{-h_1}} dy
\nonumber
\\
&\equiv  
\sqrt{\frac{\lambda}{2}} G^{(+)}(h_1) \,.
\nonumber 
\end{align}
A second equation is derived in a similar way to (\ref{21}) and reads
\begin{align}\label{21xx}
1 &= \int_0^1 dt = \int_{0}^{f_1} \frac{1}{f'(t)} df =
-\frac{1}{\sqrt{2\lambda}} \int_{0}^{f_1} \frac{1}{\Sigma(y)} 
\frac{1}{\sqrt{e^y-e^{f_1}}} dy\\
&=  \frac{1}{\sqrt{2\lambda}} \int_{0}^{h_1} \frac{1}{\Sigma(-y)} 
\frac{1}{\sqrt{e^{-y}-e^{-h_1}}}dy
\equiv \frac{1}{\sqrt{2\lambda}}  F^{(+)}(h_1) \,. \nonumber
\end{align}
Eliminating $\lambda$ between these equations we get the result (\ref{Icase2}).
Equation (\ref{138}) follows from (\ref{13}) by writing
\begin{align}
\mathcal{I}(K,S_0) = \lambda\left( \frac{K}{S_0} - e^{-h_1} \right) =
\frac12 F^{(+)}(h_1) G^{(+)}(h_1) \, ,
\end{align}
and using $\lambda = \frac12 (F^{(+)}(h_1))^2$.
\end{proof}


\begin{proof}[Proof of Proposition \ref{prop:1}]
(i) $K > S_0$. The condition for the minimum of the function in (\ref{J1}) is
\begin{equation}\label{J1min}
\frac{d}{d\varphi} 
\frac{(\mathcal{G}^{(-)}(\varphi))^2}{\varphi - \frac{K}{S_0}} = 
2 \mathcal{G}^{(-)}(\varphi) \frac12 \mathcal{F}^{(-)}(\varphi) 
\frac{1}{\varphi - \frac{K}{S_0}} -
(\mathcal{G}^{(-)}(\varphi))^2 \frac{1}{(\varphi - \frac{K}{S_0})^2} = 0 \, ,
\end{equation}
where we introduced
\begin{equation}
\frac{d}{d\varphi} \mathcal{G}^{(-)}(\varphi) = 
\frac12 \mathcal{F}^{(-)}(\varphi) \, ,
\end{equation}
with
\begin{equation}\label{F1def}
\mathcal{F}^{(-)}(\varphi) = \int_1^{\varphi} \frac{1}{z\sigma(S_0 z)} 
\frac{1}{\sqrt{\varphi-z}}dz
\,,\quad
\varphi \geq 1\,.
\end{equation}

The equation (\ref{J1min}) gives the value $\varphi_1$ at which the function
has an extremal value
\begin{equation}\label{J1eq}
\varphi_1 - \frac{K}{S_0} = 
\frac{\mathcal{G}^{(-)}(\varphi_1)}{\mathcal{F}^{(-)}(\varphi_1)} \,.
\end{equation}
This is identical to equation \eqref{eqf1}, with the identification
$\varphi_1 = e^{f_1}$, and $\mathcal{G}^{(-)}(\varphi_1) = G^{(-)}(f_1)$,
$\mathcal{F}^{(-)}(\varphi_1) = F^{(-)}(f_1)$.

Substituting the equation (\ref{J1eq}) into (\ref{J1}) we obtain the rate 
function
\begin{equation}
\mathcal{I}(K,S_0) = 
\frac12 \mathcal{G}^{(-)}(\varphi_1) \mathcal{F}^{(-)}(\varphi_1) \,.
\end{equation}
This agrees with the result \eqref{Iresult} for the rate function for $K\geq S_0$.

(ii) $K < S_0$. The condition for the extremum of the function in (\ref{J2}) 
reads
\begin{equation}\label{J2eq}
\frac{K}{S_0} - \chi_1 = 
\frac{\mathcal{G}^{(+)}(\chi_1)}{\mathcal{F}^{(+)}(\chi_1)}\, ,
\end{equation}
where we defined
\begin{equation}
\frac{d}{d\chi} \mathcal{G}^{(+)}(\chi) = - \frac12 \mathcal{F}^{(+)}(\chi)\, ,
\end{equation}
with
\begin{equation}\label{F2def}
\mathcal{F}^{(+)}(\chi) = 
\int_{\chi}^1 \frac{1}{z\sigma(S_0 z)} \frac{1}{\sqrt{z-\chi}}dz\,,\quad
0 < \chi \leq 1\,.
\end{equation}

The equation (\ref{J2eq}) is the same as the equation \eqref{eqh1},
with the identification $\chi_1  = e^{-h_1}$, and
$\mathcal{G}^{(+)}(\chi_1) = G^{(+)}(h_1)$,
$\mathcal{F}^{(+)}(\chi_1) = F^{(+)}(h_1)$.
Substituting this equation into (\ref{J2}) we obtain the rate function as 
\begin{equation}
\mathcal{I}(K,S_0 ) = \frac12 \mathcal{G}^{(+)}(\chi_1) \mathcal{F}^{(+)}(\chi_1)\, ,
\end{equation}
which agrees with the result in \eqref{Iresult} for $K\leq S_{0}$.
\end{proof}

\begin{proof}[Proof of Proposition \ref{prop:2}]
The proof is based on Lemma \ref{lemma:1} and Lemma \ref{lemma:2}.

i) $K>S_0$.
We will use the representation (\ref{J1}) to show that the
rate function $\mathcal{I}(K,S_0)$ is a monotonically increasing function
of $K$ for $K > S_0$. First we prove that the function 
$(\mathcal{G}^{(-)}(\varphi))^2$ 
satisfies the technical conditions required for $f(x)$ appearing in
Lemma~\ref{lemma:1}: it is positive and increasing for $\varphi>1$,
and has superlinear growth in $\varphi$ as $\varphi \to \infty$. The positivity
condition follows from the definition, and the increasing property follows
from the positivity of the integral $\mathcal{F}^{(-)}(\varphi)$ defined by 
(\ref{F1def}).

We prove next the superlinear growth condition as $\varphi \to \infty$. 
In the Black-Scholes model the function $\mathcal{G}^{(-)}(\varphi)$ can be
computed exactly and is given by
\begin{equation}
\mathcal{G}^{(-)}_{\rm BS}(\varphi) = 
\frac{1}{\sigma} \int_1^\varphi \frac{1}{z} \sqrt{\varphi-z}dz
= \frac{2}{\sigma} \left(
\sqrt{\varphi} \mbox{arctanh}\left(\sqrt{\frac{\varphi-1}{\varphi}}\right) - 
\sqrt{\varphi-1} \right).
\end{equation}
For $\varphi \to \infty$ this has the asymptotic expression
\begin{equation}\label{GminBS}
\mathcal{G}^{(-)}_{\rm BS}(\varphi) =
\frac{1}{\sigma} \left( \sqrt{\varphi} \log(4\varphi) -
2 \sqrt{\varphi} + O(1)\right)\,,\quad \varphi\to \infty.
\end{equation}
We used here the asymptotic expansion of the arctanh function of large argument
\begin{equation}
\mbox{arctanh}(1-x) = - \frac12 \log(x/2) + O(1)\,, \quad x \to 0_+.
\end{equation}
The asymptotic expansion (\ref{GminBS}) shows that 
$(\mathcal{G}^{(-)}_{\rm BS}(\varphi))^2$
grows faster than linear as $\varphi \to \infty$. This ensures the existence
of a finite solution $x_*$ to the equation (\ref{xeq}) in the BS model.
These results hold also in the general local volatility model with local
volatility function $\sigma(S) \leq \overline{\sigma}$ bounded from above, 
as this implies a lower bound on the function 
$(\mathcal{G}^{(-)}(\varphi))^2$ which has the same growth 
properties as $\varphi\to \infty$ as those obtained above in the BS model. 

ii) $K < S_0$. 
Using the representation (\ref{J2}) for the rate function,
we have from Lemma~\ref{lemma:2}
that $\mathcal{I}(K,S_0)$ is a decreasing function of $K$ for 
$0 < K < S_0$.
The function $(\mathcal{G}^{(+)}(\chi))^2$ satisfies the technical conditions
for $f(x)$ in Lemma~\ref{lemma:2}: positivity follows from its
definition, and the decreasing  property for $0 < \chi < 1$, follows from 
the positivity of the integral $\mathcal{F}^{(+)}(\chi)$ defined in (\ref{F2def}).

The applicability of Lemma~\ref{lemma:2} requires also that the infimum 
in (\ref{F2def}) is not reached at the lower boundary. 
We start by considering first the Black-Scholes model, where the function 
$\mathcal{G}^{(+)}(\chi)$ can be computed exactly and is given by
\begin{equation}
\mathcal{G}^{(+)}_{\rm BS}(\chi) = \frac{1}{\sigma} 
\int_\chi^1 \frac{1}{z} \sqrt{z-\chi}dz
= \frac{2}{\sigma} \left(
\sqrt{1-\chi} - 
\sqrt{\chi} \mbox{arctan}\left(\sqrt{\frac{1}{\chi} - 1}\right) 
\right).
\end{equation}
The function $\mathcal{F}^{(+)}_{\rm BS}(\chi)$ is given by
\begin{equation}
\mathcal{F}^{(+)}_{\rm BS}(\chi) = - 2 \frac{d}{d\chi} \mathcal{G}^{(+)}_{\rm BS}(\chi)
= \frac{2}{\sigma \sqrt{\chi}} 
\mbox{arctan} \sqrt{\frac{1}{\chi}-1}\,.
\end{equation}

This has the asymptotic expression as $\chi \to 0$
\begin{equation}\label{GplBS}
\mathcal{F}^{(+)}_{\rm BS}(\chi) = \frac{\pi}{\sigma\sqrt{\chi}} 
+ O(1)\,,\quad \chi\to 0\,.
\end{equation}
Thus we have
\begin{equation}\label{Gplusinfty}
\lim_{\chi \to 0} \frac{d}{d\chi} (\mathcal{G}^{(+)}_{\rm BS}(\chi))^2  = - \infty,
\end{equation}
where we used $\lim_{\chi\to 0} \mathcal{G}^{(+)}_{\rm BS}(\chi) = \frac{2}{\sigma} <\infty$.

This implies that the function $(\mathcal{G}^{(+)}_{\rm BS}(\chi))^2$
satisfies the conditions (\ref{lemma2cond}) required for $f(x)$ in 
Lemma~\ref{lemma:2}, which are sufficient to ensure that the infimum is not 
reached at the lower boundary  $a$. 
We conclude that the statement of Lemma~\ref{lemma:2} applies to the BS model.
These results hold also in the general local volatility model with local
volatility function $\underline{\sigma} \leq \sigma(S) \leq \overline{\sigma}$ 
bounded from below and above. The upper bound on $\sigma(S)$ 
implies a lower bound for $\mathcal{F}^{(+)}(\chi)$, and the lower bound on 
$\sigma(S)$ implies an upper bound for 
$\mathcal{G}^{(+)}(0)$. Taken together these conditions 
ensure that the result (\ref{Gplusinfty}) holds also for the general local 
volatility model.

\end{proof}

\begin{lemma}\label{lemma:1}
Let $f(x)$ be a positive and increasing function $f(x)>0, f'(x)>0$, 
and define
\begin{equation}\label{Fdef}
F(z) = \inf_{x > z} \left( \frac{f(x)}{x-z} \right)\,.
\end{equation}
Furthermore, assume that $f(x)$ grows faster than $x$ as $x\to \infty$.
Then $F(z)$ is a positive and increasing function 
\begin{equation}
F(z) > 0 \,,\qquad F'(z) >0 \,.
\end{equation}
\end{lemma}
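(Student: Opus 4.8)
The plan is to study the scalar minimization $F(z)=\inf_{x>z}\phi_z(x)$ with $\phi_z(x):=f(x)/(x-z)$, by first showing that the infimum is attained at an interior point $x_\ast=x_\ast(z)\in(z,\infty)$, and then reading off positivity and strict monotonicity of $F$ from elementary comparison inequalities for the family $\phi_z$.

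\textbf{Step 1 (interior minimizer and positivity).} First I would fix $z$ and note that $\phi_z$ is continuous and strictly positive on $(z,\infty)$ (from $f'>0$, so $f\in C^1$, and $f>0$). As $x\downarrow z$ the numerator stays $\ge f(z)>0$ by monotonicity of $f$ while $x-z\to0^+$, so $\phi_z(x)\to+\infty$; as $x\to\infty$ the hypothesis $f(x)/x\to\infty$ gives $\phi_z(x)=\frac{f(x)}{x}\cdot\frac{1}{1-z/x}\to+\infty$. Hence $\phi_z$ attains a minimum at some $x_\ast\in(z,\infty)$ and $F(z)=\phi_z(x_\ast)>0$. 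Differentiating $\phi_z$ at this interior minimum gives the first-order condition $f'(x_\ast)(x_\ast-z)=f(x_\ast)$, equivalently $F(z)=f'(x_\ast(z))$.

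\textbf{Step 2 (strict monotonicity and continuity).} For $z_1<z_2$ the set $\{x:x>z_2\}$ is contained in $\{x:x>z_1\}$, and for every $x>z_2$ one has $0<x-z_2<x-z_1$, hence $\phi_{z_1}(x)<\phi_{z_2}(x)$ because $f(x)>0$. Taking the infimum over $x>z_2$ on the left and then evaluating the right side at $x=x_\ast(z_2)$ gives $F(z_1)\le\phi_{z_1}(x_\ast(z_2))<\phi_{z_2}(x_\ast(z_2))=F(z_2)$, so $F$ is strictly increasing. Letting $z_1\uparrow z_2$ in $F(z_1)\le\phi_{z_1}(x_\ast(z_2))$ yields $\limsup_{z_1\uparrow z_2}F(z_1)\le F(z_2)$, which together with monotonicity gives continuity of $F$ (an analogous bound with $x_\ast(z_1)$ handles the right limit).

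\textbf{Step 3 (positivity of the derivative).} For fixed $z_0$, the single $C^1$ function $\psi(z):=f(x_\ast(z_0))/(x_\ast(z_0)-z)$ satisfies $F(z)\le\psi(z)$ for all $z<x_\ast(z_0)$, with equality at $z=z_0$, since $x_\ast(z_0)$ is admissible in the infimum defining $F(z)$. Thus $F-\psi$ has a local maximum at $z_0$, so at any point where $F$ is differentiable (and $F$, being monotone and continuous, is differentiable a.e.) one gets $F'(z_0)=\psi'(z_0)=f(x_\ast(z_0))/(x_\ast(z_0)-z_0)^2>0$. Alternatively, when $x_\ast(z)$ is the unique minimizer and the second-order condition is nondegenerate, the implicit function theorem makes $z\mapsto x_\ast(z)$ a $C^1$ map, and differentiating $F(z)=f(x_\ast(z))/(x_\ast(z)-z)$ with the first-order condition of Step 1 makes the $x_\ast'(z)$ contributions cancel, leaving the same formula $F'(z)=f(x_\ast(z))/(x_\ast(z)-z)^2>0$; this is the envelope theorem.

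\textbf{Main obstacle.} The only step carrying real content is Step 1: the hypotheses that $f$ is \emph{positive} and grows \emph{superlinearly} are exactly what prevent the minimizing $x$ from escaping to $z^+$ or to $+\infty$, and neither can be dropped (for instance $f(x)=x$ gives $F\equiv1$, which is not strictly increasing). After that, strict monotonicity is a one-line comparison, and the only delicate point is the literal derivative claim $F'(z)>0$, which I would establish via the ``touching from above'' argument of Step 3 to avoid assuming more regularity on $f$ than $C^1$.
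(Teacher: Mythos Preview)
Your proof is correct and takes a genuinely different, more elementary route than the paper. The paper writes down the first-order condition $x_*-z=f(x_*)/f'(x_*)$, then differentiates it implicitly in $z$ to obtain $\frac{dx_*}{dz}=\frac{(f'(x_*))^2}{f''(x_*)f(x_*)}$, and concludes via $F'(z)=f''(x_*)\frac{dx_*}{dz}=\frac{(f'(x_*))^2}{f(x_*)}>0$. This silently assumes $f\in C^2$ and that $x_*(z)$ is a well-defined smooth function (unique, nondegenerate minimizer). Your argument avoids all of that: strict monotonicity comes in Step~2 from the pointwise inequality $\phi_{z_1}(x)<\phi_{z_2}(x)$ for $x>z_2$, which needs nothing beyond $f>0$; and in Step~3 your touching-from-above argument yields $F'(z_0)=f(x_*(z_0))/(x_*(z_0)-z_0)^2$ using only $f\in C^1$, with no appeal to $f''$ or to smoothness of $x_*$. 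The two derivative formulas agree via the first-order condition $f'(x_*)(x_*-z)=f(x_*)$. What the paper's approach buys is an explicit expression for $dx_*/dz$ (which it does not actually need here); what your approach buys is a cleaner hypothesis and a proof that survives non-uniqueness of $x_*$. Your Step~1 is also more careful than the paper in verifying that the infimum is attained in the interior, not just stating that superlinear growth ``ensures'' it.
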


\begin{proof}
The condition for the extremum of the function appearing in the definition
of $F(z)$ is
\begin{equation}
\frac{f'(x)}{x-z} - \frac{f(x)}{(x-z)^2} = 0 \, .
\end{equation}
Denote the solution of this equation $x_*$. This is given explicitly as
\begin{equation}\label{xeq}
x_* - z = \frac{f(x_*)}{f'(x_*)} \,.
\end{equation}
This equation will have a solution for $z < x_* < \infty$ if $f(x)$ grows 
faster than linear as $x\to \infty$. This ensures that the infimum in
(\ref{Fdef}) is not reached at $x\to \infty$.

Substituting into (\ref{Fdef}) we get
\begin{equation}
F(z) = f'(x_*)  > 0\,,
\end{equation}
where we used $f'(x_*)>0$. 
In order to prove the $F'(z)>0$ property, we compute the derivative with 
respect to $z$
\begin{equation}\label{Fp}
F'(z) = f''(x_*) \frac{dx_*}{dz} = 
f''(x_*) \frac{(f'(x_*))^2}{f''(x_*) f(x_*)} > 0\,.
\end{equation}
The second equality is obtained by taking a derivative of (\ref{xeq}) 
with respect to $z$
\begin{eqnarray}
\frac{dx_*}{dz} - 1 &=& \frac{d}{dz} \frac{f(x_*)}{f'(x_*)}= 
\frac{d}{dx}\left( \frac{f(x_*)}{f'(x_*)}\right) 
\frac{dx_*}{dz} \\
&=& \frac{(f'(x_*))^2 - f(x_*) f''(x_*)}{(f'(x_*))^2}\frac{dx_*}{dz} \nonumber
\end{eqnarray}
which gives
\begin{equation}
\frac{dx_*}{dz} = \frac{(f'(x_*))^2}{f''(x_*) f(x_*)} \,.
\end{equation}
The inequality (\ref{Fp}) proves that $F(z)$ is a monotonically increasing 
function.

Note that we do not need a convexity condition on $f(x)$ to obtain the
monotonicity of $F(z)$. If $f''(x)>0$ then we get additionally that 
$\frac{dx_*}{dz}>0$, but this is not required for the monotonicity of $F(z)$.
\end{proof}

\begin{lemma}\label{lemma:2}
Let $f(x):[a,b] \to \mathbb{R}$ be a positive and decreasing function 
$f(x)>0, f'(x) < 0$, and define
\begin{equation}\label{Fdef2}
F(z) = \inf_{x < z} \left( \frac{f(x)}{z - x} \right)\,.
\end{equation}
Assuming that the infimum is not achieved on the boundary at $x=a$,
then $F(z)$ is a positive and decreasing function 
\begin{equation}
F(z) > 0 \,,\qquad F'(z) < 0 \,,\qquad a < z \leq b\,.
\end{equation}
\end{lemma}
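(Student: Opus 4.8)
The plan is to mirror the proof of Lemma~\ref{lemma:1}, with the two endpoints of the optimization interval interchanged. Fix $z\in(a,b]$ and set $\phi(x):=f(x)/(z-x)$ for $x\in[a,z)$. Since $f(z)>0$ we have $\phi(x)\to+\infty$ as $x\to z^-$, so the only way the infimum in \eqref{Fdef2} could fail to be attained at an interior critical point is if it sat at the left endpoint $x=a$; ruling this out is exactly the hypothesis of the lemma. As the explicit sufficient condition behind that hypothesis (the one invoked in the proof of Proposition~\ref{prop:2}), I would record the requirement that $f'(x)\to-\infty$ as $x\to a^+$: this makes $\phi'(a)=-\infty<0$ for every $z>a$, so that, together with $\phi(x)\to+\infty$ at the right end, the minimizer $x_*=x_*(z)$ necessarily lies in the open interval $(a,z)$ and satisfies $\phi'(x_*)=0$.

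The remaining steps are a short computation. The first-order condition $\phi'(x_*)=0$, namely $f'(x_*)(z-x_*)+f(x_*)=0$, gives
\begin{equation}
z-x_* = -\frac{f(x_*)}{f'(x_*)}\,,
\end{equation}
whose right-hand side is positive because $f>0$ and $f'<0$, consistent with $x_*<z$. Substituting back into \eqref{Fdef2} yields at once
\begin{equation}
F(z) = \frac{f(x_*)}{z-x_*} = -f'(x_*) > 0\,.
\end{equation}
For the monotonicity, differentiate the relation $z-x_*=-f(x_*)/f'(x_*)$ implicitly in $z$, exactly as in Lemma~\ref{lemma:1}, to obtain $\frac{dx_*}{dz}=\frac{(f'(x_*))^2}{f(x_*)f''(x_*)}$, and hence
\begin{equation}
F'(z) = -f''(x_*)\,\frac{dx_*}{dz} = -\frac{(f'(x_*))^2}{f(x_*)} < 0\,.
\end{equation}
Note that, just as in Lemma~\ref{lemma:1}, the factor $f''(x_*)$ cancels, so no convexity of $f$ is required; only $f(x_*)>0$ and $f'(x_*)<0$ are used.

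The step I expect to be the main obstacle is the one already isolated above: proving that the infimum in \eqref{Fdef2} is genuinely attained at an interior point rather than at $x=a$, and that this point is a global (not merely local) minimizer, so that the identity $F(z)=-f'(x_*(z))$ holds for every $z\in(a,b]$. This is precisely the place where a case-by-case check enters in Proposition~\ref{prop:2} (there via $\frac{d}{d\chi}(\mathcal{G}^{(+)}_{\rm BS}(\chi))^2\to-\infty$ as $\chi\to0$, which establishes the required endpoint behavior for the relevant $f$). A minor technical point is that the computation of $F'(z)$ presupposes $f\in C^2$ near $x_*$ and $x_*(\cdot)$ differentiable, which the implicit function theorem supplies once $f''(x_*)\ne 0$; at isolated points where $f''$ vanishes one still gets $F'(z)<0$ by continuity of $-(f'(x_*))^2/f(x_*)$. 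One could alternatively reduce Lemma~\ref{lemma:2} to Lemma~\ref{lemma:1} by the reflection $x\mapsto-x$, $z\mapsto-z$, which turns the decreasing $f$ on $[a,b]$ into an increasing function; but the growth hypothesis in Lemma~\ref{lemma:1} governs behavior at $+\infty$, whereas here the sensitive endpoint is the finite one $a$, so the direct argument above is the cleaner route.
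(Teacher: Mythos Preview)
Your proof is correct and follows essentially the same route as the paper: derive the first-order condition $x_*-z=f(x_*)/f'(x_*)$, substitute to get $F(z)=-f'(x_*)>0$, and differentiate implicitly to obtain $F'(z)=-(f'(x_*))^2/f(x_*)<0$ with the $f''$ factor cancelling. The extra commentary you add on the endpoint condition $f'(x)\to-\infty$ as $x\to a^+$, on the regularity needed for the implicit differentiation, and on the possible reduction by reflection is sound and matches the paper's own treatment of the boundary issue via condition~(\ref{lemma2cond}).
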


\begin{proof}
The condition for the extremum of the function appearing in the definition
of $F(z)$ is
\begin{equation}
\frac{f'(x)}{z-x} + \frac{f(x)}{(z-x)^2} = 0 \, .
\end{equation}
Denote the solution of this equation $x_*$. This is given explicitly as
\begin{equation}\label{xeq2}
x_* - z = \frac{f(x_*)}{f'(x_*)} \,.
\end{equation}
We will assume that this equation has a solution for $a< x_* < z$. 
One possible way to ensure that the infimum is not reached at $a$ is to
require
\begin{equation}\label{lemma2cond}
\lim_{x\to a} f(x) < \infty\,,\quad
\lim_{x\to a} f'(x) = -\infty\,.
\end{equation}

Substituting into (\ref{Fdef2}) we get
\begin{equation}
F(z) = -f'(x_*)
\end{equation}
Let us prove the $F'(z)<0$ property. We have
\begin{equation}\label{Fp2}
F'(z) = - f''(x_*) \frac{dx_*}{dz} = -
f''(x_*) \frac{(f'(x_*))^2}{f''(x_*) f(x_*)} < 0\,.
\end{equation}
The second equality is obtained by taking a derivative of (\ref{xeq2}) with 
respect to $z$, which gives the same result as in the proof of 
Lemma~\ref{lemma:1}
\begin{equation}
\frac{dx_*}{dz} = \frac{(f'(x_*))^2}{f''(x_*) f(x_*)} \,.
\end{equation}
The inequality (\ref{Fp2}) proves that $F(z)$ is
a monotonically decreasing function.
\end{proof}


\begin{proof}[Proof of Proposition \ref{RateFunctionBS}]

The variational problem in Proposition~\ref{VarProp} simplifies for 
the case of a constant local volatility function $\sigma(S) = \sigma$, 
corresponding to the Black-Scholes model. 
The rate function is given by
\begin{equation}\label{Jcaldef}
\mathcal{J}_{\rm BS}(K/S_0) = 
\sigma^2 \mathcal{I}_{\rm BS}(K,S_0)=\frac12 \int_0^1 [f'(t)]^2 dt,
\end{equation}
where $f(t)$ is the solution of the Euler-Lagrange equation (the constant $a$
is related as $a=\lambda\sigma^2$ to the Lagrage multiplier appearing in the 
proof of Proposition~\ref{VarProp})
\begin{equation}\label{ELeq}
f''(t) = a e^{f(t)},
\end{equation}
with boundary conditions
\begin{equation}\label{BC}
f(0) = 0\,,\qquad f'(1)=0\,.
\end{equation}
The unknown constant $a$ is determined by the condition
\begin{equation}\label{const}
\int_0^1 dt e^{f(t)} = \frac{K}{S_0}\,.
\end{equation}

The solution of the differential equation (\ref{ELeq}) can be found exactly.
Two independent solutions of this equation are
\begin{align}\label{g1sol}
& f_1(x) = \beta x - 2 \log \left(
\frac{e^{\beta x} +\gamma}{1 + \gamma} \right), \\
\label{g2sol}
& f_2(x) = - 2 \log \left| \cos( \xi x + \eta) \right| + 
2 \log \left| \cos\eta \right| \,.
\end{align}
The solution $f_1(x)$ of the equation (\ref{ELeq}) was given 
in \cite{GuasoniRobertson},
where the same equation appears in the context of an optimal importance sampling 
problem for Asian options in continuous time in the Black-Scholes model. 

The parameters in these functions are determined by the boundary conditions 
(\ref{BC}), and by requiring that the functions satisfy the Euler-Lagrange
equation (\ref{ELeq}). 

It is easy to check, by direct substitution of (\ref{g1sol}), (\ref{g2sol}) 
into (\ref{ELeq}), that these functions are solutions of the Euler-Lagrange 
equation. 
The boundary condition at $x=0$ is satisfied automatically.
Matching the constant factor in the Euler-Lagrange equation and
requiring that also the boundary condition at $x=1$ is satisfied, gives 
the following constraints.

For $f_1(x)$ we have the equations
\begin{align}\label{eq1}
& 2\gamma \beta^2 = - a (1+\gamma)^{2},\\
\label{eq2}
&  \gamma = e^{\beta} \,,
\end{align}
which determine $a,\gamma$ in terms of $\beta$. 

For $f_2(x)$ we have
\begin{align}\label{eq1m}
& 2\xi^2 = a \cos^2| \eta|, \\
\label{eq2m}
& 2\xi \tan|\xi + \eta| = 0,
\end{align}
which give $\eta = -\xi + k\pi$ with $k\in\mathbb{Z}$, 
and determine $a$ in terms of $\xi$. 
The multiple solutions for $\eta$ give the
same solution $f_2(x)$ as we have $\cos(\xi(x-1)+k\pi) = (-1)^k 
\cos(\xi(x-1))$.

The constants $\xi,\beta$ can be determined from the constraint (\ref{const}).
For the solution (\ref{g1sol}), this is
\begin{equation}\label{eq3}
\int_0^1 e^{f_1(x)} dx = \frac{1 + \gamma}{e^{\beta} + \gamma} 
\frac{e^\beta - 1}{\beta} = \frac{1}{\beta} \sinh\beta = \frac{K}{S_0},
\end{equation}
which reproduces (\ref{betaeq}). This is an equation for $\beta$ which 
has solutions only for $K>S_0$.

For the solution (\ref{g2sol}) we obtain
\begin{equation}\label{eq3m}
\int_0^1 e^{f_2(x)} dx= \int_0^1 \frac{\cos^2\xi}{\cos^2(\xi(x-1))}dx
= \frac{1}{\xi} \cos^2\xi  \tan\xi = \frac{K}{S_0} \,,\qquad 
0\leq \xi < \frac{\pi}{2}\,.
\end{equation}
For $|\xi|\geq\frac{\pi}{2}$ the integral is divergent. 
This reproduces (\ref{lambdaeq}), and gives an equation for $\xi$ which 
has solutions only for $K<S_0$. The solution must satisfy $\xi\in(-\frac{\pi}{2},\frac{\pi}{2})$.
Note that if $\xi^{\ast}$ is a solution to \eqref{lambdaeq} in $(0,\frac{\pi}{2})$, then $-\xi^{\ast}\in(-\frac{\pi}{2},0)$ is also
a solution to \eqref{lambdaeq} which yields the same value of $\mathcal{J}(K/S_{0})$.
Therefore, without loss of generality, we can assume that $\xi\in(0,\frac{\pi}{2})$ which
has a unique solution to \eqref{lambdaeq}.

In conclusion, the solution of the Euler-Lagrange equation (\ref{ELeq}) 
with the boundary conditions (\ref{BC}) and constraint (\ref{const}) is
\begin{equation}\label{fsol}
f(x)=
\begin{cases}
\beta x - 2 \log \left(
\frac{e^{\beta x} + e^\beta}{1 + e^\beta} \right) & \, K \geq S_0 \\
\log\left(\frac{\cos^2\xi}{\cos^2 (\xi(x-1))}\right) & \, K \leq S_0
\end{cases}\,,
\end{equation}
where $\beta$ and $\xi$ are the solutions of the equations (\ref{betaeq}) 
and (\ref{lambdaeq}), respectively.

Finally, the rate function $\mathcal{J}_{\rm BS}(K/S_0)$ is found by 
substituting the solution (\ref{fsol}) for $f(x)$ into the equation 
(\ref{Jcaldef}) and performing the integration.
This reproduces the result (\ref{JBSresult}), which concludes the proof of 
Proposition~\ref{RateFunctionBS}.
\end{proof}

\begin{proof}[Proof of Proposition \ref{prop:asympt}]
(1) $K\geq S_0$. For $K/S_0 \to \infty$, the solution of the equation (\ref{betaeq})
approaches $\beta \to \infty$. This suggests writing this equation as
\begin{equation}
e^\beta = (2\beta) \frac{K}{S_0} \frac{1}{1-e^{-2\beta}},
\end{equation}
or
\begin{equation}\label{betaiter}
\beta = x + \log (2\beta) - \log ( 1 - e^{-2\beta}) 
      = x + \log (2\beta) + O(e^{-2\beta}),
\end{equation}
with $x=\log(K/S_0)$. This implies that $\beta=x + O(\log x)$ as $x\rightarrow\infty$, and
we can improve this estimate by iteration of (\ref{betaiter}),
starting with the first order approximation
\begin{equation}
\beta^{(0)} = x + o(x) .
\end{equation}
By substitution into (\ref{betaiter}) we get the successive iterations
\begin{eqnarray}
\beta^{(1)} &=& x + \log (2x) + O(e^{-2x}), \\
\beta^{(2)} &=& x + \log (2x) + \log [2 (x + \log(2x))] + O(e^{-2x}) \\
&=& x + \log(2x) + \log(2x) + \log\left( 1 + \frac{\log(2x)}{x}\right) + O(e^{-2x}) \nonumber \\
&=& x + 2 \log(2x) + \frac{\log(2x)}{x} + O(x^{-2})\,. \nonumber 
\end{eqnarray}
The asymptotic expansion of the rate function is obtained by substituting into 
(\ref{JBSresult}), and expanding to the order shown.
This gives the result (\ref{JBSasympt}).

(2) $K \leq S_0$. The parameter $\xi$ is obtained by solving the equation 
(\ref{lambdaeq}). It is clear that as $K/S_0\to 0$,
we have $\xi \to \pi/2$. 

It is convenient to introduce $\zeta$ defined as $\xi = \frac{\pi}{2}
 - \zeta$ with $\zeta \to 0$ in the $K/S_0\to 0$ limit. This is given by
the solution of the equation
\begin{equation}
\sin(2\zeta) = \frac{K}{S_0} (\pi - 2\zeta)\,.
\end{equation}
This equation can be solved again by iteration starting with $\zeta^{(0)}=0$.

The first two iterations are
\begin{eqnarray}
\zeta^{(1)} &=& \frac{\pi}{2} \frac{K}{S_0}\, , \\
\zeta^{(2)} &=& \frac{\pi}{2} \frac{K}{S_0} 
  \left(1 - \frac{K}{S_0}\right) + O((K/S_0)^3)\,.
\end{eqnarray}
Finally, substituting into (\ref{JBSresult}) we obtain (\ref{JBSasympt}).
\end{proof}

\begin{proof}[Proof of Proposition \ref{LVexpRF}]
We present the proof on the case $K \geq S_0$, the case $K<S_0$ is treated 
in a similar way, and leads to the same final result (\ref{RFexp}). 

For given log-strike $x=\log(K/S_0) \geq 0$, one has to find $f_1$ by 
solving the equation (\ref{eqf1}), and the rate function is obtained by 
substituting this value for $f_1$ into (\ref{Iresult}). We note that as 
$x \downarrow 0$, we have $f_1\downarrow 0$. Therefore it is 
reasonable to look for a solution for $f_1$ by expanding in $x$ around the 
ATM point $x=0$.

It is convenient to introduce the auxiliary variable $z_1$ such that
$f_1 = Y(z_1)$, with $Y(z) := Z^{-1}(z)$ the inverse of the function 
$Z(y) = \int_0^y \frac{dw}{\sigma(S_0e^w)}$. From the definition it is 
clear that for $x$ small, $z_1$ is also small, and they both go to zero 
simultaneously.
The rationale of introducing $z_1$ is to absorb the dependence on $\sigma(S)$
in $F^{(-)}(f_1), G^{(-)}(f_1)$ into a change of the integration variable.

The strategy of the proof will be:

Step 1. Express the equation (\ref{eqf1}) for $f_1$ as an equation for $z_1$,
expanded to a given order in $z_1$.

Step 2. Invert the expansion obtained in Step 1 and derive an expansion for $z_1$
in powers of the log-strike $x$ to a given order. 

Step 3. Express the result (\ref{Iresult}) for the rate function ${\mathcal I}(K,S_0)$
for $K\geq S_0$ as a series in $z_1$. Further, use here the expansion for 
$z_1$ in powers of $x$ obtained in Step 2. 

{\bf Step 1.}
We start by deriving the expansion of the functions $G^{(-)}(f_1)$ and 
$F^{(-)}(f_1)$ in powers of $z_1 = Z(f_1)$. 
First we absorb the dependence on the local volatility function $\sigma(S)$ in
the definitions of these functions into a new integration variable $z$ defined
such that $dz = dy/\sigma(S_0e^y)$. This gives
\begin{eqnarray}
G^{(-)}(f_1) &=& \int_0^{f_1} \frac{1}{\sigma(S_0 e^y)} \sqrt{e^{f_1} - e^{y}} dy=
                 \int_0^{z_1}  \sqrt{e^{Y(z_1)} - e^{Y(z)}} dz \, ,\\
F^{(-)}(f_1) &=& \int_0^{f_1} \frac{1}{\sigma(S_0 e^y)} 
                 \frac{1}{\sqrt{e^{f_1} - e^{y}}} dy=
                 \int_0^{z_1} \frac{dz}{\sqrt{e^{Y(z_1)} - e^{Y(z)}}} \,,
\end{eqnarray}
where $Y(z)$ is the inverse of the function $Z(y)= \int_0^y
\frac{dw}{\sigma(S_0e^w)}$, and $z_1 = Z(f_1)$.

Substituting the Taylor series of $Y(z)$ in (\ref{Ydef}), 
expanding in $z, z_1$ to a given order, and evaluating the integrals, we get
\begin{eqnarray}
G^{(-)}(f_1) &=& \sqrt{b_1} z_1^{3/2} \bigg(
\frac23 + \frac{7}{30b_1} (b_1^2+2b_2) z_1 \\
& & +
\frac{1}{1680 b_1^2}(81b_1^4+628 b_1^2 b_2 - 284 b_2^2 + 912 b_1 b_3) z_1^2
+ O(z_1^3)
\bigg) \, ,\nonumber \\
F^{(-)}(f_1) &=& \frac{1}{\sqrt{b_1}} \sqrt{z_1} \bigg(
2 - \frac{5}{6b_1} (b_1^2+2b_2) z_1 \\
& & 
+ \frac{1}{240b_1^2} (41 b_1^4 - 12 b_1^2 b_2 + 516 b_2^2 - 528 b_1 b_3)
z_1^2 + O(z_1^3) \bigg) \,.\nonumber
\end{eqnarray}

Next we express the equation (\ref{eqf1}) giving $f_1$ in terms of the 
log-strike $x=\log(K/S_0)$, as an expansion in $z_1$. This is
\begin{equation}\label{eqf1p}
x = \log\left( e^{f_1} - \frac{G^{(-)}(f_1)}{F^{(-)}(f_1)} \right),
\end{equation}
Recalling that $f_1 = Y(z_1) = b_1 z_1 + b_2 z_1^2 + \cdots $, we obtain
by substitution on the right-hand side and Taylor expansion in powers of $z_1$
\begin{eqnarray}\label{xseries}
x &=& \log \left( e^{Y(z_1)} - \frac{G^{(-)}(f_1)}{F^{(-)}(f_1)} \right) \\
&=&  \frac23 b_1 z_1 + \frac{1}{45}( b_1^2 + 22 b_2) z_1^2 \nonumber \\
& & + 
\frac{1}{2835 b_1}
(b_1^4+150 b_1^2 b_2 + 48 b_2^2 + 1026 b_1 b_3) z_1^3 + O(z_1^4) \,. \nonumber
\end{eqnarray}

{\bf Step 2.}
Next we invert the series (\ref{xseries})
to get $z_1$ as an expansion in $x$
\begin{eqnarray}
z_1 &=& \frac{3}{2b_1} x + \left(-\frac{3}{40 b_1} - \frac{33b_2}{20b_1^3} \right) x^2 \\
&+& \frac{1}{1400 b_1^5} (8b_1^4 + 87 b_1^2 b_2 + 4962 b_2^2 - 2565 b_1 b_3) x^3 
+ O(x^4) \,. \nonumber
\end{eqnarray}

{\bf Step 3.}
Finally, we insert this expansion into the expression for the rate function 
(\ref{Iresult}), which gives the expansion of the rate function in powers of $x$
\begin{eqnarray}
\mathcal{I}(K,S_0) &=& \frac12 F^{(-)}(f_1) G^{(-)}(f_1) 
= \frac{3}{2b_1^2} x^2 - \frac{3}{10b_1^4} (b_1^2+12b_2) x^3   \\
& & +
\frac{1}{1400 b_1^6} (109 b_1^4 + 936 b_1^2 b_2 + 14976 b_2^2 - 6480 b_1 b_3)
x^4 + O(x^5)\,.\nonumber
\end{eqnarray}
This reproduces the result (\ref{RFexp}).
\end{proof}


\begin{proposition}\label{InfiniteSigma}
For any fixed $K,S_{0},T$, the price of an Asian option in the Black-Scholes
model approaches the following value in the infinite volatility limit
\begin{equation*}
\lim_{\sigma\rightarrow\infty}C_{BS}(K,S_{0},\sigma,T)=e^{-rT}\frac{1}{T}\int_{0}^{T}S_{0}e^{(r-q)t}dt.
\end{equation*}
\end{proposition}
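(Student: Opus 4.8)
The plan is to reduce the statement to a simple almost-sure fact about the time-integral of a geometric Brownian motion over an expanding horizon, together with dominated convergence. Write $A_T:=\frac{1}{T}\int_0^T S_t\,dt$, so that $C_{BS}(K,S_0,\sigma,T)=e^{-rT}\mathbb{E}[(A_T-K)^+]$. Since $S_t\geq 0$, Tonelli's theorem gives $\mathbb{E}[A_T]=\frac{1}{T}\int_0^T\mathbb{E}[S_t]\,dt=\frac{1}{T}\int_0^T S_0 e^{(r-q)t}\,dt$, which is exactly the asserted limit up to the discount factor. Using the elementary identity $(x-K)^+=x-(x\wedge K)$ valid for $x\geq 0$, I would rewrite
\begin{equation*}
\mathbb{E}[(A_T-K)^+]=\frac{1}{T}\int_0^T S_0 e^{(r-q)t}\,dt-\mathbb{E}[A_T\wedge K],
\end{equation*}
so that, for $K,S_0,T$ fixed, the whole problem reduces to showing $\mathbb{E}[A_T\wedge K]\to 0$ as $\sigma\to\infty$.

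Next I would exploit Brownian scaling. Fix a standard Brownian motion $B$. For each fixed $\sigma$ the process $(W_t)_{t\in[0,T]}$ has the same law as $(\tfrac{1}{\sigma}B_{\sigma^2 t})_{t\in[0,T]}$, hence from $S_t=S_0 e^{(r-q)t}e^{\sigma W_t-\frac12\sigma^2 t}$ and the substitution $u=\sigma^2 t$,
\begin{equation*}
A_T\;\stackrel{d}{=}\;V_\sigma:=\frac{S_0}{\sigma^2 T}\int_0^{\sigma^2 T}e^{(r-q)u/\sigma^2}\,e^{B_u-\frac{u}{2}}\,du .
\end{equation*}
Only the one-dimensional marginal laws enter, so it is harmless to realize every $V_\sigma$ on the single probability space carrying $B$. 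Since $|(r-q)u/\sigma^2|\leq |r-q|T$ for $0\leq u\leq\sigma^2 T$, this yields the pathwise bound $V_\sigma\leq \frac{S_0 e^{|r-q|T}}{\sigma^2 T}\,\Xi$ with $\Xi:=\int_0^\infty e^{B_u-u/2}\,du$.

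The key input is that $\Xi<\infty$ almost surely. This follows from the strong law of large numbers for Brownian motion: $B_u/u\to 0$ a.s., so $B_u-\frac{u}{2}\leq-\frac{u}{4}$ for all $u$ beyond some a.s. finite random time, while on any compact interval $u\mapsto e^{B_u-u/2}$ is continuous hence bounded; splitting the integral accordingly gives $\Xi<\infty$ a.s. Consequently $V_\sigma\to 0$ a.s. as $\sigma\to\infty$, hence $V_\sigma\wedge K\to 0$ a.s.; since $0\leq V_\sigma\wedge K\leq K$, dominated convergence gives $\mathbb{E}[A_T\wedge K]=\mathbb{E}[V_\sigma\wedge K]\to 0$. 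Substituting back into the identity above yields $\mathbb{E}[(A_T-K)^+]\to\frac{1}{T}\int_0^T S_0 e^{(r-q)t}\,dt$, which is the claim.

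The scaling computation and the dominated convergence step are routine; the only point requiring a little care is the a.s. finiteness of $\Xi$ — equivalently, the fact that the time change pushes the mass of $A_T$ off to infinity at rate $\sigma^{-2}$ — together with the observation that only marginal (not joint) distributions in $\sigma$ are needed, so coupling all the $V_\sigma$ through a single Brownian motion $B$ is legitimate.
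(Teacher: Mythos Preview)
Your proof is correct. The decomposition $(x-K)^+=x-(x\wedge K)$ plays the same role as the paper's use of put-call parity (indeed the two are equivalent, since $(K-x)^+=K-(x\wedge K)$ for $x\ge 0$), so that part is essentially the same. Where you genuinely diverge is in the core analytic step: the paper truncates the time integral at a small level $\epsilon>0$, uses the crude pathwise bound $e^{\sigma W_t-\frac12\sigma^2 t}\le e^{\sigma\sup_{[0,T]}W-\frac12\sigma^2\epsilon}\to 0$ on $[\epsilon,T]$, applies bounded convergence to the put, and then sends $\epsilon\downarrow 0$. You instead apply Brownian scaling to convert the large-$\sigma$ limit into an expanding-horizon integral $\int_0^{\sigma^2 T}e^{B_u-u/2}\,du$, dominated uniformly by the Dufresne perpetuity $\Xi=\int_0^\infty e^{B_u-u/2}\,du<\infty$ a.s., and then divide by $\sigma^2 T$.

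Your route is a bit more conceptual: it explains \emph{why} the mass of $A_T$ escapes to zero (the time change spreads it over an interval of length $\sigma^2 T$ while the total integral stays a.s.\ finite), and it avoids the two-parameter $\epsilon$-then-$\sigma$ argument. The paper's route is more elementary in that it does not invoke the perpetuity result or the SLLN for Brownian motion---only continuity of paths on a compact interval---at the cost of the extra truncation layer. Both arguments ultimately pass through bounded/dominated convergence on a quantity capped by $K$.
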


\begin{proof}
For any $\epsilon>0$, 
\begin{align*}
&\left|C_{BS}(K,S_{0},\sigma,T)-e^{-rT}\mathbb{E}\left[\left(\frac{1}{T}\int_{\epsilon}^{T}S_{0}e^{(r-q)t+\sigma W_{t}-\frac{1}{2}\sigma^{2}t}dt
-K\right)^{+}\right]\right|
\\
&\leq e^{-rT}\mathbb{E}\left[\frac{1}{T}\int_{0}^{\epsilon}S_{0}e^{(r-q)t+\sigma W_{t}-\frac{1}{2}\sigma^{2}t}dt\right]
=e^{-rT}\frac{1}{T}\int_{0}^{\epsilon}S_{0}e^{(r-q)t}dt,
\end{align*}
where the last term is independent of $\sigma$ and it goes to $0$ as $\epsilon\rightarrow 0$.
On the other hand, for almost every sample path, the Brownian motion $W_{t}$ is continuous
in $t$ and therefore $\sup_{0\leq t\leq T}W_{t}<\infty$. It follows that
\begin{equation*}
\frac{1}{T}\int_{\epsilon}^{T}S_{0}e^{(r-q)t+\sigma W_{t}-\frac{1}{2}\sigma^{2}t}dt
\leq e^{\sigma\sup_{0\leq t\leq T}W_{t}-\frac{1}{2}\sigma^{2}\epsilon}\frac{1}{T}\int_{\epsilon}^{T}S_{0}e^{(r-q)t}dt
\rightarrow 0,
\end{equation*}
a.s. as $\sigma\rightarrow\infty$. By bounded convergence theorem, 
\begin{equation*}
\lim_{\sigma\rightarrow\infty}\mathbb{E}\left[\left(K-\frac{1}{T}\int_{\epsilon}^{T}S_{0}e^{(r-q)t+\sigma W_{t}-\frac{1}{2}\sigma^{2}t}dt
\right)^{+}\right]=K. 
\end{equation*}
>From put-call parity (Remark~\ref{PCParity}),
\begin{equation*}
\lim_{\sigma\rightarrow\infty}\mathbb{E}\left[\left(\frac{1}{T}\int_{\epsilon}^{T}S_{0}e^{(r-q)t+\sigma W_{t}-\frac{1}{2}\sigma^{2}t}dt
-K\right)^{+}\right]=\frac{1}{T}\int_{\epsilon}^{T}S_{0}e^{(r-q)t}dt.
\end{equation*}
Finally, we let $\epsilon\rightarrow 0$ to complete the proof.
\end{proof}


\begin{proof}[Proof of Proposition \ref{prop:18}]
(i) 
For the Black-Scholes model with $r=q=0$, the stock price follows a geometric 
Brownian motion, i.e., $S_{t}=S_{0}e^{\sigma W_{t}-\frac{1}{2}\sigma^{2}t}$. 
The price of the Asian call option in the Black-Scholes model is
\begin{equation*}
C_A^{(BS)}(S_{0},K,\sigma,T)=
\mathbb{E}\left[\left(S_{0}\int_{0}^{1}e^{\sigma W_{tT}-\frac{1}{2}\sigma^{2}tT}dt-K\right)^{+}\right].
\end{equation*}
>From the Brownian scaling property, 
\begin{equation*}
C_A^{(BS)}(S_{0},K,\sigma,T)=
\mathbb{E}\left[\left(S_{0}\int_{0}^{1}
e^{\sigma\sqrt{T}(W_{tT}/\sqrt{T})-\frac{1}{2}(\sigma^{2}T)t}dt-K\right)^{+}\right]
:= \tilde C_A^{(BS)}(S_{0},K,\sigma^{2}T)
\end{equation*}
can be viewed as a function of $\sigma^{2}T$. 
Moreover the result in Carr et al. \cite{CarrVega} implies that 
$\tilde C_A^{(BS)}(S_{0},K,\sigma^{2}T)$ is increasing as a function of 
$\sigma^{2}T$.

Consider first an out-of-the-money Asian call option $K>S_0$ in the
local volatility model (\ref{dynamics}) and denote its price
$C_A(S_0,K,T)$. We have $\tilde C_A^{(BS)}(S_{0},K,0)=0$ and 
$\tilde C_A^{(BS)}(S_{0},K,\sigma^{2}T)>0$ for any $\sigma^{2}T>0$.
We have 
$C_A(S_{0},K,T)=
\tilde C_A^{(BS)}(S_{0},K,\sigma_{\text{implied}}^{2}T)\rightarrow 0$
as $T\rightarrow 0$. Therefore, 
$\lim_{T\rightarrow 0}\sigma_{\text{implied}}^{2}(S_{0},K,T)T=0$.
Hence, by Theorem~\ref{MainThm} and Proposition~\ref{RateFunctionBS}, 
\begin{align*}
&\lim_{T\rightarrow 0}(\sigma_{\text{implied}}^{2}(S_{0},K,T)T)
\log \tilde C_A^{(BS)}(S_{0},K,\sigma_{\text{implied}}^{2}(S_{0},K,T)T)
=-\mathcal{J}_{BS}(K/S_{0})\,,
\\
&\lim_{T\rightarrow 0}T\log C_A(S_{0},K,T)=-\mathcal{I}(K,S_{0}).
\end{align*}
This implies that
\begin{align*}
\lim_{T\rightarrow 0}\sigma_{\text{implied}}^{2}(S_{0},K,T)
&=\lim_{T\rightarrow 0}
\frac{\sigma_{\text{implied}}^{2}(S_{0},K,T)T\log \tilde C_A^{(BS)}(S_{0},K,\sigma_{\text{implied}}^{2}(S_{0},K,T)T)}
{T\log C_A(S_{0},K,T)}
\\
&=\frac{\mathcal{J}_{BS}(K/S_{0})}{\mathcal{I}(K,S_{0})}.
\end{align*}
The same result is obtained for an out-of-the-money Asian put option $K<S_0$.
The argument follows through in a completely analogous way.

(ii) Next, let us consider the at-the-money Asian call option. 
>From Theorem \ref{ATMThm}, we have
$\lim_{T\rightarrow 0}\frac{1}{\sqrt{T}}C_A(K,S_{0},T)=
\frac{1}{\sqrt{6\pi}}\sigma(S_{0})S_{0}$.
Notice that $\tilde C_A^{(BS)}(K,S_{0},\sigma_{\text{implied}}^{2}(K,S_{0},T)T)
=C_A(K,S_{0},T)$
and $C_A^{(BS)}(K,S_{0},\sigma,T)=\tilde C_A^{(BS)}(K,S_{0},\sigma^{2}T)$ 
can be viewed as a function of $\sigma^{2}T$.
Thus, $\lim_{T\rightarrow 0}
\frac{\tilde C_A^{(BS)} (K,S_{0},\sigma_{\text{implied}}^{2}(K,S_{0},T)T)}
{\sqrt{6\pi}\sqrt{T}\sigma_{\text{implied}}(K,S_{0},T)}=S_{0}$. 
Therefore, $\lim_{T\rightarrow 0}\sigma_{\text{implied}}(K,S_{0},T)=
\sigma(S_{0})$. 
The same result is obtained for an at-the-money Asian put option.
\end{proof}


\begin{proof}[Proof of Proposition \ref{propSigBS}]
(i) Notice that when $r=q=0$, $A(T)=S_{0}$ and thanks to the Brownian scaling 
property, the price of an European option in the Black-Scholes model
$\tilde C_E^{(BS)}(K,S_0,\sigma^2 T) :=
\mathbb{E}\left[\left(S_{0}e^{\sigma W_{T}-\frac{1}{2}\sigma^{2}T}-K\right)^{+}\right]$
can be viewed as a function of $\sigma^{2}T$. This is a strictly increasing 
function of this argument. 
We have by definition of the equivalent log-normal volatility
$C_A(S_{0},K,T)=\tilde C_E^{(BS)}(S_{0},K,\Sigma_{\rm LN}^{2}T)$ where 
$C_A(S_0,K,T)$ denotes the price of an Asian call option in the local volatility
model (\ref{dynamics}).

We proceed analogously to the proof of Proposition \ref{prop:18}.
Consider first an out-of-the-money Asian call option $K>S_0$, for which we have 
$C_A(S_{0},K,T)=\tilde C_E^{(BS)}(S_{0},K,\Sigma_{\rm LN}^{2}T)\rightarrow 0$
as $T\rightarrow 0$. 
Therefore, $\lim_{T\rightarrow 0}\Sigma_{\rm LN}^{2}(S_{0},K,T)T=0$.
Hence,
\begin{align*}
&\lim_{T\rightarrow 0}(\Sigma_{\rm LN}^{2}(S_{0},K,T)T)
\log \tilde C_E^{(BS)}(S_{0},K,\Sigma_{\rm LN}^{2}(S_{0},K,T)T)
=-\frac12 \log^2(K/S_0)\,,
\\
&\lim_{T\rightarrow 0}T\log C_A(S_{0},K,T)=-\mathcal{I}(K,S_{0}).
\end{align*}
This implies that
\begin{align*}
\lim_{T\rightarrow 0}\Sigma_{\rm LN}^{2}(S_{0},K,T)
&=\lim_{T\rightarrow 0}
\frac{\Sigma_{\rm LN}^{2}(S_{0},K,T)T
\log \tilde C_E^{(BS)}(S_{0},K,\Sigma_{\rm LN}^{2}(S_{0},K,T)T)}
{T\log C_A(S_{0},K,T)}
\\
&=\frac{\log^2(K/S_{0})}{2\mathcal{I}(K,S_{0})}.
\end{align*}
The same result is obtained for an out-of-the-money Asian put option $K<S_0$.

The corresponding result for the equivalent normal volatility of the 
Asian option follows from the short maturity relation between the Black-Scholes
and Bachelier implied volatilities, see for example Corollary 2 in 
\cite{Grunspan}. 

(ii) We use again the fact that the price of an European option in the
Black-Scholes model depends only on $\sigma\sqrt{T}$ and denote it as before
$\tilde C_E^{(BS)}(K,S_{0},\sigma^{2}T)$.
For at-the-money case $S_{0}=K$, we have
$\lim_{T\rightarrow 0}
\frac{\tilde C_E^{(BS)}(K,S_{0},\sigma^{2}T)}
{\sigma\sqrt{2\pi}\sqrt{T}}=S_{0}$.
Therefore, we have
\begin{equation}
\lim_{T\rightarrow 0}\Sigma_{\rm LN}(K,S_{0},T)
=\lim_{T\rightarrow 0}\frac{\frac{C_A(K,S_{0},T)}{\sqrt{T}}}
 {\frac{\tilde C_E^{(BS)}(K,S_{0},\Sigma_{\rm LN}^{2}(K,S_{0},T)T)}
 {\Sigma_{\rm LN}(K,S_{0},T)\sqrt{T}}}
=\frac{\frac{1}{\sqrt{6\pi}}\sigma(S_{0})S_{0}}{\frac{1}{\sqrt{2\pi}}S_{0}}
=\frac{1}{\sqrt{3}}\sigma(S_{0}).
\end{equation}
For the Bachelier model with $K=S_{0}$, $\mathbb{E}[(S_{0}+\sigma W_{T}-K)^{+}]$
only depends on $\sigma\sqrt{T}$, 
and $\lim_{T\rightarrow 0}\frac{1}{\sigma\sqrt{T}}
\mathbb{E}[(S_{0}+\sigma W_{T}-K)^{+}]=\frac{1}{\sqrt{2\pi}}$. 
Therefore,
\begin{equation}
\lim_{T\rightarrow 0}\Sigma_{\rm N}(K,S_{0},T)
=\lim_{T\rightarrow 0}\frac{\frac{C_A(K,S_{0},T)}{\sqrt{T}}}{\frac{\mathbb{E}[(S_{0}+\Sigma_{\rm N}(K,S_{0},T)W_{T}-K)^{+}]}
{\Sigma_{\rm N}(K,S_{0},T)\sqrt{T}}}
=\frac{\frac{1}{\sqrt{6\pi}}\sigma(S_{0})S_{0}}{\frac{1}{\sqrt{2\pi}}}
=\frac{1}{\sqrt{3}}\sigma(S_{0})S_{0}.
\end{equation}

We note that these results can also be extracted from Theorem 5.1 of \cite{RR}.
\end{proof}


\subsection{Proofs of the Results in Section \ref{sec:floating}}

\begin{proof}[Proof of Proposition \ref{FloatingThm}]
(i) By an argument similar to that used in the proof of Theorem \ref{MainThm}, 
\begin{equation}
\lim_{T\rightarrow 0}T\log C_f(T)=\lim_{T\rightarrow 0}T\log\mathbb{P}\left(\kappa S_{T}\geq\int_{0}^{1}S_{tT}dt\right).
\end{equation}
Then, the result follows from the sample path large deviation of $\mathbb{P}(S_{t\cdot}\in\cdot)$ on $L_{\infty}[0,1]$ 
and the contraction principle. The asymptotic results for $P(T)$ follows from put-call parity.

(ii) Similar to (i).

(iii) Following the same arguments as in the proof of Theorem \ref{ATMThm}, we can show that
for $\kappa=1$, we have

\textbf{Claim 1}. As $T\rightarrow 0$,
\begin{equation*}
\left|\mathbb{E}\left[\left(e^{(r-q)T}X_{T}-\frac{1}{T}\int_{0}^{T}e^{(r-q)t}X_{t}dt\right)^{+}\right]
-\mathbb{E}\left[\left(X_{T}-\frac{1}{T}\int_{0}^{T}X_{t}dt\right)^{+}\right]\right|=O(T).
\end{equation*}

\textbf{Claim 2}. As $T\rightarrow 0$,
\begin{equation*}
\mathbb{E}\left[\max_{0\leq t\leq T}|X_{t}-\hat{X}_{t}|\right]=O(T).
\end{equation*}

\textbf{Claim 3}. As $T\rightarrow 0$, 
\begin{equation*}
\left|\mathbb{E}\left[\left(X_{T}-\frac{1}{T}\int_{0}^{T}X_{t}dt\right)^{+}\right]
-\mathbb{E}\left[\left(\hat{X}_{T}-\frac{1}{T}\int_{0}^{T}\hat{X}_{t}dt\right)^{+}\right]\right|=O(T).
\end{equation*}

And \textbf{Claim 1}, \textbf{Claim 2} and \textbf{Claim 3} imply that for $\kappa=1$,
\begin{equation}
\left|C_f(T)-\mathbb{E}\left[\left(\kappa\hat{X}_{T}-\frac{1}{T}\int_{0}^{T}\hat{X}_{t}dt\right)^{+}\right]\right|
=O(T),
\end{equation}
as $T\rightarrow 0$, where we recall that 
$\hat{X}_{t}=S_{0}+\sigma(S_{0})S_{0}W_{t}$. Since $\kappa=1$, 
$\kappa\hat{X}_{T}-\frac{1}{T}\int_{0}^{T}\hat{X}_{t}dt$ is a Gaussian random variable with mean zero and variance
\begin{align}
&\mathbb{E}\left[\left(\hat{X}_{T}-\frac{1}{T}\int_{0}^{T}\hat{X}_{t}dt\right)^{2}\right]
\\
&=\sigma(S_{0})^{2}S_{0}^{2}\mathbb{E}\left[\left(W_{T}-\frac{1}{T}\int_{0}^{T}W_{t}dt\right)^{2}\right]
\nonumber
\\
&=\sigma(S_{0})^{2}S_{0}^{2}\left[\mathbb{E}[W_{T}^{2}]+\frac{1}{T^{2}}\mathbb{E}\left[\left(\int_{0}^{T}W_{t}dt\right)^{2}\right]
-\frac{2}{T}\int_{0}^{T}\mathbb{E}[W_{T}W_{t}]dt\right]
\nonumber
\\
&=\sigma(S_{0})^{2}S_{0}^{2}\left[T+\frac{T}{3}-\frac{2}{T}\frac{T^{2}}{2}\right]=\sigma(S_{0})^{2}S_{0}^{2}\frac{T}{3}.
\nonumber
\end{align}
Hence, we proved the desired result for $C_f(T)$. The result for $P_f(T)$ 
is similar.
\end{proof}

\begin{proof}[Proof of Proposition \ref{VarFloating}]
Define a new optimizer function $g(t) = f(t) + \log S_0$, satisfying the
boundary condition $f(0)=0$ and the constraint $\int_0^1 dt e^{f(t)} = \kappa e^{f(1)}$. 
Proceeding in a similar fashion as in the proof of Proposition~\ref{VarProp},
the constraint is taken into account by introducing a Lagrange multiplier $\lambda$
and considering the variational problem associated with the auxiliary functional
\begin{equation}
\Lambda[f] = \frac12 \int_0^1 
\left( \frac{f'(t)}{\sigma(S_0 e^{f(t)})} \right)^2 dt+ \lambda
\left( \int_0^1 e^{f(t)} dt - \kappa e^{f(1)} \right)\,.
\end{equation}
By Lemma~\ref{Lemma1} the solutions of this variational problem satisfy the 
Euler-Lagrange equation (\ref{ELf}) and the transversality condition 
(\ref{BCf}) at $t=1$. 

To prove (\ref{Ifres}) we note that from Lemma~\ref{Lemma2} we have
\begin{equation}\label{Econserv}
\frac12 \left( \frac{f'(t)}{\sigma(S_0 e^{f(t)})}\right)^2 - \lambda e^{f(t)} = 
\frac12 \lambda^2 \kappa^2 e^{2f(1)} \sigma^2(S_0 e^{f(1)}) - \lambda e^{f(1)} \,.
\end{equation}
We substituted on the right hand side the transversality condition (\ref{BCf}).
Integrating this relation over $t:(0,1)$ gives the result (\ref{Ifres}).

Finally, we can eliminate $\lambda$ with the help of the relation (\ref{lambdarel}). This
is obtained by comparing two alternative expressions for $f'(0)$. First, by integrating the
Euler-Lagrange equation (\ref{ELf}) over $t:(0,1)$ gives
\begin{equation}
\frac{f'(0)}{\sigma(S_0)} = \lambda \left\{ \kappa e^{f(1)} \sigma(S_0 e^{f(1)}) - I_s[f] \right\}
\end{equation}
An alternative relation is obtained by taking $t=0$ in (\ref{Econserv}). Eliminating $f'(0)$ 
among these two equations gives (\ref{lambdarel}).
\end{proof}

\begin{proof}[Proof of Proposition \ref{BSfloating}]
The rate function for short-maturity asymptotics of floating strike Asian options 
in the Black-Scholes model is given by the variational problem
\begin{equation}
\mathcal{I}_f(\kappa) = \inf_f \frac{1}{2\sigma^2} \int_0^1 [f'(t)]^2 dt,
\end{equation}
where $f(0) = 0$, $f\in\mathcal{AC}[0,1]$ and subject to the constraint
\begin{equation}
\int_0^1 e^{f(t)} dt= \kappa e^{f(1)}\,.
\end{equation}

This can be solved by introducing a Lagrange multiplier $\lambda$ and
considering the auxiliary variational problem
\begin{equation}\label{Lambdadef1}
\Lambda[f] = \frac{1}{2\sigma^2} \int_0^1 [f'(t)]^2 dt+ \lambda \left(
\int_0^1 e^{f(t)} dt- \kappa e^{f(1)} \right)\,,
\end{equation}
over all functions satisfying $f(0)=0$. 

The solution of the variational problem (\ref{Lambdadef1}) satisfies the 
Euler-Lagrange equation 
\begin{equation}
f''(t) = \lambda \sigma^2 e^{f(t)} ,
\end{equation}
which must be solved with the boundary conditions
\begin{align*}
&(\mbox{BC1}):\quad f(0) = 0 , \\
&(\mbox{BC2}):\quad f'(1) = \lambda \sigma^2 \kappa e^{f(1)} , \\
&(\mbox{BC3}):\quad f'(0) = 0 \,.
\end{align*}
The boundary condition (BC2) is a transversality condition. The condition (BC3) is 
new, and follows from the relation of the variational problem for $\Lambda[f]$
to an equivalent variational problem which is obtained by defining a new
function $h$ as
\begin{equation}
f(t) = h(1-t) + f(1)\,.
\end{equation}

Expressed in terms of $h$, the rate function $\mathcal{I}_f(\kappa)$ is given by
\begin{equation}\label{Ifh}
\mathcal{I}_f(\kappa) = \inf_h \frac{1}{2\sigma^2} \int_0^1 [h'(t)]^2 dt
\end{equation}
where $h(0) = 0$, $h\in\mathcal{AC}[0,1]$ and it is subject to the constraint
\begin{equation}
\int_0^1 e^{h(t)} dt= \kappa \,.
\end{equation}
This is identical to the variational problem for the rate function 
$\mathcal{I}(\kappa S_0,S_0)$ for fixed strike Asian options (\ref{Idef}), in the limit 
of constant volatility $\sigma(S)=\sigma$. The solution of this variational 
problem is presented in Proposition~\ref{RateFunctionBS}, and is expressed 
in terms of the Black-Scholes rate function $\mathcal{J}_{\rm BS}(\kappa)$ as 
shown in (\ref{IfBS}).

For completeness, we give a complete solution of the variational problem (\ref{Ifh}), 
and along the way prove the additional boundary condition (BC3).
Introducing again a Lagrange multiplier $\eta$, the variational problem (\ref{Ifh})
can be transformed into an unconstrained optimization of the functional
\begin{equation}
\Xi[h]= \frac{1}{2\sigma^2} \int_0^1 [h'(t)]^2 dt+ \eta \left(
\int_0^1 e^{h(t)} dt- \kappa \right)\,,
\end{equation}
over all functions satisfying $h(0)=0$. 
The solution of this variational problem satisfies the Euler-Lagrange equation
\begin{equation}
h''(t) = \eta \sigma^2 e^{h(t)} ,
\end{equation}
with the boundary conditions
\begin{equation}
h(0) = 0 \,, \qquad
h'(1) = 0  \,.
\end{equation}
The second boundary condition is a transversality condition.

However, $f$ and $h$ are related, so the Euler-Lagrange equations and 
boundary conditions for $h$ must imply also equivalent conditions on $f$.
Comparing the Euler-Lagrange equations for $f$ and $h$ gives
\begin{equation}
\eta = \lambda e^{f(1)} = \lambda e^{-h(1)}\,.
\end{equation}
Using $f'(t) = - h'(1-t)$, the transversality condition $h'(1) = 0$
gives 
\begin{equation}
f'(0) = - h'(1) = 0\,.
\end{equation}
This proves the boundary condition (BC3) on $f'(0)$ quoted above.

Using the additional boundary condition (BC3), it is possible to give a simple
expression for the Lagrange multiplier $\lambda$
\begin{equation}\label{lambdasol}
\lambda = \frac{2}{\sigma^2 \kappa^2} \frac{e^{f(1)} - 1}{e^{2f(1)}}.
\end{equation}

This follows from the conservation of the quantity
\begin{equation}
\frac{1}{2\sigma^2} [f'(t)]^2 - \lambda \sigma^2 e^{f(t)}.
\end{equation}
This gives
\begin{equation}
- \lambda \sigma^2 = \frac12 \lambda^2 \sigma^2 \kappa^2 e^{2f(1)}- 
\lambda \sigma^2 e^{f(1)},
\end{equation}
where we used $f'(0)$ and the boundary condition (BC2) at $t=1$. This 
concludes the proof of the relation (\ref{lambdasol}). 

The solution for $h(1)$ can be found explicitly as shown in the proof of 
Proposition~\ref{RateFunctionBS}. This is given in equation (\ref{fsol}), from 
which we find
\begin{equation}
e^{h(1)}=
\begin{cases}
\cosh^2(\frac12 \beta) & \, \kappa > 1, \\
\cos^2 \lambda & \, \kappa < 1, \\
\end{cases}
\end{equation}
with $\beta, \lambda$ the solutions of the equations (\ref{betaeq}) and (\ref{lambdaeq}),
respectively (with the substitutions $K/S_0 \mapsto \kappa$).
\end{proof}

\section*{Acknowledgements}

The authors are grateful to the Editor, an Associate Editor and two anonymous referees
for their helpful suggestions that greatly improved the quality of the paper.
The authors would like to thank Tai-Ho Wang for helpful discussions, and for
bringing to our attention work in progress on a related topic \cite{THW}.
Lingjiong Zhu is partially supported by NSF Grant DMS-1613164.

\end{document}